
\documentclass[11pt]{article}
\usepackage{eurosym}
\usepackage[english]{babel}
\usepackage[margin=1in]{geometry}
\usepackage{setspace}
\usepackage[authoryear,longnamesfirst]{natbib}
\usepackage{subcaption}
\usepackage{amsmath}
\usepackage{amsfonts}
\usepackage{amssymb}
\usepackage{mathrsfs}
\usepackage{graphicx}
\usepackage{epstopdf}
\usepackage{bm}
\usepackage{algorithm}
\usepackage{algpseudocode}
\usepackage{comment}
\usepackage{ifthen}
\usepackage{accents}
\usepackage{url}
\usepackage{enumitem}
\usepackage{multirow}
\usepackage{multicol}
\usepackage{array}
\usepackage{rotating}
\usepackage{longtable}
\usepackage{float}
\usepackage{booktabs}
\usepackage{lscape}
\usepackage[expansion=false]{microtype}
\usepackage{hyperref}

\setcounter{MaxMatrixCols}{10}

\onehalfspacing 
\linespread{1.3}
\captionsetup{compatibility=false}
\allowdisplaybreaks
\interfootnotelinepenalty=10000
\DeclareMathOperator*{\argmin}{argmin}

\DeclareMathOperator*{\plim}{plim}
\hypersetup{
colorlinks = true,
linkcolor = blue,
urlcolor = blue,
citecolor = blue,
breaklinks=true
}
\newtheorem{theorem}{Theorem}
\newtheorem{remark}{Remark}

\newtheorem{lemma}{Lemma}

\newtheorem{example}{Example}
\newtheorem{definition}{Definition}
\newtheorem{proposition}{Proposition}
\newtheorem{corollary}[theorem]{Corollary}
\newenvironment{proof}[1][Proof]{\noindent\textbf{#1.} }{\ \rule{0.5em}{0.5em}}
\newboolean{showComments} 
\setboolean{showComments}{true} 
\ifthenelse{\boolean{showComments}}{
\includecomment{comment}
}{
\excludecomment{comment}
}

\newenvironment{namedassumption}[1]{
\begingroup 

\innercustomass}{
\endinnercustomass
\endgroup }

\newcommand{\indep}{\perp \!\!\! \perp}

\typeout{TCILATEX Macros for Scientific Word 5.0 <13 Feb 2003>.}
\typeout{NOTICE:  This macro file is NOT proprietary and may be 
freely copied and distributed.}
\makeatletter

\ifx\pdfoutput\relax\let\pdfoutput=\undefined\fi
\newcount\msipdfoutput
\ifx\pdfoutput\undefined
\else
 \ifcase\pdfoutput
 \else 
    \msipdfoutput=1
    \ifx\paperwidth\undefined
    \else
      \ifdim\paperheight=0pt\relax
      \else
        \pdfpageheight\paperheight
      \fi
      \ifdim\paperwidth=0pt\relax
      \else
        \pdfpagewidth\paperwidth
      \fi
    \fi
  \fi  
\fi

%

%
\newcount\@hour\newcount\@minute\chardef\@x10\chardef\@xv60
\def\tcitime{
\def\@time{%
  \@minute\time\@hour\@minute\divide\@hour\@xv
  \ifnum\@hour<\@x 0\fi\the\@hour:%
  \multiply\@hour\@xv\advance\@minute-\@hour
  \ifnum\@minute<\@x 0\fi\the\@minute
  }}%


\def\x@hyperref#1#2#3{%
   \catcode`\~ = 12
   \catcode`\$ = 12
   \catcode`\_ = 12
   \catcode`\# = 12
   \catcode`\& = 12
   \catcode`\% = 12
   \y@hyperref{#1}{#2}{#3}%
}

\def\y@hyperref#1#2#3#4{%
   #2\ref{#4}#3
   \catcode`\~ = 13
   \catcode`\$ = 3
   \catcode`\_ = 8
   \catcode`\# = 6
   \catcode`\& = 4
   \catcode`\% = 14
}

\@ifundefined{hyperref}{\let\hyperref\x@hyperref}{}
\@ifundefined{msihyperref}{\let\msihyperref\x@hyperref}{}

\@ifundefined{qExtProgCall}{\def\qExtProgCall#1#2#3#4#5#6{\relax}}{}
%
%
%
%
\def\QCTOpt[#1]#2{%
  \def\QCTOptB{#1}
  \def\QCTOptA{#2}
}
\def\QCTNOpt#1{%
  \def\QCTOptA{#1}
  \let\QCTOptB\empty
}
\def\Qct{%
  \@ifnextchar[{%
    \QCTOpt}{\QCTNOpt}
}
\def\QCBOpt[#1]#2{%
  \def\QCBOptB{#1}%
  \def\QCBOptA{#2}%
}
\def\QCBNOpt#1{%
  \def\QCBOptA{#1}%
  \let\QCBOptB\empty
}
\def\Qcb{%
  \@ifnextchar[{%
    \QCBOpt}{\QCBNOpt}%
}
\def\PrepCapArgs{%
  \ifx\QCBOptA\empty
    \ifx\QCTOptA\empty
      {}%
    \else
      \ifx\QCTOptB\empty
        {\QCTOptA}%
      \else
        [\QCTOptB]{\QCTOptA}%
      \fi
    \fi
  \else
    \ifx\QCBOptA\empty
      {}%
    \else
      \ifx\QCBOptB\empty
        {\QCBOptA}%
      \else
        [\QCBOptB]{\QCBOptA}%
      \fi
    \fi
  \fi
}
\newcount\GRAPHICSTYPE
\GRAPHICSTYPE=\z@
\def\GRAPHICSPS#1{%
 \ifcase\GRAPHICSTYPE
   \special{ps: #1}%
 \or
   \special{language "PS", include "#1"}%
 \fi
}%
%
%
%

\def\graffile#1#2#3#4{%
    \bgroup
	   \@inlabelfalse
       \leavevmode
       \@ifundefined{bbl@deactivate}{\def~{\string~}}{\activesoff}%
        \raise -#4 \BOXTHEFRAME{%
           \hbox to #2{\raise #3\hbox to #2{\null #1\hfil}}}%
    \egroup
}%
%
\def\draftbox#1#2#3#4{%
 \leavevmode\raise -#4 \hbox{%
  \frame{\rlap{\protect\tiny #1}\hbox to #2%
   {\vrule height#3 width\z@ depth\z@\hfil}%
  }%
 }%
}%
\newcount\@msidraft
\@msidraft=\z@
\let\nographics=\@msidraft
\newif\ifwasdraft
\wasdraftfalse

\def\GRAPHIC#1#2#3#4#5{%
   \ifnum\@msidraft=\@ne\draftbox{#2}{#3}{#4}{#5}%
   \else\graffile{#1}{#3}{#4}{#5}%
   \fi
}
\def\addtoLaTeXparams#1{%
    \edef\LaTeXparams{\LaTeXparams #1}}%
%

\newif\ifBoxFrame \BoxFramefalse
\newif\ifOverFrame \OverFramefalse
\newif\ifUnderFrame \UnderFramefalse

\def\BOXTHEFRAME#1{%
   \hbox{%
      \ifBoxFrame
         \frame{#1}%
      \else
         {#1}%
      \fi
   }%
}

\def\doFRAMEparams#1{\BoxFramefalse\OverFramefalse\UnderFramefalse\readFRAMEparams#1\end}%
\def\readFRAMEparams#1{%
 \ifx#1\end%
  \let\next=\relax
  \else
  \ifx#1i\dispkind=\z@\fi
  \ifx#1d\dispkind=\@ne\fi
  \ifx#1f\dispkind=\tw@\fi
  \ifx#1t\addtoLaTeXparams{t}\fi
  \ifx#1b\addtoLaTeXparams{b}\fi
  \ifx#1p\addtoLaTeXparams{p}\fi
  \ifx#1h\addtoLaTeXparams{h}\fi
  \ifx#1X\BoxFrametrue\fi
  \ifx#1O\OverFrametrue\fi
  \ifx#1U\UnderFrametrue\fi
  \ifx#1w
    \ifnum\@msidraft=1\wasdrafttrue\else\wasdraftfalse\fi
    \@msidraft=\@ne
  \fi
  \let\next=\readFRAMEparams
  \fi
 \next
 }%
%

\def\IFRAME#1#2#3#4#5#6{%
      \bgroup
      \let\QCTOptA\empty
      \let\QCTOptB\empty
      \let\QCBOptA\empty
      \let\QCBOptB\empty
      #6%
      \parindent=0pt
      \leftskip=0pt
      \rightskip=0pt
      \setbox0=\hbox{\QCBOptA}%
      \@tempdima=#1\relax
      \ifOverFrame
          \typeout{This is not implemented yet}%
          \show\HELP
      \else
         \ifdim\wd0>\@tempdima
            \advance\@tempdima by \@tempdima
            \ifdim\wd0 >\@tempdima
               \setbox1 =\vbox{%
                  \unskip\hbox to \@tempdima{\hfill\GRAPHIC{#5}{#4}{#1}{#2}{#3}\hfill}%
                  \unskip\hbox to \@tempdima{\parbox[b]{\@tempdima}{\QCBOptA}}%
               }%
               \wd1=\@tempdima
            \else
               \textwidth=\wd0
               \setbox1 =\vbox{%
                 \noindent\hbox to \wd0{\hfill\GRAPHIC{#5}{#4}{#1}{#2}{#3}\hfill}\\%
                 \noindent\hbox{\QCBOptA}%
               }%
               \wd1=\wd0
            \fi
         \else
            \ifdim\wd0>0pt
              \hsize=\@tempdima
              \setbox1=\vbox{%
                \unskip\GRAPHIC{#5}{#4}{#1}{#2}{0pt}%
                \break
                \unskip\hbox to \@tempdima{\hfill \QCBOptA\hfill}%
              }%
              \wd1=\@tempdima
           \else
              \hsize=\@tempdima
              \setbox1=\vbox{%
                \unskip\GRAPHIC{#5}{#4}{#1}{#2}{0pt}%
              }%
              \wd1=\@tempdima
           \fi
         \fi
         \@tempdimb=\ht1
         \advance\@tempdimb by -#2
         \advance\@tempdimb by #3
         \leavevmode
         \raise -\@tempdimb \hbox{\box1}%
      \fi
      \egroup%
}%
%
\def\DFRAME#1#2#3#4#5{%
  \vspace\topsep
  \hfil\break
  \bgroup
     \leftskip\@flushglue
	 \rightskip\@flushglue
	 \parindent\z@
	 \parfillskip\z@skip
     \let\QCTOptA\empty
     \let\QCTOptB\empty
     \let\QCBOptA\empty
     \let\QCBOptB\empty
	 \vbox\bgroup
        \ifOverFrame 
           #5\QCTOptA\par
        \fi
        \GRAPHIC{#4}{#3}{#1}{#2}{\z@}%
        \ifUnderFrame 
           \break#5\QCBOptA
        \fi
	 \egroup
  \egroup
  \vspace\topsep
  \break
}%
%
\def\FFRAME#1#2#3#4#5#6#7{%
  \@ifundefined{floatstyle}
    {
     \begin{figure}[#1]%
    }
    {
	 \ifx#1h
      \begin{figure}[H]%
	 \else
      \begin{figure}[#1]%
	 \fi
	}
  \let\QCTOptA\empty
  \let\QCTOptB\empty
  \let\QCBOptA\empty
  \let\QCBOptB\empty
  \ifOverFrame
    #4
    \ifx\QCTOptA\empty
    \else
      \ifx\QCTOptB\empty
        \caption{\QCTOptA}%
      \else
        \caption[\QCTOptB]{\QCTOptA}%
      \fi
    \fi
    \ifUnderFrame\else
      \label{#5}%
    \fi
  \else
    \UnderFrametrue%
  \fi
  \begin{center}\GRAPHIC{#7}{#6}{#2}{#3}{\z@}\end{center}%
  \ifUnderFrame
    #4
    \ifx\QCBOptA\empty
      \caption{}%
    \else
      \ifx\QCBOptB\empty
        \caption{\QCBOptA}%
      \else
        \caption[\QCBOptB]{\QCBOptA}%
      \fi
    \fi
    \label{#5}%
  \fi
  \end{figure}%
 }%
%
%
%
%
%
\newcount\dispkind%

\def\makeactives{
  \catcode`\"=\active
  \catcode`\;=\active
  \catcode`\:=\active
  \catcode`\'=\active
  \catcode`\~=\active
}
\bgroup
   \makeactives
   \gdef\activesoff{%
      \def"{\string"}%
      \def;{\string;}%
      \def:{\string:}%
      \def'{\string'}%
      \def~{\string~}%
    }
\egroup

\def\FRAME#1#2#3#4#5#6#7#8{%
 \bgroup
 \ifnum\@msidraft=\@ne
   \wasdrafttrue
 \else
   \wasdraftfalse%
 \fi
 \def\LaTeXparams{}%
 \dispkind=\z@
 \def\LaTeXparams{}%
 \doFRAMEparams{#1}%
 \ifnum\dispkind=\z@\IFRAME{#2}{#3}{#4}{#7}{#8}{#5}\else
  \ifnum\dispkind=\@ne\DFRAME{#2}{#3}{#7}{#8}{#5}\else
   \ifnum\dispkind=\tw@
    \edef\@tempa{\noexpand\FFRAME{\LaTeXparams}}%
    \@tempa{#2}{#3}{#5}{#6}{#7}{#8}%
    \fi
   \fi
  \fi
  \ifwasdraft\@msidraft=1\else\@msidraft=0\fi{}%
  \egroup
 }%
%

\def\TEXUX#1{"texux"}

%
%
%
%
%
%
%
%
\def\func#1{\mathop{\rm #1}\nolimits}%
%

%
\long\def\QQQ#1#2{%
     \long\expandafter\def\csname#1\endcsname{#2}}%
\@ifundefined{QTP}{\def\QTP#1{}}{}
\@ifundefined{QEXCLUDE}{\def\QEXCLUDE#1{}}{}
\@ifundefined{Qlb}{}{}
\@ifundefined{Qlt}{}{}
\long\def\QQA#1#2{}%
\def\QTR#1#2{{\csname#1\endcsname {#2}}}%
\def\EXPAND#1[#2]#3{}%
\def\NOEXPAND#1[#2]#3{}%
\def\LaTeXparent#1{}%
\def\ChildStyles#1{}%
\def\ChildDefaults#1{}%
\def\QTagDef#1#2#3{}%

\@ifundefined{correctchoice}{}{}
\@ifundefined{HTML}{\def\HTML#1{\relax}}{}
\@ifundefined{TCIIcon}{\def\TCIIcon#1#2#3#4{\relax}}{}
\if@compatibility
  \typeout{Not defining UNICODE  U or CustomNote commands for LaTeX 2.09.}
\else
  \providecommand{\UNICODE}[2][]{\protect\rule{.1in}{.1in}}
  \providecommand{\U}[1]{\protect\rule{.1in}{.1in}}
  
\fi

\@ifundefined{lambdabar}{
      
   }{}

%
\@ifundefined{StyleEditBeginDoc}{}{}
%
\def\QQfnmark#1{\footnotemark}

%
%
\@ifundefined{TCIMAKEINDEX}{}{\makeindex}%
%
\@ifundefined{abstract}{%
 \def\abstract{%
  \if@twocolumn
   \section*{Abstract (Not appropriate in this style!)}%
   \else \small 
   \begin{center}{\bf Abstract\vspace{-.5em}\vspace{\z@}}\end{center}%
   \quotation 
   \fi
  }%
 }{%
 }%
\@ifundefined{endabstract}{\def\endabstract
  {\if@twocolumn\else\endquotation\fi}}{}%
\@ifundefined{maketitle}{\def\maketitle#1{}}{}%
\@ifundefined{affiliation}{\def\affiliation#1{}}{}%
\@ifundefined{proof}{}{}%
\@ifundefined{endproof}{}{}%
\@ifundefined{newfield}{\def\newfield#1#2{}}{}%
\@ifundefined{chapter}{\def\chapter#1{\par(Chapter head:)#1\par }%
 \newcount\c@chapter}{}%
\@ifundefined{part}{\def\part#1{\par(Part head:)#1\par }}{}%
\@ifundefined{section}{\def\section#1{\par(Section head:)#1\par }}{}%
\@ifundefined{subsection}{\def\subsection#1%
 {\par(Subsection head:)#1\par }}{}%
\@ifundefined{subsubsection}{\def\subsubsection#1%
 {\par(Subsubsection head:)#1\par }}{}%
\@ifundefined{paragraph}{\def\paragraph#1%
 {\par(Subsubsubsection head:)#1\par }}{}%
\@ifundefined{subparagraph}{\def\subparagraph#1%
 {\par(Subsubsubsubsection head:)#1\par }}{}%
\@ifundefined{therefore}{}{}%
\@ifundefined{backepsilon}{}{}%
\@ifundefined{yen}{}{}%
\@ifundefined{registered}{%
   \def\registered{\relax\ifmmode{}\r@gistered
                    \else$\m@th\r@gistered$\fi}%
 \def\r@gistered{^{\ooalign
  {\hfil\raise.07ex\hbox{$\scriptstyle\rm\text{R}$}\hfil\crcr
  \mathhexbox20D}}}}{}%
\@ifundefined{Eth}{}{}%
\@ifundefined{eth}{}{}%
\@ifundefined{Thorn}{}{}%
\@ifundefined{thorn}{}{}%
%
\@ifundefined{degree}{}{}%
%
\newdimen\theight
\@ifundefined{Column}{\def\Column{%
 \vadjust{\setbox\z@=\hbox{\scriptsize\quad\quad tcol}%
  \theight=\ht\z@\advance\theight by \dp\z@\advance\theight by \lineskip
  \kern -\theight \vbox to \theight{%
   \rightline{\rlap{\box\z@}}%
   \vss
   }%
  }%
 }}{}%
\@ifundefined{qed}{\def\qed{%
 \ifhmode\unskip\nobreak\fi\ifmmode\ifinner\else\hskip5\p@\fi\fi
 \hbox{\hskip5\p@\vrule width4\p@ height6\p@ depth1.5\p@\hskip\p@}%
 }}{}%
\@ifundefined{cents}{}{}%
\@ifundefined{tciLaplace}{}{}%
\@ifundefined{tciFourier}{}{}%
\@ifundefined{textcurrency}{}{}%
\@ifundefined{texteuro}{}{}%
\@ifundefined{euro}{}{}%
\@ifundefined{textfranc}{}{}%
\@ifundefined{textlira}{}{}%
\@ifundefined{textpeseta}{}{}%
\@ifundefined{miss}{\def\miss{\hbox{\vrule height2\p@ width 2\p@ depth\z@}}}{}%
\@ifundefined{vvert}{}{}
\@ifundefined{tcol}{\def\tcol#1{{\baselineskip=6\p@ \vcenter{#1}} \Column}}{}%
\@ifundefined{dB}{}{}
\@ifundefined{mB}{}{}
\@ifundefined{nB}{}{}
\@ifundefined{note}{}{}%
\def\newfmtname{LaTeX2e}
%
\ifx\fmtname\newfmtname
  \DeclareOldFontCommand{\rm}{\normalfont\rmfamily}{\mathrm}
  \DeclareOldFontCommand{\sf}{\normalfont\sffamily}{\mathsf}
  \DeclareOldFontCommand{\tt}{\normalfont\ttfamily}{\mathtt}
  \DeclareOldFontCommand{\bf}{\normalfont\bfseries}{\mathbf}
  \DeclareOldFontCommand{\it}{\normalfont\itshape}{\mathit}
  \DeclareOldFontCommand{\sl}{\normalfont\slshape}{\@nomath\sl}
  \DeclareOldFontCommand{\sc}{\normalfont\scshape}{\@nomath\sc}
\fi

%

\def\alpha{{\Greekmath 010B}}%
\def\beta{{\Greekmath 010C}}%
\def\gamma{{\Greekmath 010D}}%
\def\delta{{\Greekmath 010E}}%
\def\epsilon{{\Greekmath 010F}}%
\def\zeta{{\Greekmath 0110}}%
\def\eta{{\Greekmath 0111}}%
\def\theta{{\Greekmath 0112}}%
\def\iota{{\Greekmath 0113}}%
\def\kappa{{\Greekmath 0114}}%
\def\lambda{{\Greekmath 0115}}%
\def\mu{{\Greekmath 0116}}%
\def\nu{{\Greekmath 0117}}%
\def\xi{{\Greekmath 0118}}%
\def\pi{{\Greekmath 0119}}%
\def\rho{{\Greekmath 011A}}%
\def\sigma{{\Greekmath 011B}}%
\def\tau{{\Greekmath 011C}}%
\def\upsilon{{\Greekmath 011D}}%
\def\phi{{\Greekmath 011E}}%
\def\chi{{\Greekmath 011F}}%
\def\psi{{\Greekmath 0120}}%
\def\omega{{\Greekmath 0121}}%
\def\varepsilon{{\Greekmath 0122}}%
\def\vartheta{{\Greekmath 0123}}%
\def\varpi{{\Greekmath 0124}}%
\def\varrho{{\Greekmath 0125}}%
\def\varsigma{{\Greekmath 0126}}%
\def\varphi{{\Greekmath 0127}}%

\def\nabla{{\Greekmath 0272}}
\def\FindBoldGroup{%
   {\setbox0=\hbox{$\mathbf{x\global\edef\theboldgroup{\the\mathgroup}}$}}%
}

\def\Greekmath#1#2#3#4{%
    \if@compatibility
        \ifnum\mathgroup=\symbold
           \mathchoice{\mbox{\boldmath$\displaystyle\mathchar"#1#2#3#4$}}%
                      {\mbox{\boldmath$\textstyle\mathchar"#1#2#3#4$}}%
                      {\mbox{\boldmath$\scriptstyle\mathchar"#1#2#3#4$}}%
                      {\mbox{\boldmath$\scriptscriptstyle\mathchar"#1#2#3#4$}}%
        \else
           \mathchar"#1#2#3#4%
        \fi 
    \else 
        \FindBoldGroup
        \ifnum\mathgroup=\theboldgroup 
           \mathchoice{\mbox{\boldmath$\displaystyle\mathchar"#1#2#3#4$}}%
                      {\mbox{\boldmath$\textstyle\mathchar"#1#2#3#4$}}%
                      {\mbox{\boldmath$\scriptstyle\mathchar"#1#2#3#4$}}%
                      {\mbox{\boldmath$\scriptscriptstyle\mathchar"#1#2#3#4$}}%
        \else
           \mathchar"#1#2#3#4%
        \fi     	    
	  \fi}

\newif\ifGreekBold  \GreekBoldfalse
\let\SAVEPBF=\pbf
\def\pbf{\GreekBoldtrue\SAVEPBF}%

\@ifundefined{theorem}{\newtheorem{theorem}{Theorem}}{}
\@ifundefined{lemma}{\newtheorem{lemma}[theorem]{Lemma}}{}
\@ifundefined{corollary}{\newtheorem{corollary}[theorem]{Corollary}}{}
\@ifundefined{conjecture}{}{}
\@ifundefined{proposition}{}{}
\@ifundefined{axiom}{}{}
\@ifundefined{remark}{}{}
\@ifundefined{example}{}{}
\@ifundefined{exercise}{}{}
\@ifundefined{definition}{}{}

\@ifundefined{mathletters}{%
  \newcounter{equationnumber}  
  \def\mathletters{%
     \addtocounter{equation}{1}
     \edef\@currentlabel{\theequation}%
     \setcounter{equationnumber}{\c@equation}
     \setcounter{equation}{0}%
     \edef\theequation{\@currentlabel\noexpand\alph{equation}}%
  }
  
}{}

\@ifundefined{BibTeX}{%
    \def\BibTeX{{\rm B\kern-.05em{\sc i\kern-.025em b}\kern-.08em
                 T\kern-.1667em\lower.7ex\hbox{E}\kern-.125emX}}}{}%
\@ifundefined{AmS}%
    {\def\AmS{{\protect\usefont{OMS}{cmsy}{m}{n}%
                A\kern-.1667em\lower.5ex\hbox{M}\kern-.125emS}}}{}%
\@ifundefined{AmSTeX}{}{}%
%

\def\@@eqncr{\let\@tempa\relax
    \ifcase\@eqcnt \def\@tempa{& & &}\or \def\@tempa{& &}%
      \else \def\@tempa{&}\fi
     \@tempa
     \if@eqnsw
        \iftag@
           \@taggnum
        \else
           \@eqnnum\stepcounter{equation}%
        \fi
     \fi
     \global\tag@false
     \global\@eqnswtrue
     \global\@eqcnt\z@\cr}

\def\TCItag{\@ifnextchar*{\@TCItagstar}{\@TCItag}}
\def\@TCItag#1{%
    \global\tag@true
    \global\def\@taggnum{(#1)}%
    \global\def\@currentlabel{#1}}
\def\@TCItagstar*#1{%
    \global\tag@true
    \global\def\@taggnum{#1}%
    \global\def\@currentlabel{#1}}
%
%
%
%
%
%
%
%
%
%
%
%
%
%
%
%
%
%
%

\def\tint{\msi@int\textstyle\int}%
\def\tiint{\msi@int\textstyle\iint}%
\def\tiiint{\msi@int\textstyle\iiint}%
\def\tiiiint{\msi@int\textstyle\iiiint}%
\def\tidotsint{\msi@int\textstyle\idotsint}%
\def\toint{\msi@int\textstyle\oint}%

%
%
%
%
%
%
%
%
%
%
%
%
%
%
%

\newtoks\temptoksa
\newtoks\temptoksb
\newtoks\temptoksc

\def\msi@int#1#2{%
 \def\@temp{{#1#2\the\temptoksc_{\the\temptoksa}^{\the\temptoksb}}}%
 \futurelet\@nextcs
 \@int
}

\def\@int{%
   \ifx\@nextcs\limits
      \typeout{Found limits}%
      \temptoksc={\limits}%
	  \let\@next\@intgobble%
   \else\ifx\@nextcs\nolimits
      \typeout{Found nolimits}%
      \temptoksc={\nolimits}%
	  \let\@next\@intgobble%
   \else
      \typeout{Did not find limits or no limits}%
      \temptoksc={}%
      \let\@next\msi@limits%
   \fi\fi
   \@next   
}%

\def\@intgobble#1{%
   \typeout{arg is #1}%
   \msi@limits
}

\def\msi@limits{%
   \temptoksa={}%
   \temptoksb={}%
   \@ifnextchar_{\@limitsa}{\@limitsb}%
}

\def\@limitsa_#1{%
   \temptoksa={#1}%
   \@ifnextchar^{\@limitsc}{\@temp}%
}

\def\@limitsb{%
   \@ifnextchar^{\@limitsc}{\@temp}%
}

\def\@limitsc^#1{%
   \temptoksb={#1}%
   \@ifnextchar_{\@limitsd}{\@temp}%
}

\def\@limitsd_#1{%
   \temptoksa={#1}%
   \@temp
}

\def\dint{\msi@int\displaystyle\int}%
\def\diint{\msi@int\displaystyle\iint}%
\def\diiint{\msi@int\displaystyle\iiint}%
\def\diiiint{\msi@int\displaystyle\iiiint}%
\def\didotsint{\msi@int\displaystyle\idotsint}%
\def\doint{\msi@int\displaystyle\oint}%

\if@compatibility\else
  \RequirePackage{amsmath}
\fi

\def\ExitTCILatex{\makeatother }

\bgroup
\ifx\ds@amstex\relax
   \message{amstex already loaded}\aftergroup\ExitTCILatex
\else
   \@ifpackageloaded{amsmath}%
      {\if@compatibility\message{amsmath already loaded}\fi\aftergroup\ExitTCILatex}
      {}
   \@ifpackageloaded{amstex}%
      {\if@compatibility\message{amstex already loaded}\fi\aftergroup\ExitTCILatex}
      {}
   \@ifpackageloaded{amsgen}%
      {\if@compatibility\message{amsgen already loaded}\fi\aftergroup\ExitTCILatex}
      {}
\fi
\egroup


\typeout{TCILATEX defining AMS-like constructs in LaTeX 2.09 COMPATIBILITY MODE}
%
%
\let\DOTSI\relax
\def\RIfM@{\relax\ifmmode}%
\def\FN@{\futurelet\next}%
\newcount\intno@
\def\iint{\DOTSI\intno@\tw@\FN@\ints@}%
\def\iiint{\DOTSI\intno@\thr@@\FN@\ints@}%
\def\iiiint{\DOTSI\intno@4 \FN@\ints@}%
\def\idotsint{\DOTSI\intno@\z@\FN@\ints@}%
\def\ints@{\findlimits@\ints@@}%
\newif\iflimtoken@
\newif\iflimits@
\def\findlimits@{\limtoken@true\ifx\next\limits\limits@true
 \else\ifx\next\nolimits\limits@false\else
 \limtoken@false\ifx\ilimits@\nolimits\limits@false\else
 \ifinner\limits@false\else\limits@true\fi\fi\fi\fi}%
\def\multint@{\int\ifnum\intno@=\z@\intdots@                          
 \else\intkern@\fi                                                    
 \ifnum\intno@>\tw@\int\intkern@\fi                                   
 \ifnum\intno@>\thr@@\int\intkern@\fi                                 
 \int}
\def\multintlimits@{\intop\ifnum\intno@=\z@\intdots@\else\intkern@\fi
 \ifnum\intno@>\tw@\intop\intkern@\fi
 \ifnum\intno@>\thr@@\intop\intkern@\fi\intop}%
\def\intic@{%
    \mathchoice{\hskip.5em}{\hskip.4em}{\hskip.4em}{\hskip.4em}}%
\def\negintic@{\mathchoice
 {\hskip-.5em}{\hskip-.4em}{\hskip-.4em}{\hskip-.4em}}%
\def\ints@@{\iflimtoken@                                              
 \def\ints@@@{\iflimits@\negintic@
   \mathop{\intic@\multintlimits@}\limits                             
  \else\multint@\nolimits\fi                                          
  \eat@}
 \else                                                                
 \def\ints@@@{\iflimits@\negintic@
  \mathop{\intic@\multintlimits@}\limits\else
  \multint@\nolimits\fi}\fi\ints@@@}%
\def\intkern@{\mathchoice{\!\!\!}{\!\!}{\!\!}{\!\!}}%
\def\plaincdots@{\mathinner{\cdotp\cdotp\cdotp}}%
\def\intdots@{\mathchoice{\plaincdots@}%
 {{\cdotp}\mkern1.5mu{\cdotp}\mkern1.5mu{\cdotp}}%
 {{\cdotp}\mkern1mu{\cdotp}\mkern1mu{\cdotp}}%
 {{\cdotp}\mkern1mu{\cdotp}\mkern1mu{\cdotp}}}%
%
%
%
\def\RIfM@{\relax\protect\ifmmode}
\def\text{\RIfM@\expandafter\text@\else\expandafter\mbox\fi}
\let\nfss@text\text
\def\text@#1{\mathchoice
   {\textdef@\displaystyle\f@size{#1}}%
   {\textdef@\textstyle\tf@size{\firstchoice@false #1}}%
   {\textdef@\textstyle\sf@size{\firstchoice@false #1}}%
   {\textdef@\textstyle \ssf@size{\firstchoice@false #1}}%
   \glb@settings}

\def\textdef@#1#2#3{\hbox{{%
                    \everymath{#1}%
                    \let\f@size#2\selectfont
                    #3}}}
\newif\iffirstchoice@
\firstchoice@true
%
%
\def\Let@{\relax\iffalse{\fi\let\\=\cr\iffalse}\fi}%
\def\vspace@{\def\vspace##1{\crcr\noalign{\vskip##1\relax}}}%
\def\multilimits@{\bgroup\vspace@\Let@
 \baselineskip\fontdimen10 \scriptfont\tw@
 \advance\baselineskip\fontdimen12 \scriptfont\tw@
 \lineskip\thr@@\fontdimen8 \scriptfont\thr@@
 \lineskiplimit\lineskip
 \vbox\bgroup\ialign\bgroup\hfil$\m@th\scriptstyle{##}$\hfil\crcr}%
\def\Sb{_\multilimits@}%
\def\endSb{\crcr\egroup\egroup\egroup}%
\def\Sp{^\multilimits@}%

%
%
%
\newdimen\ex@
\ex@.2326ex
\def\rightarrowfill@#1{$#1\m@th\mathord-\mkern-6mu\cleaders
 \hbox{$#1\mkern-2mu\mathord-\mkern-2mu$}\hfill
 \mkern-6mu\mathord\rightarrow$}%
\def\leftarrowfill@#1{$#1\m@th\mathord\leftarrow\mkern-6mu\cleaders
 \hbox{$#1\mkern-2mu\mathord-\mkern-2mu$}\hfill\mkern-6mu\mathord-$}%
\def\leftrightarrowfill@#1{$#1\m@th\mathord\leftarrow
\mkern-6mu\cleaders
 \hbox{$#1\mkern-2mu\mathord-\mkern-2mu$}\hfill
 \mkern-6mu\mathord\rightarrow$}%
\def\overrightarrow{\mathpalette\overrightarrow@}%
\def\overrightarrow@#1#2{\vbox{\ialign{##\crcr\rightarrowfill@#1\crcr
 \noalign{\kern-\ex@\nointerlineskip}$\m@th\hfil#1#2\hfil$\crcr}}}%

\def\overleftarrow{\mathpalette\overleftarrow@}%
\def\overleftarrow@#1#2{\vbox{\ialign{##\crcr\leftarrowfill@#1\crcr
 \noalign{\kern-\ex@\nointerlineskip}$\m@th\hfil#1#2\hfil$\crcr}}}%
\def\overleftrightarrow{\mathpalette\overleftrightarrow@}%
\def\overleftrightarrow@#1#2{\vbox{\ialign{##\crcr
   \leftrightarrowfill@#1\crcr
 \noalign{\kern-\ex@\nointerlineskip}$\m@th\hfil#1#2\hfil$\crcr}}}%
\def\underrightarrow{\mathpalette\underrightarrow@}%
\def\underrightarrow@#1#2{\vtop{\ialign{##\crcr$\m@th\hfil#1#2\hfil
  $\crcr\noalign{\nointerlineskip}\rightarrowfill@#1\crcr}}}%

\def\underleftarrow{\mathpalette\underleftarrow@}%
\def\underleftarrow@#1#2{\vtop{\ialign{##\crcr$\m@th\hfil#1#2\hfil
  $\crcr\noalign{\nointerlineskip}\leftarrowfill@#1\crcr}}}%
\def\underleftrightarrow{\mathpalette\underleftrightarrow@}%
\def\underleftrightarrow@#1#2{\vtop{\ialign{##\crcr$\m@th
  \hfil#1#2\hfil$\crcr
 \noalign{\nointerlineskip}\leftrightarrowfill@#1\crcr}}}%

\def\qopnamewl@#1{\mathop{\operator@font#1}\nlimits@}
\let\nlimits@\displaylimits
\def\setboxz@h{\setbox\z@\hbox}

\def\varlim@#1#2{\mathop{\vtop{\ialign{##\crcr
 \hfil$#1\m@th\operator@font lim$\hfil\crcr
 \noalign{\nointerlineskip}#2#1\crcr
 \noalign{\nointerlineskip\kern-\ex@}\crcr}}}}

 \def\rightarrowfill@#1{\m@th\setboxz@h{$#1-$}\ht\z@\z@
  $#1\copy\z@\mkern-6mu\cleaders
  \hbox{$#1\mkern-2mu\box\z@\mkern-2mu$}\hfill
  \mkern-6mu\mathord\rightarrow$}
\def\leftarrowfill@#1{\m@th\setboxz@h{$#1-$}\ht\z@\z@
  $#1\mathord\leftarrow\mkern-6mu\cleaders
  \hbox{$#1\mkern-2mu\copy\z@\mkern-2mu$}\hfill
  \mkern-6mu\box\z@$}

\def\projlim{\qopnamewl@{proj\,lim}}
\def\injlim{\qopnamewl@{inj\,lim}}
\def\varinjlim{\mathpalette\varlim@\rightarrowfill@}
\def\varprojlim{\mathpalette\varlim@\leftarrowfill@}
\def\varliminf{\mathpalette\varliminf@{}}
\def\varliminf@#1{\mathop{\underline{\vrule\@depth.2\ex@\@width\z@
   \hbox{$#1\m@th\operator@font lim$}}}}
\def\varlimsup{\mathpalette\varlimsup@{}}
\def\varlimsup@#1{\mathop{\overline
  {\hbox{$#1\m@th\operator@font lim$}}}}

%
%
%
%
%
%
\begingroup \catcode `|=0 \catcode `[= 1
\catcode`]=2 \catcode `\{=12 \catcode `\}=12
\catcode`\\=12 
|gdef|@alignverbatim#1\end{align}[#1|end[align]]
|gdef|@salignverbatim#1\end{align*}[#1|end[align*]]

|gdef|@alignatverbatim#1\end{alignat}[#1|end[alignat]]
|gdef|@salignatverbatim#1\end{alignat*}[#1|end[alignat*]]

|gdef|@xalignatverbatim#1\end{xalignat}[#1|end[xalignat]]
|gdef|@sxalignatverbatim#1\end{xalignat*}[#1|end[xalignat*]]

|gdef|@gatherverbatim#1\end{gather}[#1|end[gather]]
|gdef|@sgatherverbatim#1\end{gather*}[#1|end[gather*]]

|gdef|@gatherverbatim#1\end{gather}[#1|end[gather]]
|gdef|@sgatherverbatim#1\end{gather*}[#1|end[gather*]]

|gdef|@multilineverbatim#1\end{multiline}[#1|end[multiline]]
|gdef|@smultilineverbatim#1\end{multiline*}[#1|end[multiline*]]

|gdef|@arraxverbatim#1\end{arrax}[#1|end[arrax]]
|gdef|@sarraxverbatim#1\end{arrax*}[#1|end[arrax*]]

|gdef|@tabulaxverbatim#1\end{tabulax}[#1|end[tabulax]]
|gdef|@stabulaxverbatim#1\end{tabulax*}[#1|end[tabulax*]]

|endgroup

\def\align{\@verbatim \frenchspacing\@vobeyspaces \@alignverbatim
You are using the "align" environment in a style in which it is not defined.}

\@namedef{align*}{\@verbatim\@salignverbatim
You are using the "align*" environment in a style in which it is not defined.}
\expandafter\let\csname endalign*\endcsname =\endtrivlist

\def\alignat{\@verbatim \frenchspacing\@vobeyspaces \@alignatverbatim
You are using the "alignat" environment in a style in which it is not defined.}

\@namedef{alignat*}{\@verbatim\@salignatverbatim
You are using the "alignat*" environment in a style in which it is not defined.}
\expandafter\let\csname endalignat*\endcsname =\endtrivlist

\def\xalignat{\@verbatim \frenchspacing\@vobeyspaces \@xalignatverbatim
You are using the "xalignat" environment in a style in which it is not defined.}

\@namedef{xalignat*}{\@verbatim\@sxalignatverbatim
You are using the "xalignat*" environment in a style in which it is not defined.}
\expandafter\let\csname endxalignat*\endcsname =\endtrivlist

\def\gather{\@verbatim \frenchspacing\@vobeyspaces \@gatherverbatim
You are using the "gather" environment in a style in which it is not defined.}

\@namedef{gather*}{\@verbatim\@sgatherverbatim
You are using the "gather*" environment in a style in which it is not defined.}
\expandafter\let\csname endgather*\endcsname =\endtrivlist

\def\multiline{\@verbatim \frenchspacing\@vobeyspaces \@multilineverbatim
You are using the "multiline" environment in a style in which it is not defined.}

\@namedef{multiline*}{\@verbatim\@smultilineverbatim
You are using the "multiline*" environment in a style in which it is not defined.}
\expandafter\let\csname endmultiline*\endcsname =\endtrivlist

\def\arrax{\@verbatim \frenchspacing\@vobeyspaces \@arraxverbatim
You are using a type of "array" construct that is only allowed in AmS-LaTeX.}

\def\tabulax{\@verbatim \frenchspacing\@vobeyspaces \@tabulaxverbatim
You are using a type of "tabular" construct that is only allowed in AmS-LaTeX.}

\@namedef{arrax*}{\@verbatim\@sarraxverbatim
You are using a type of "array*" construct that is only allowed in AmS-LaTeX.}
\expandafter\let\csname endarrax*\endcsname =\endtrivlist

\@namedef{tabulax*}{\@verbatim\@stabulaxverbatim
You are using a type of "tabular*" construct that is only allowed in AmS-LaTeX.}
\expandafter\let\csname endtabulax*\endcsname =\endtrivlist


 \def\endequation{%
     \ifmmode\ifinner 
      \iftag@
        \addtocounter{equation}{-1} 
        $\hfil
           \displaywidth\linewidth\@taggnum\egroup \endtrivlist
        \global\tag@false
        \global\@ignoretrue   
      \else
        $\hfil
           \displaywidth\linewidth\@eqnnum\egroup \endtrivlist
        \global\tag@false
        \global\@ignoretrue 
      \fi
     \else   
      \iftag@
        \addtocounter{equation}{-1} 
        \eqno \hbox{\@taggnum}
        \global\tag@false%
        $$\global\@ignoretrue
      \else
        \eqno \hbox{\@eqnnum}
        $$\global\@ignoretrue
      \fi
     \fi\fi
 } 

 \newif\iftag@ \tag@false
 
 \def\TCItag{\@ifnextchar*{\@TCItagstar}{\@TCItag}}
 \def\@TCItag#1{%
     \global\tag@true
     \global\def\@taggnum{(#1)}%
     \global\def\@currentlabel{#1}}
 \def\@TCItagstar*#1{%
     \global\tag@true
     \global\def\@taggnum{#1}%
     \global\def\@currentlabel{#1}}

  \@ifundefined{tag}{
     \def\tag{\@ifnextchar*{\@tagstar}{\@tag}}
     \def\@tag#1{%
         \global\tag@true
         \global\def\@taggnum{(#1)}}
     \def\@tagstar*#1{%
         \global\tag@true
         \global\def\@taggnum{#1}}
  }{}

%
%
%
%
%

\makeatother

\begin{document}

\title{Difference-in-Differences Meets Synthetic Control: Doubly Robust
Identification and Estimation\thanks{%
Authors are listed in alphabetical order; all contributed equally to this
work. We gratefully acknowledge Yawei Wang for outstanding research
assistance.}}
\author{Yixiao Sun\thanks{%
University of California, San Diego. Email: \url{yisun@ucsd.edu}.} \and %
Haitian Xie\thanks{%
Corresponding author. Peking University. Email: \url{xht@gsm.pku.edu.cn}.}
\and Yuhang Zhang\thanks{%
The Chinese University of Hong Kong. Email: \url{yuhangzhang@cuhk.edu.hk}.}}
\date{\today}
\maketitle
\vspace{-2.5em}
\begin{abstract}

\singlespacing Difference-in-Differences (DiD) and Synthetic Control (SC) are widely used methods for causal inference in panel data, each with distinct strengths and limitations. We propose a novel method for short-panel causal inference that integrates the advantages of both approaches. Our method delivers a doubly robust identification strategy for the average treatment effect on the treated (ATT) under either of two non-nested assumptions: parallel trends or a group-level SC condition. Building on this identification result, we develop a unified semiparametric framework for estimating the ATT. Notably, the identification-robust moment function satisfies Neyman orthogonality under the parallel trends assumption but not under the SC assumption, leading to different asymptotic variances across the two identification strategies. To ensure valid inference, we propose a multiplier bootstrap method that consistently approximates the asymptotic distribution under either assumption. Furthermore, we extend our methodology to accommodate repeated cross-sectional data and staggered treatment designs. As an empirical application, we evaluate the impact of the 2003 minimum wage increase in Alaska on family income. Finally, in simulation studies based on empirically calibrated data-generating processes, we demonstrate that the proposed estimation and inference methods perform well in finite samples under either identification assumption.

\bigskip \noindent \textbf{Keywords}: Difference-in-Differences, Doubly
Robust Identification, Multiplier Bootstrap, Semiparametric Estimation,
Synthetic Control.

\bigskip \noindent \textbf{JEL codes:} C14, C21, C23
\end{abstract}

\vspace{-3em}

\vspace{-2em}

\newpage

\section{Introduction}

In recent years, Difference-in-Differences (DiD) methods have gained
significant traction in economics and the social sciences, playing a central
role in the \textquotedblleft credibility revolution\textquotedblright\ %
\citep{AngristPischke2010_credibility}. Over 30\% of NBER applied
microeconomics working papers in 2024 mention DiD or related event study
methods, surpassing alternatives such as instrumental variables and
regression discontinuity designs \citep{goldsmith2024tracking}. Yet, DiD's
reliance on the parallel trends assumption, a requirement often scrutinized
as an oversimplification of real-world dynamics, leaves researchers
vulnerable to biased estimates when trends diverge.

The synthetic control method --- \textquotedblleft arguably the most
important innovation in the policy-evaluation literature in the last 15
years\textquotedblright\ \citep{athey2017state} --- constructs
counterfactuals by matching treated units to weighted combinations of
untreated units. It has surged in popularity in settings where the
parallel-trends assumption may fail. However, its use is now largely
confined to studies with long panels of aggregate data, such as countries,
states, or regions 
\citep{abadie2003economic, abadie2010synthetic, hsiao2012panel,
abadie2015comparative}, rather than to short panels of micro-level data with
rich individual information. Given the limited availability of micro-level
datasets with long time horizons, the method has not yet fulfilled its full
potential in applied microeconomic research.

In this paper, we address these limitations by adopting a different
perspective on the synthetic-control problem and integrating the strengths
of DiD and synthetic control methods. We focus on a typical microeconomic
panel data setting where individuals are observed repeatedly over time and
grouped into aggregate-level units (such as households within states), with
treatment assigned at the group level. We introduce a novel causal inference
method that unifies the identification strategies of DiD and synthetic
control in a doubly robust framework. Our approach nonparametrically
identifies the average treatment effect on the treated (ATT) under either
the DiD parallel trends assumption or the synthetic control assumption. This
robustness enables applied researchers to avoid the conventional trade-off
between DiD and synthetic control, thereby strengthening the credibility of
causal inference.

To operationalize our method, we propose a semiparametric estimation
procedure and a bootstrap inference approach.\footnote{%
We use \textquotedblleft semiparametric\textquotedblright\ to describe a
setting in which the parameters of interest are finite-dimensional, while
nuisance parameters, such as the propensity score, conditional outcome
expectations, and synthetic control weights, are specified nonparametrically.%
} Developing asymptotic theory within this framework presents substantial
challenges. Under the parallel trends assumption, the proposed moment
condition that identifies the ATT satisfies Neyman orthogonality, allowing
flexible estimation of nuisance parameters \citep{chernozhukov2018DML}.
However, under the synthetic control structure, the moment condition does
not satisfy Neyman orthogonality, requiring careful adjustment to the
asymptotic variance to account for the estimated nuisance parameters.
Importantly, the asymptotic variance differs depending on the identification
assumption. To ensure that the inference procedure remains doubly robust, we
propose a multiplier bootstrap method that consistently approximates the
asymptotic distribution under either assumption. Our approach achieves
double robustness across three key dimensions: identification, estimation,
and inference, ensuring both consistent point estimation and valid
statistical inference under either identifying assumption.

We further extend the method to accommodate two additional settings:
repeated cross-sectional data and staggered treatment designs. For repeated
cross-sectional data, different sets of individuals are sampled across
different time periods. We show that the doubly robust identification
results from the panel setting extend naturally to the repeated
cross-sectional setting, provided that the variables are stationary over
time, a standard assumption in this context. For staggered treatment
designs, we build on the recent DiD literature %
\citep[e.g.,][]{callaway2021difference}, using untreated groups as controls
for the treatment group of interest. This approach effectively reduces the
problem to a setting with a single treatment group and a single treatment
period.

In our empirical study, we apply the proposed method to evaluate the effect of Alaska’s 2003 minimum wage increase on family income, using the Current Population Survey with states as groups and households as units. Consistent with \citet{gunsilius2023distributional}, we find no statistically significant immediate impact of the policy on family income.

In the simulation study, we calibrate the data-generating processes to match the distribution observed in the empirical application. We consider three DGPs: (i) only the parallel trends assumption holds, (ii) only the synthetic control assumption holds, and (iii) both assumptions hold. The proposed estimation and inference procedures demonstrate consistently good performance across all three DGPs.

We organize the remainder of the paper as follows. The rest of this section
discusses the relevant literature. Section \ref{sec:setup} introduces the
panel model setup, along with the identification assumptions and the doubly
robust identification result. Section \ref{sec:asymptotics} establishes the
semiparametric estimation theory and the multiplier bootstrap theory
separately under the parallel trends and synthetic control assumptions.
Sections \ref{sec:rc} and \ref{sec:staggered} extend the analysis to
repeated cross-sectional data and staggered treatment designs, respectively.
Sections \ref{sec:empirical} and \ref{sec:simulation} present the empirical and simulation studies. Section \ref%
{sec:conclusion} concludes the paper. Proofs of the theoretical results are
provided in the appendix.

\paragraph{Related literature}

Our paper contributes to three strands of literature. First, it contributes
to the modern semiparametric panel data and DiD literature 
\citep{heckman1997matching,abadie2005semiparametric, sant2020doubly, callaway2021difference, chen2025efficient}. When the parallel trends
assumption holds, our estimator can be interpreted as a weighted average of
doubly robust DiD estimators across different groups, inheriting their
robustness to parametric misspecification of nuisance parameters and their
compatibility with double/debiased machine learning (DML) frameworks %
\citep{chernozhukov2018DML}. More importantly, our method introduces a novel
dimension of robustness: even when parallel trends fail due to unobserved
confounding, our estimator remains valid if the synthetic control structure
holds. This innovation addresses concerns about the fragility of parallel
trends in applied work, offering researchers a safeguard against violations
of this key assumption.

Second, our paper contributes to the growing literature on synthetic control
methods, particularly in the context of micro-level data with a short time
dimension. While classical synthetic control approaches have predominantly
focused on aggregate-level data, recent studies have begun to explore
applications at the individual level. For example, relying on factor models, 
\cite{robbins2017framework} examined the impact of region-level crime
interventions using individual-level data, whereas \cite%
{chen2020distributional} introduced a distributional synthetic control
method based on a quantile factor model. More recently, nonparametric
approaches, such as those proposed by \cite{gunsilius2023distributional} and 
\cite{chen2023group}, have been developed to study the synthetic matching of
the entire outcome distribution by drawing on the changes-in-changes method
from \cite{athey2006identification}. While \cite{gunsilius2023distributional}%
's method applies to short-panel settings, \cite{chen2023group} requires a
large number of pre-treatment periods. Complementing these developments, our
approach integrates synthetic control methods within the modern causal
inference framework, requiring only a short time dimension. It achieves
nonparametric identification and semiparametric estimation results that
align with those in recent DiD and other causal methods, thereby broadening
the scope of synthetic control applications in empirical research.

Third, our work extends the literature on causal panel data methods, such as the matrix completion method \citep{athey2021matrix},
the synthetic DiD method \citep{arkhangelsky2021synthetic}, and the augmented SC method \citep{ben2021augmented}. These papers operate on aggregate-level data and, to drive bias to zero, typically assume a large number of control groups and/or a large number of pre-treatment periods. In contrast, we work with short panels of individual-level data organized into heterogeneous groups. Our framework formalizes the synthetic control structure as a nonparametric identification assumption, with covariates entering naturally as conditioning variables.
A prototypical example of such group structures is the classification of individuals or households by their state of residence, a common practice in state-level policy studies that rely on widely used microeconomic datasets, such as the Current Population Survey,
American Community Survey, Panel Study of Income Dynamics, and National
Longitudinal Surveys. By leveraging this group structure with granular information, we can identify the synthetic control weights without relying on either a large number of pre-treatment periods or a
large number of aggregate control groups. Instead, we assume that the number
of individuals in each group is large, and thus our asymptotic framework is
different. A more detailed discussion is provided in Section \ref%
{sec:no-covariate}.

\section{Setup and identification}

\label{sec:setup}

\subsection{Panel model with a single treated group}

We initiate our investigation with a panel data model, where we observe
repeated outcomes over time for the same individuals. Our data consists of a
collection of groups, which represent aggregate-level units, such as
countries, states, provinces, cities, or other similar entities. Within each
aggregate-level unit, there are individual units. The categorical variable $%
G $ signifies the specific group to which an individual unit belongs. Denote 
$N_{G}$ as the total number of control groups. The first group, $g=1$, is
the treated group, and the remaining groups, $g=2,\ldots ,N_{G}+1,$ are the
control groups. These units are observed over $\mathcal{T}$ time periods
denoted by $t\in \{1,\ldots ,\mathcal{T}\}$.\footnote{%
We do not use the usual notation $T$ here because it denotes the time stamp
for individual units in the repeated cross-sectional setting in Section \ref%
{sec:rc}.} For a group $g$, let $\mathcal{G}_{g}\equiv \mathbf{1}\{G=g\}$ be
the binary variable (0 or 1) that indicates whether an individual unit
belongs to group $g.$

A binary, irreversible treatment is assigned to the groups. In periods $%
t=1,\dots,\mathcal{T}-1$, all groups remain untreated. In the final period $%
t=\mathcal{T}$, the first group receives treatment, while the other groups
remain untreated. We focus on this scenario with a single treated group and
a single treatment period, as it allows us to more effectively illustrate
the methodological contribution in this simpler setting. An extension to
staggered treatment designs, involving multiple treatment groups and
variation in treatment timing, is presented in Section \ref{sec:staggered}.

The potential outcomes are defined over the entire treatment trajectory. Let 
$\bm{0}_{s}$ denote a vector of zeros of dimension $s$. We define $\tilde{Y}%
_{t}(\bm{0}_{\mathcal{T}-1},1)$ as the potential outcome at period $t$ if
the individual receives treatment in the final period and no treatment in
all preceding periods. Similarly, we define $\tilde{Y}_{t}(\bm{0}_{\mathcal{T%
}-1},0)\equiv \tilde{Y}_{t}(\bm{0}_{\mathcal{T}})$ as the potential outcome
if the individual never receives treatment. This definition of potential
outcomes will be useful when extending the framework to account for
staggered treatment adoption.

In the scenario of a single treatment time, for simplicity and with a slight
abuse of notation, we use $Y_{t}(1)$ and $Y_{t}(0)$ to represent $\tilde{Y}%
_{t}(\bm{0}_{\mathcal{T}-1},1)$ and $\tilde{Y}_{t}(\bm{0}_{\mathcal{T}-1},0)$%
, respectively. The observed outcome at time $t$ is given by $Y_{t}\equiv 
\mathcal{G}_{1}Y_{t}(1)+(1-\mathcal{G}_{1})Y_{t}(0)$.

\begin{comment}
	\begin{namedassumption}	{NA}[No-Anticipation] \label{asm:NA}
		For any pretreatment period $t \in\{1,\ldots,\mathcal{T}-1\}$, $Y_t(1) = Y_t(0)$.
	\end{namedassumption}
\end{comment}

Under the no-anticipation assumption, an individual's observed pre-treatment
outcome is equal to the untreated potential outcome. Therefore, the observed
outcome is given by 
\begin{align*}
Y_{t}& =Y_{t}(0),t=1,\ldots ,\mathcal{T}-1, \\
Y_{\mathcal{T}}& =\mathcal{G}_{1}Y_{\mathcal{T}}(1)+(1-\mathcal{G}_{1})Y_{%
\mathcal{T}}(0).
\end{align*}%
The target parameter is 
\begin{equation*}
\theta \equiv \mathbb{E}[Y_{\mathcal{T}}(1)-Y_{\mathcal{T}}(0)|G=1],
\end{equation*}%
which represents the average treatment effect for the treated (ATT) in the
post-treatment period. For simplicity, we focus on identifying and
estimating this mean parameter, although this approach can be easily
extended to, for example, the distributional change from $Y_{\mathcal{T}}(0)$
to $Y_{\mathcal{T}}(1)$ in the treatment group. This can be achieved by
considering $\mathbb{E}\left[ h(Y_{\mathcal{T}}(1))\mid G=1\right] -\mathbb{E%
}\left[ h(Y_{\mathcal{T}}(0))\mid G=1\right] $ for an appropriately chosen
function $h$.

For each individual unit, let $X$ represent a set of time-invariant
covariates that describe the individual's characteristics. Denote the
support of $X$ by $\mathcal{X}$.

\begin{comment} 
	\begin{namedassumption}{PDS}[Panel Data Sampling] \label{asm:panel}
      We observe an independent and identically distributed (iid) sample $\{S_i \equiv (Y_{1i},\ldots,Y_{\mathcal{T}i},G_i,X_i): 1 \leq i \leq n\}$ of $S \equiv (Y_{1},\ldots,Y_{\mathcal{T}},G,X)$.
    \end{namedassumption}
\end{comment}

Under Assumption \ref{asm:panel}, we have $n$ individual-level units indexed
by $i=1,\ldots,n$. The number of such units in group $g$ is the cardinality
of the set $\{i:G_{i}=g\}=\{i:\mathcal{G}_{gi}=1\}$. \ 

\subsection{Identification assumptions}

We introduce the following three identification assumptions, with the
understanding that they need not all be imposed at once.

\begin{comment} 
\begin{namedassumption}{O}[Overlap] \label{asm:overlap}
   For all groups $g=1,\ldots,N_G+1$, the propensity score $p_g(X) \equiv \mathbb{P}(G=g|X) > 0$ almost surely. 
\end{namedassumption}
\end{comment}

\begin{comment} 
\begin{namedassumption}{PT}[Parallel Trends] \label{asm:PT}
For each control group $g \geq 2$, $\mathbb{E}[Y_{\mathcal{T}}(0) - Y_{\mathcal{T}-1}(0)|G=1,X] = \mathbb{E}[Y_{\mathcal{T}}(0) - Y_{\mathcal{T}-1}(0)|G=g,X]$ almost surely. 
\end{namedassumption}
\end{comment}

\begin{comment} 
\begin{namedassumption}{SC}[Synthetic Control] \label{asm:SC} 
There exists a vector of weight functions $w \equiv (w_2, \ldots, w_{N_G+1})$ for which the elements sum to one and satisfy that for almost all $x \in \mathcal{X}$, 
\begin{align} \label{eqn:w-id}
   \mathbb{E}[Y_{t}(0)|G=1,X=x] = \sum_{g=2}^{N_G+1} w_g(x) \mathbb{E}[Y_{t}(0)|G=g,X=x], t=1,\ldots,\mathcal{T}.
\end{align}
Furthermore, $\mathcal{T} \geq N_G$.\footnote{Theoretically, imposing nonnegativity on the weights is not necessary in our framework, although such a restriction is commonly maintained in the classical synthetic control literature \citep{abadie2021using}. If the true weights are nonnegative, this structure can be imposed on the weight estimation without altering the asymptotic theory for the proposed ATT estimator.}
\end{namedassumption}
\end{comment}

Assumption \ref{asm:PT} adopts a nonparametric parallel trends condition
that fits naturally within our framework. First, it is common practice in
empirical research to estimate panel regressions on individual-level data
with group-level (e.g., state- or county-level) fixed effects and common
time trends. Such models inherently assume that groups share the same trend
while allowing for group-specific fixed effects. Second, concerns about the
plausibility of parallel trends typically arise in settings with long
pre-treatment windows. In contrast, our assumption applies to only a single
period, requiring a common trend solely between the last pre-treatment
period and the treatment period. Compared with the two-way fixed-effects
model, which presumes a common trend over the entire horizon, this is a
markedly weaker temporal restriction. Third, when there are more than two
control groups, the parallel trends assumption allows for overidentification
tests, enabling researchers to empirically assess whether the control groups
follow a parallel trend. Control groups that deviate from this trend can be
excluded based on these tests. Of course, this approach requires identifying
at least one control group that satisfies the parallel trends assumption, a
choice that must rely on domain knowledge rather than data. Finally, the
assumption is made conditional on the covariate $X$. Therefore, we only need
to maintain the parallel trends assumption for the subset of the treatment
group and the corresponding subset of the control group with the same
covariate value. This is more plausible than an unconditional parallel
trends assumption, especially when the covariates that may affect the
outcome variable differ between the treatment and control groups %
\citep{abadie2005semiparametric}.

Assumption \ref{asm:SC} is the group-level representation of the synthetic
control structure. It requires that, for each covariate value, the
conditional mean of the outcome in the treated group can be represented as a
linear combination of the conditional means for the untreated groups. We
allow the weights to depend on covariates, which accommodates the
possibility that the weighting patterns vary across units with different
covariate values. In other words, the synthetic control structure is imposed
only on subsets of the treatment and control groups that share the same
covariate value, enhancing the assumption's plausibility. We further assume
that the weighting structure remains the same over time, allowing us to
identify the weighting scheme using the pre-treatment data and apply it in
the post-treatment period to identify the ATT. For the weights to be
identified, it is necessary that $\mathcal{T}\geq N_{G}$.

One can also motivate Assumption \ref{asm:SC} with a factor model structure,
as in the conventional synthetic control literature:%
\begin{equation*}
Y_{i,t}(0)=\lambda (G_{i},X_{i})^{\prime }F_{t}+\varepsilon _{i,t},\qquad 
\mathbb{E}[\varepsilon _{i,t}|X_{i},G_{i}]=0,
\end{equation*}%
where $F_{t}$ is a vector of common factors, and $\lambda(G_i,X_i)$ is a vector of factor loadings. Then Assumption \ref{asm:SC} holds if there exist weighting
functions $w_{g}\left( \cdot \right) $ such that 
\begin{equation*}
\lambda (1,x)=\sum_{g=2}^{N_{G}+1}w_{g}\left( x\right) \lambda \left(
g,x\right),
\end{equation*}%
for almost all $x\in \mathcal{X}.$ Because we impose no parametric form on
the weighting functions $w_{g}\left( \cdot \right) $, the factor loading $%
\lambda (G_{i},X_{i})$ may vary across individuals through their group
membership and observed covariates, and there is no parametric restriction
on $\lambda (\cdot ,\cdot ).$

The structure in Assumption \ref{asm:SC} takes a different approach from the
distributional synthetic control framework in \cite%
{gunsilius2023distributional}. Specifically, we focus on matching the
expectation of the outcome rather than the entire distribution, and we allow
the weights to depend on covariates, while \cite{gunsilius2023distributional}
does not. We require that the number of time periods ($\mathcal{T}$) be at
least as large as the number of control groups ($N_{G}$) to identify the
weights. This differs from the traditional synthetic control method, which
relies on an increasing number of pre-treatment periods for consistent
estimation. Here, we only require the number of time periods to exceed the
number of control groups, without assuming that it diverges to infinity.%
\footnote{%
If the number of time periods is smaller than the number of control groups,
i.e., $\mathcal{T}<N_{G}$, a natural and practical approach is to leverage
prior studies or domain knowledge to retain only the most relevant control
groups for comparison with the treatment group. An alternative approach is
to incorporate a penalty term in the weight estimation process, similar to
the synthetic DiD method \citep{arkhangelsky2021synthetic}, to enable
estimation even when the weights are not point-identified. However,
incorporating such penalization into semiparametric estimation adds further
complexity, which we leave for future research.} This condition is often met
in practice, such as in monthly datasets like the Current Population Survey,
where states serve as the groups. Unlike the canonical synthetic control
method, our framework does not require an increasing number of time periods
because it leverages a growing number of individual units within each group.

An alternative identification assumption commonly used in the synthetic
control literature is the conditional independence of post-treatment $%
Y_{t}(0)$ and treatment assignment, given the pre-treatment potential
outcomes %
\citep[e.g.,][]{robbins2017framework,ding2019bracketing,kellogg2021combining}%
. This condition, often referred to as ignorability conditional on lagged
outcomes, is relatively strong. In contrast, the parallel trends condition
is conceptually less stringent while still preserving the dynamic nature of
the model.

In summary, Assumptions \ref{asm:PT} and \ref{asm:SC} reflect two different
approaches to identification. Assumption \ref{asm:PT} relies on a strong
cross-sectional relationship among groups, though this relationship needs
only to hold over a single period. In contrast, Assumption \ref{asm:SC}
allows for a weaker cross-sectional relationship but requires it to persist
over a longer duration. Importantly, neither Assumption \ref{asm:PT} implies
Assumption \ref{asm:SC}, nor vice versa; these two assumptions are
nonnested. In practice, it is challenging for researchers to decide which
assumption to rely on, as each leads to different estimation methods and
possibly very different estimates.

\subsection{Doubly robust identification}

We introduce the identification results for the ATT. Let $p\equiv
(p_{g}\left( \cdot \right) ,g=1,\ldots ,N_{G}+1)$ and $w\equiv (w_{g}\left(
\cdot \right) ,g=2,\ldots ,N_{G}+1)$. For simplicity, denote $\Delta Y\equiv
\Delta Y_{\mathcal{T}}\equiv Y_{\mathcal{T}}-Y_{\mathcal{T}-1}$. Let $%
m_{\Delta }(X)\equiv \mathbb{E}[Y_{\mathcal{T}}-Y_{\mathcal{T}-1}|G\neq 1,X]$%
, and $\pi _{1}\equiv \mathbb{P}(G=1)$. Note that $\pi _{1}$ is strictly
positive, as implied by Assumption \ref{asm:overlap}. Our identification
strategy is to construct a moment function whose mean is the target causal
parameter $\theta $. We propose the following moment function:%
\begin{equation}
\phi (S;m_{\Delta },p,w;\pi _{1})\equiv \frac{\mathcal{G}_{1}}{\pi _{1}}%
(\Delta Y-m_{\Delta }(X))-\frac{1}{\pi _{1}}\sum_{g=2}^{N_{G}+1}w_{g}(X)%
\frac{p_{1}(X)}{p_{g}(X)}\mathcal{G}_{g}(\Delta Y-m_{\Delta }(X)),
\label{eqn:phi-expression1}
\end{equation}%
which can be equivalently represented as%
\begin{equation}
\phi (S;m_{\Delta },p,w;\pi _{1})=\frac{1}{\pi _{1}}\left( \mathcal{G}%
_{1}-\sum_{g=2}^{N_{G}+1}w_{g}(X)\frac{p_{1}(X)}{p_{g}(X)}\mathcal{G}%
_{g}\right) (\Delta Y-m_{\Delta }(X)).  \label{eqn:phi-expression2}
\end{equation}

Before showing that $\phi \left( \cdot \right) $ is unbiased for $\theta$ under either
Assumption \ref{asm:PT} or Assumption \ref{asm:SC}, we provide some
intuition for its construction. First, if we assume only the parallel trends
condition, the standard approach is to estimate the ATT using the DiD
estimand $\mathbb{E}[\mathcal{G}_{1}(\Delta Y-m_{\Delta }(X))/\pi _{1}]$. In
this case, equation (\ref{eqn:phi-expression1}) reveals that $\phi $ can be
viewed as the DiD formula augmented with a synthetic-control adjustment
term. Conversely, if we assume only the synthetic control condition, the
estimand for the ATT would be 
\begin{equation*}
\mathbb{E}\left[ \frac{1}{\pi _{1}}\left( \mathcal{G}_{1}-%
\sum_{g=2}^{N_{G}+1}w_{g}(X)\frac{p_{1}(X)}{p_{g}(X)}\mathcal{G}_{g}\right)
Y_{\mathcal{T}}\right] .
\end{equation*}%
Here, (\ref{eqn:phi-expression2}) shows that $\phi $ represents the
synthetic control formula with a DiD-style adjustment, replacing $Y_{\mathcal{T}}$ with $(\Delta Y-m_{\Delta }(X))$.

We can gain further insight into the moment function by considering a
special case. Specifically, in the absence of covariate $X,$ the mean
function $m_{\Delta }$ becomes%
\begin{equation*}
m_{\Delta }=\mathbb{E}\left[ Y_{\mathcal{T}}-Y_{\mathcal{T}-1}|G\neq 1\right]
,
\end{equation*}%
which is the expected change in the outcome of interest from time $\mathcal{T%
}-1$ to $\mathcal{T}$ for the control groups. Then 
\begin{align*}
\frac{1}{\pi _{1}}\mathbb{E}\left[ \mathcal{G}_{1}\left( \Delta Y-m_{\Delta
}\right) \right] & =\mathbb{E}\left[ \left( \Delta Y-m_{\Delta }\right) |G=1%
\right] =\mathbb{E}\left[ \Delta Y|G=1\right] -\mathbb{E}\left[ m_{\Delta
}|G=1\right] \\
& =\mathbb{E}\left[ Y_{\mathcal{T}}-Y_{\mathcal{T}-1}|G=1\right] -\mathbb{E}%
\left[ Y_{\mathcal{T}}-Y_{\mathcal{T}-1}|G\neq 1\right] ,
\end{align*}%
which takes the familiar DiD form. On the other hand, in the absence of $X,$ 
$w_{g}$ and $p_{g}$ become constants, and $p_{1}=\pi _{1}.$ Then 
\begin{align*}
\frac{1}{\pi _{1}}\mathbb{E}\left[ \left( \mathcal{G}_{1}-%
\sum_{g=2}^{N_{G}+1}w_{g}\frac{p_{1}}{p_{g}}\mathcal{G}_{g}\right) Y_{%
\mathcal{T}}\right] & =\frac{1}{\pi _{1}}\mathbb{E}\left[ \mathcal{G}_{1}Y_{%
\mathcal{T}}\right] -\frac{1}{\pi _{1}}\mathbb{E}\sum_{g=2}^{N_{G}+1}w_{g}%
\frac{p_{1}}{p_{g}}\mathcal{G}_{g}Y_{\mathcal{T}} \\
& =\mathbb{E}\left[ Y_{\mathcal{T}}|\mathcal{G}_{1}=1\right]
-\sum_{g=2}^{N_{G}+1}w_{g}\mathbb{E}\left[ Y_{\mathcal{T}}|\mathcal{G}_{g}=1%
\right] \\
& =\mathbb{E}\left[ Y_{\mathcal{T}}|G=1\right] -\sum_{g=2}^{N_{G}+1}w_{g}%
\mathbb{E}\left[ Y_{\mathcal{T}}|G=g\right] ,
\end{align*}%
which is the familiar synthetic control formula. These calculations make it
clear that $\phi $ is equal to a DiD-based moment function with a synthetic
control adjustment, or, equivalently, $\phi $ is equal to a synthetic
control-based moment function with a DiD-type adjustment.

In general, $\phi $ integrates elements from both approaches and remains
unbiased under either of the identification assumptions, as established in
the following theorem.

\begin{theorem}
\label{thm:identification} Let Assumptions \ref{asm:NA}, \ref{asm:panel},
and \ref{asm:overlap} hold. \ 

\begin{enumerate}
\item If Assumption \ref{asm:PT} holds, then 
\begin{align*}
\theta & = \mathbb{E}[\phi(S;m_{\Delta},\tilde{p},\tilde{w};\pi_1)] = 
\mathbb{E}[\phi(S;\tilde{m}_{\Delta},p,\tilde{w};\pi_1)],
\end{align*}
for any set of weights $\Tilde{w}$ that sum to one, any non-zero functions $%
\Tilde{p}$, and any functions $\tilde{m}_{\Delta}$.

\item If Assumption \ref{asm:SC} holds, then 
\begin{align*}
\theta & = \mathbb{E}[\phi(S;\tilde{m}_{\Delta},p,w;\pi_1)],
\end{align*}
for any function $\tilde{m}_{\Delta}$, where $w$ is the set of weights
satisfying (\ref{eqn:w-id}).
\end{enumerate}
\end{theorem}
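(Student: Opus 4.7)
The plan is to evaluate $\mathbb{E}[\phi]$ in each of the three cases by the tower property, conditioning on $X$ and using the identity $\mathbb{E}[\mathcal{G}_g Z \mid X] = p_g(X)\,\mathbb{E}[Z \mid G=g,X]$. Assumption \ref{asm:NA} then lets me substitute potential outcomes for the observed change: $\Delta Y = Y_{\mathcal{T}}(0) - Y_{\mathcal{T}-1}(0)$ whenever $G \neq 1$ and $\Delta Y = Y_{\mathcal{T}}(1) - Y_{\mathcal{T}-1}(0)$ when $G = 1$. A useful organizing observation is that representation \eqref{eqn:phi-expression1} is tailored so that the synthetic-control adjustment vanishes when $m_{\Delta}$ is correct, whereas \eqref{eqn:phi-expression2} is tailored so that the mean-correction piece vanishes whenever the weights sum to one. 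I would therefore pick the more convenient representation for each of the three claims.

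For the first equality in Part 1, I would use \eqref{eqn:phi-expression1}. Under Assumption \ref{asm:PT}, $\mathbb{E}[\Delta Y \mid G=g,X]$ is identical for every $g \geq 2$ and hence coincides with $m_{\Delta}(X) = \mathbb{E}[\Delta Y \mid G \neq 1,X]$. Consequently $\mathbb{E}[\mathcal{G}_g (\Delta Y - m_{\Delta}(X)) \mid X] = 0$ for each $g \geq 2$, so the second sum in \eqref{eqn:phi-expression1} contributes zero regardless of $\tilde p$ or $\tilde w$. Only the DiD piece $\mathbb{E}[\mathcal{G}_1(\Delta Y - m_{\Delta}(X))]/\pi_1$ survives; invoking NA on the treated group and PT once more, so that $\mathbb{E}[m_{\Delta}(X) \mid G=1] = \mathbb{E}[Y_{\mathcal{T}}(0) - Y_{\mathcal{T}-1}(0) \mid G=1]$, this collapses to $\theta$.

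For the second equality in Part 1, I would keep \eqref{eqn:phi-expression1} but now exploit the correct propensity $p$ together with the constraint $\sum_{g \geq 2} \tilde w_g(X) = 1$. Taking conditional expectations on $X$, the ratio $p_1(X)/p_g(X)$ multiplied by $p_g(X)$ collapses to $p_1(X)$, so each control group is re-balanced to the treated distribution, and the $\tilde m_{\Delta}(X)$ terms cancel across the treated and the weighted-control pieces by the normalization of $\tilde w$. What remains, up to the outer expectation, is $(p_1(X)/\pi_1)\{\mathbb{E}[\Delta Y \mid G=1,X] - \sum_{g \geq 2} \tilde w_g(X)\,\mathbb{E}[\Delta Y \mid G=g,X]\}$. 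Under PT the controls share a common conditional trend, so the weighted sum in braces reduces to $\mathbb{E}[Y_{\mathcal{T}}(0) - Y_{\mathcal{T}-1}(0) \mid G=1,X]$ for any normalized $\tilde w$, and averaging over $X$ with the weight $p_1(X)/\pi_1$ gives $\mathbb{E}[Y_{\mathcal{T}}(1) - Y_{\mathcal{T}}(0) \mid G=1] = \theta$.

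For Part 2, I would switch to \eqref{eqn:phi-expression2}. Conditioning on $X$, the coefficient multiplying $\tilde m_{\Delta}(X)$ becomes $p_1(X)(1 - \sum_{g \geq 2} w_g(X)) = 0$ by Assumption \ref{asm:SC}, so $\tilde m_{\Delta}$ drops out automatically. The remaining expression reduces to $\mathbb{E}[(p_1(X)/\pi_1)\{\mathbb{E}[\Delta Y \mid G=1,X] - \sum_{g \geq 2} w_g(X)\,\mathbb{E}[\Delta Y \mid G=g,X]\}]$. Applying NA to each untreated group and then the SC identity \eqref{eqn:w-id} separately at $t = \mathcal{T}-1$ and $t = \mathcal{T}$, the weighted control sum equals $\mathbb{E}[Y_{\mathcal{T}}(0) - Y_{\mathcal{T}-1}(0) \mid G=1,X]$, and the integrand collapses to $\mathbb{E}[Y_{\mathcal{T}}(1) - Y_{\mathcal{T}}(0) \mid G=1,X]$, which integrates to $\theta$. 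The calculations themselves are elementary; the main subtlety is the second equality of Part 1, where the synthetic-control adjustment does not vanish termwise and one must combine PT (to equalize the conditional trends across controls) with the normalization of $\tilde w$ to obtain the cancellation.
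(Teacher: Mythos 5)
Your proposal is correct and takes essentially the same route as the paper: both rest on the same two conditional-mean-zero facts (under parallel trends with the true $m_{\Delta}$, each control term $\mathbb{E}[\mathcal{G}_g(\Delta Y-m_{\Delta}(X))\mid X]=0$ so $\tilde p,\tilde w$ are irrelevant; with the true $p$ and weights summing to one, $\mathbb{E}\bigl[\mathcal{G}_1-\sum_{g\geq 2}\tilde w_g(X)\tfrac{p_1(X)}{p_g(X)}\mathcal{G}_g\mid X\bigr]=0$ so $\tilde m_{\Delta}$ is irrelevant), followed by no-anticipation plus PT (Part 1) or the SC identity at $t=\mathcal{T}-1,\mathcal{T}$ (Part 2) to collapse the remainder to $\theta$. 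The only cosmetic difference is that for the second equality of Part 1 you evaluate the surviving term directly under PT with normalized $\tilde w$, whereas the paper first swaps $\tilde m_{\Delta}$ for $m_{\Delta}$ and reuses its first computation.
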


Theorem \ref{thm:identification} establishes that the moment function $\phi $
is unbiased for the ATT across various scenarios,
providing the key sufficient condition for the consistency of the ATT estimator proposed in the next section.

The unbiasedness of $\phi $ is
robust, highlighting two layers of double robustness. The first pertains to
identification. The unbiasedness of the moment function$\mathbb{\ }\phi $ is
robust to the underlying identification assumption, meaning that $\mathbb{E}%
\left( \phi \right) $ identifies the ATT as long as either the parallel
trends condition or the synthetic control condition holds. This flexibility
allows researchers to use a single estimator based on $\phi $, relieving
them of the burden of choosing between the two methods. 

The second, and perhaps more common, notion of double robustness relates to
the specification of nuisance parameters. When the parallel trends condition
holds, the moment function$\mathbb{\ }\phi $ remains unbiased as long as
either the outcome model $m_{\Delta }$ or the propensity score model $p$ is
correctly specified. Misspecification of one model will not render the
moment function $\phi $ biased as long as the other model is correctly
specified. The double robustness between $p$ and $m$ arises because $\phi $
closely resembles the doubly robust DiD moment function in the classical
setting \citep{sant2020doubly}. Moreover, the moment function is unbiased
even if the weights $w$ are entirely misspecified. The weights are
irrelevant for identification since any control group can be used for
comparison on its own, as can any combination of control groups. On the
other hand, under the synthetic control condition, both the propensity
scores $p$ and the weights $w$ must be correctly specified, while the
outcome model $m_{\Delta }$ can be misspecified.

For completeness, we note that an alternative formulation of the moment
function is possible: 
\begin{align*}
& \phi ^{\#}(S;\{m_{g,\Delta }:2\leq g\leq N_{G}+1\},p,w;\pi _{1}) \\
=& \frac{\mathcal{G}_{1}}{\pi _{1}}\left( \Delta
Y-\sum_{g=2}^{N_{G}+1}w_{g}(X)m_{g,\Delta }(X)\right) -\frac{1}{\pi _{1}}%
\sum_{g=2}^{N_{G}+1}w_{g}(X)\frac{p_{1}(X)}{p_{g}(X)}\mathcal{G}_{g}\left(
\Delta Y-m_{g,\Delta }(X)\right) \\
=& \frac{1}{\pi _{1}}\sum_{g=2}^{N_{G}+1}w_{g}(X)\left( \mathcal{G}_{1}-%
\frac{p_{1}(X)}{p_{g}(X)}\mathcal{G}_{g}\right) (\Delta Y-m_{g,\Delta }(X)),
\end{align*}%
where $m_{g,\Delta }(x)\equiv \mathbb{E}[\Delta Y|G=g,X]$. This formulation
offers double robustness between the outcome models $\bigl(m_{g,\Delta
},\,g=2,\ldots,N_{G}+1\bigr)$ and the propensity score model $p$, and it
does so under both the parallel trends and synthetic control assumptions.
However, it requires correct specification of the weights in the synthetic
control setting and therefore does not exhibit Neyman orthogonality with
respect to all nuisance parameters. Moreover, this approach is relatively
non-parsimonious, as it adds another set of nuisance parameters $\bigl(%
m_{g,\Delta },\,g=2,\ldots,N_{G}+1\bigr)$, which scales with the number of
groups, on top of the two existing sets of functions $p$ and $w$. Since our
paper focuses on semiparametric estimation of the ATT through nonparametric
first-step estimation guided by identification assumptions (rather than
assuming parametric functional forms), we favor a more parsimonious approach
based on $\phi $.

\subsection{Comparison with existing methods in the case without covariates} \label{sec:no-covariate}

As previously mentioned, our framework naturally incorporates conditioning covariates, in contrast to existing causal panel methods. In the absence of covariates, our approach specializes to a formulation directly comparable to leading alternatives, and we provide that comparison in this subsection.

When there is no covariate information, we can simplify the nuisance parameters into constants: $m_{\Delta }(X)\equiv m_{\Delta}$, $p_{1}\left(
X\right) /p_{g}(X)=p_{1}/p_{g}$, and $w_{g}(X)\equiv w_{g}$. For simplicity,
we assume that the number of individuals in group $g$ is $np_{g}.$ Noting
that, in the absence of covariate information, $\pi _{1}=p_{1},$ we have%
\begin{eqnarray*}
\phi (S_{i};m_{\Delta },p,w;\pi _{1}) &\equiv &\frac{\mathcal{G}_{1i}}{\pi
_{1}}(\Delta Y_{i}-m_{\Delta })-\frac{1}{\pi _{1}}\sum_{g=2}^{N_{G}+1}w_{g}%
\frac{p_{1}}{p_{g}}\mathcal{G}_{gi}(\Delta Y_{i}-m_{\Delta }) \\
&=&\frac{\mathcal{G}_{1i}}{p_{1}}(\Delta Y_{i}-m_{\Delta
})-\sum_{g=2}^{N_{G}+1}w_{g}\frac{\mathcal{G}_{gi}}{p_{g}}(\Delta
Y_{i}-m_{\Delta }).
\end{eqnarray*}%
The simple average estimator based on the above $\phi $ is then given by 
\begin{eqnarray}
&&\frac{1}{n}\sum_{i=1}^{n}\phi (S_{i};m_{\Delta },p,w;\pi _{1})  \notag \\
&=&\frac{1}{np_{1}}\sum_{i=1}^{n}\mathcal{G}_{1i}(\Delta Y_{i}-m_{\Delta
})-\sum_{g=2}^{N_{G}+1}w_{g}\frac{1}{np_{g}}\sum_{i=1}^{n}\mathcal{G}%
_{gi}(\Delta Y_{i}-m_{\Delta })  \notag \\
&=&(\Delta \bar{Y}_{1}-m_{\Delta })-\sum_{g=2}^{N_{G}+1}w_{g}(\Delta \bar{Y}%
_{g}-m_{\Delta })=\Delta \bar{Y}_{1}-\sum_{g=2}^{N_{G}+1}w_{g}\Delta \bar{Y}%
_{g}  \notag \\
&=&\bar{Y}_{1,\mathcal{T}}-\bar{Y}_{1,\mathcal{T}-1}-%
\sum_{g=2}^{N_{G}+1}w_{g}(\bar{Y}_{g,\mathcal{T}}-\bar{Y}_{g,\mathcal{T}-1}),
\label{eqn:degenerate-formula}
\end{eqnarray}%
where $\bar{Y}_{g,t}$ is the average of the outcomes for individuals in
group $g$ at time period $t.$

Building on \cite{shen2023same}'s formulation (specifically, their equation
(6)) of the synthetic DiD estimator studied by \cite%
{arkhangelsky2021synthetic}, the synthetic DiD (SDiD) estimator of the ATT
based on the group-level data $\{ \bar{Y}_{g,t}:g=1,\ldots,N_{G}+1, t=1,\ldots,\mathcal{T}\} $ is%
\begin{equation}
\bar{Y}_{1,\mathcal{T}}-\sum_{g=2}^{N_{G}+1}w_{g}\bar{Y}_{g,\mathcal{T}%
}-\sum_{t=1}^{\mathcal{T}-1}\alpha _{t}\bar{Y}_{1,t}+\sum_{g=2}^{N_{G}+1}%
\sum_{t=1}^{\mathcal{T}-1}w_{g}\alpha _{t}\bar{Y}_{g,t},  \label{eqn:syn_did}
\end{equation}%
where $\alpha _{t}$ represents the coefficient from the horizontal
regression of $\bar{Y}_{\mathcal{T}}$ on its lagged values. With a suitably
chosen time-series prediction for the outcomes at $t=\mathcal{T}$, the
augmented synthetic control (ASC) estimator of \cite{ben2021augmented} also
takes the above form. Note that if we set the temporal weights $\alpha _{t}$
to be $\mathbf{1}\{t=\mathcal{T}-1\}$, then the estimator in (\ref%
{eqn:syn_did}) becomes our estimator in (\ref{eqn:degenerate-formula}).
Theorem 1 of \cite{shen2023same} shows that horizontal and vertical OLS
regressions with the minimum $\ell _{2}$-norm solution, when necessary,
produce a numerically identical estimate of the ATT, which is also
numerically identical to the corresponding SDiD and ASC estimates.\footnote{%
The numerical equivalence of the four methods also holds if the weights $%
\left\{ \alpha _{t}\right\} $ and $\left\{ w_{g}\right\} $ are estimated
from a ridge regression with an $\ell _{2}$ penalty. Their difference lies
in how the weights are estimated.} In contrast, our estimator integrates a
first-difference comparison with vertical regression, distinguishing it from
existing approaches, including horizontal and vertical regressions, SDiD,
and ASC. While we adopt the \textquotedblleft similar units behave
similarly\textquotedblright\ principle from vertical regression, we depart
from horizontal regression's reliance on the entire history to guide the
future. Instead, we employ the notion that \textquotedblleft comparable
trends evolve similarly,\textquotedblright\ focusing on trends derived from
only two time periods: the pre-treatment and post-treatment periods. In
addition, it is useful to reiterate that we accommodate covariate
information at the individual level, while existing SC and SDiD methods typically operate on aggregate-level data and can only incorporate aggregate-level covariates.

\section{Estimation and inference}

\label{sec:asymptotics}

The identification result in Theorem \ref{thm:identification} establishes
that estimators based on $\phi$ are consistent for $\theta$ as long as
either Assumption \ref{asm:PT} or Assumption \ref{asm:SC} is satisfied.
However, the distribution of the estimator varies depending on the
underlying identification assumption. In this section, we derive the
asymptotic distributions under each assumption and present a unified
multiplier bootstrap method for inference.

\subsection{Semiparametric estimation}

Suppose that we have decided on a nonparametric method for estimating each
of $m_{\Delta }$, $p$, and $w$ (we will specify each method in detail
later). We implement the following cross-fitting procedure: Equally divide
the data along the cross-sectional dimension into $L$ folds with the size of
each fold being $n/L$. For notational simplicity, we assume that $n/L$ is an
integer. For $\ell =1,\ldots ,L$, let $I_{\ell }$ denote the index set of
the cross-sectional units in the $\ell $th fold and $I_{\ell
}^{c}=\bigcup_{\ell ^{\prime }\neq \ell }I_{\ell ^{\prime }}$ the index set
of the cross-sectional units not in the $\ell $th fold. For an observation $%
X_{i}$ with index $i\in I_{\ell }$, we use the subsample with indices in $%
I_{\ell }^{c}$ to construct the nonparametric estimates $\hat{m}_{\Delta
}^{\ell }(X_{i})$, $\hat{p}^{\ell }(X_{i})$, and $\hat{w}^{\ell }(X_{i})$,
where the superscript $\ell $ signifies the fact that each of the three
nonparametric estimators is constructed using data in $I_{\ell }^{c}$. The
semiparametric estimator of $\theta $ is constructed as%
\begin{equation}
\hat{\theta}=\frac{1}{n}\sum_{\ell =1}^{L}\sum_{i\in I_{\ell }}\phi (S_{i};%
\hat{m}_{\Delta }^{\ell },\hat{p}^{\ell },\hat{w}^{\ell };\hat{\pi}_{1}),
\label{eqn:theta-hat-def}
\end{equation}
where $\hat{\pi}_{1}=\sum_{i=1}^{n}\mathcal{G}_{1i}/n$ is the sample average
estimator for $\pi _{1}$.

\paragraph{Asymptotic theory under parallel trends}

In the absence of Assumption \ref{asm:SC}, when only Assumption \ref{asm:PT}
holds, the synthetic weight $w$ may not be well-defined, meaning that there
may not exist any $w$ such that (\ref{eqn:w-id}) is satisfied. However, for
deriving the asymptotic distribution of $\hat{\theta}$, it is necessary for
the random quantity $\hat{w}$ to converge to a probability limit $\omega
\equiv \plim \hat{w}$. This limit $\omega$ can be interpreted as a
pseudo-true set of weights that minimizes the discrepancy between the
left-hand and right-hand sides of (\ref{eqn:w-id}).

The following theorem establishes the asymptotic distribution of the
estimator under the parallel trends condition. Throughout this paper,
asymptotics are considered with a fixed number of groups $N_{G}$, time
periods $\mathcal{T}$, and cross-fitting folds $L$, while the
cross-sectional sample size $n$ grows to infinity. To\ simplify the
presentation, we let $w_{1}\left( \cdot \right) $ and $\hat{w}_{1}\left(
\cdot \right) $ be the constant function $\mathbf{1}(\cdot )$ with $\mathbf{1%
}(x)=1$ for all $x\in \mathcal{X}$, a convention that will be used
throughout the rest of the paper.

\begin{theorem}
\label{thm:asymp-dist-1} Let Assumptions \ref{asm:NA}, \ref{asm:panel}, \ref%
{asm:overlap}, \ref{asm:PT}, and the following conditions hold:

\begin{enumerate}
\item For each $g\geq 1$, $\mathbb{E}[(\Delta Y-m_{\Delta }(X))^{2}|X,G=g]$, 
$1/p_{g}$, and $\omega _{g}$ are bounded functions in $X$.\footnote{%
Assuming that $1/p_{g}$ is bounded for all $g$ implies that the minimum
group size approaches infinity almost surely.}

\item The first-stage estimators satisfy that (1) the estimated weights sum
to one and are bounded in probability, i.e., $\sum_{g\geq 2}\hat{w}%
_{g}\left( x\right) =1$ for all $x\in \mathcal{X}$ and $\lVert \hat{w}%
_{g}\rVert _{\infty }=O_{p}(1)$ for all $g$, (2) the estimators are $L_{2}$%
-consistent, i.e., 
\begin{align*}
\left\Vert \hat{m}_{\Delta }-m_{\Delta }\right\Vert _{L_{2}(F_{X})}&
=o_{p}(1), \\
\left\Vert \hat{p}_{1}/\hat{p}_{g}-p_{1}/p_{g}\right\Vert _{L_{2}(F_{X})}&
=o_{p}(1),\text{ }g\geq 2, \\
\left\Vert \hat{w}_{g}-\omega _{g}\right\Vert _{L_{2}(F_{X})}& =o_{p}(1),%
\text{ }g\geq 2,
\end{align*}%
and (3) their rates satisfy that 
\begin{equation*}
\left\Vert \hat{p}_{1}/\hat{p}_{g}-p_{1}/p_{g}\right\Vert
_{L_{2}(F_{X})}\left\Vert \hat{m}_{\Delta }-m_{\Delta }\right\Vert
_{L_{2}(F_{X})}=o_{p}(n^{-1/2}),\text{ }g\geq 2.
\end{equation*}
\end{enumerate}

Then $\sqrt{n}(\hat{\theta}-\theta )\overset{d}{\rightarrow }N(0,V_{\mathrm{%
PT}})$, where the asymptotic variance is 
\begin{align}
V_{\mathrm{PT}}& \equiv \frac{1}{\pi _{1}^{2}}\Bigg(\mathbb{E}\Bigg[\mathcal{%
G}_{1}(\Delta Y-m_{1,\Delta }(X))^{2}+\mathcal{G}_{1}(m_{1,\Delta
}(X)-m_{\Delta }(X)-\theta )^{2}  \notag  \label{eqn:V1} \\
& \quad +\sum_{g=2}^{N_{G}+1}\omega _{g}(X)^{2}\mathcal{G}_{g}\frac{%
p_{1}(X)^{2}}{p_{g}(X)^{2}}(\Delta Y-m_{\Delta }(X))^{2}\Bigg]\Bigg),
\end{align}%
with $m_{1,\Delta }\equiv \mathbb{E}[\Delta Y|X,G=1]$.
\end{theorem}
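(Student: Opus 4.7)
The plan is to establish asymptotic normality via a cross-fitting decomposition, exploiting the partial Neyman orthogonality that the parallel-trends assumption confers on $\phi$ together with the sum-to-one constraint on the weights. Setting $\psi(S;m,p,w)\equiv\pi_1\phi(S;m,p,w;\pi_1)$ and using the identity $\theta\hat\pi_1=n^{-1}\sum_i\theta\mathcal{G}_{1i}$, I would first rewrite
$$
\sqrt{n}(\hat\theta-\theta)=\hat\pi_1^{-1}\cdot n^{-1/2}\sum_{i=1}^n\bigl\{\psi(S_i;\hat m_\Delta^{\ell(i)},\hat p^{\ell(i)},\hat w^{\ell(i)})-\theta\,\mathcal{G}_{1i}\bigr\},
$$
where $\ell(i)$ denotes the fold containing unit $i$. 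Since $\hat\pi_1\overset{p}{\to}\pi_1>0$, Slutsky's lemma reduces the problem to proving that the numerator converges to $N(0,\pi_1^2 V_{\mathrm{PT}})$.

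The crux is bounding the first-stage remainder $R_n\equiv n^{-1/2}\sum_i\{\psi(S_i;\hat m^\ell,\hat p^\ell,\hat w^\ell)-\psi(S_i;m_\Delta,p,\omega)\}$. Writing $r_g\equiv p_1/p_g$ and $\hat r_g^\ell\equiv\hat p_1^\ell/\hat p_g^\ell$, direct algebra gives
$$
\psi(S;\hat m^\ell,\hat p^\ell,\hat w^\ell)-\psi(S;m_\Delta,p,\omega)=-\Bigl(\mathcal{G}_1-\sum_{g\ge2}\omega_g r_g\mathcal{G}_g\Bigr)(\hat m^\ell-m_\Delta)-\sum_{g\ge2}\bigl(\hat w_g^\ell\hat r_g^\ell-\omega_g r_g\bigr)\mathcal{G}_g\bigl(\Delta Y-\hat m^\ell\bigr).
$$
I then expand $\hat w\hat r-\omega r=(\hat w-\omega)r+\omega(\hat r-r)+(\hat w-\omega)(\hat r-r)$ and $\Delta Y-\hat m^\ell=(\Delta Y-m_\Delta)-(\hat m^\ell-m_\Delta)$. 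Conditioning on $I_\ell^c$ freezes the nuisance estimates, so each resulting piece is a sum of i.i.d.\ terms whose conditional mean I handle by one of three arguments: the pure $(\hat m^\ell-m_\Delta)$ piece vanishes because $\mathbb{E}[\mathcal{G}_1-\sum\omega_g r_g\mathcal{G}_g\mid X]=p_1(X)\bigl(1-\sum_{g\ge2}\omega_g(X)\bigr)=0$; the pure $(\hat w_g^\ell-\omega_g)$ and $(\hat r_g^\ell-r_g)$ pieces vanish because parallel trends gives $\mathbb{E}[\Delta Y-m_\Delta\mid X,G=g]=0$ for $g\ge2$; and the centered stochastic parts have conditional variance of order $\|\hat m^\ell-m_\Delta\|_{L_2}^2+\sum_g\bigl(\|\hat w_g^\ell-\omega_g\|_{L_2}^2+\|\hat r_g^\ell-r_g\|_{L_2}^2\bigr)=o_p(1)$ under the boundedness and $L_2$-consistency conditions, so Markov's inequality delivers $o_p(1)$ after the $n^{-1/2}$ scaling.

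The delicate piece, and the main obstacle, is the cross term $(\hat w_g^\ell-\omega_g)(\hat m^\ell-m_\Delta)$, which a priori would require an unstated product-rate condition. Its conditional mean, after summing over $g\ge2$, equals
$$
\mathbb{E}\!\left[p_1(X)(\hat m^\ell-m_\Delta)(X)\sum_{g\ge2}(\hat w_g^\ell-\omega_g)(X)\,\bigg|\,I_\ell^c\right]=0,
$$
where the cancellation uses precisely the sum-to-one property imposed on $\hat w^\ell$ and inherited by $\omega$. The only cross term that genuinely needs a product-rate bound is $(\hat r_g^\ell-r_g)(\hat m^\ell-m_\Delta)$, whose conditional mean is controlled at $o_p(n^{-1/2})$ by the stated rate hypothesis; the triple cross term $(\hat w-\omega)(\hat r-r)(\hat m-m_\Delta)$ is even smaller and absorbed by Cauchy--Schwarz and the uniform bounds. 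Combining these bounds gives $R_n=o_p(1)$.

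A standard i.i.d.\ CLT applied to the oracle sum $n^{-1/2}\sum_i[\psi(S_i;m_\Delta,p,\omega)-\theta\mathcal{G}_{1i}]$ combined with Slutsky's lemma for $\hat\pi_1^{-1}$ then delivers a Gaussian limit with variance $\mathbb{E}[(\psi-\theta\mathcal{G}_1)^2]/\pi_1^2$. To match the formula in~(\ref{eqn:V1}), I would write $\psi-\theta\mathcal{G}_1=\mathcal{G}_1(\Delta Y-m_\Delta-\theta)-\sum_g\omega_g r_g\mathcal{G}_g(\Delta Y-m_\Delta)$, square it, and use $\mathcal{G}_1\mathcal{G}_g=0$ for $g\neq 1$ together with $\mathcal{G}_g^2=\mathcal{G}_g$ to eliminate all cross products across groups. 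Within the treated-group block, decomposing $\Delta Y-m_\Delta-\theta=(\Delta Y-m_{1,\Delta})+(m_{1,\Delta}-m_\Delta-\theta)$ produces a further cross term that integrates to zero because $\mathbb{E}[\Delta Y-m_{1,\Delta}\mid X,G=1]=0$, leaving exactly the three summands that appear in $V_{\mathrm{PT}}$.
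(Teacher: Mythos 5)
Your proposal is correct and follows essentially the same route as the paper's proof: a cross-fitted decomposition of the numerator into pieces whose conditional means vanish either by parallel trends (the $\hat w$ and $\hat r$ perturbations against $\mathcal{G}_g(\Delta Y-m_\Delta)$) or by the sum-to-one constraint (the $\hat m$ perturbation and the $(\hat w-\omega)(\hat m-m_\Delta)$ cross term), with Markov/conditional-variance bounds for the centered parts, the product-rate condition used only for the $(\hat r-r)(\hat m-m_\Delta)$ term, and then a CLT plus Slutsky and the same $m_{1,\Delta}$ variance decomposition. The only cosmetic difference is that you expand $\hat w\hat r-\omega r$ into three pieces while the paper keeps $\hat w_g$ attached to the $(\hat r-r)$ terms ($E_1$ and $E_5$) and controls them via $\lVert\hat w_g\rVert_\infty=O_p(1)$, which is the same argument.
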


The asymptotic variance $V_{\mathrm{PT}}$ in (\ref{eqn:V1}) can be expressed
as the variance of the weighted average of the efficient influence functions
for the ATT in a canonical $2\times 2$ DiD design, as characterized in
Proposition 1 of \cite{sant2020doubly}. More specifically, in the $2\times 2$
design with a single control group $g$, the efficient influence function for
the ATT is 
\begin{align*}
\mathbb{IF}_{g}& \equiv \frac{1}{\pi _{1}}\Bigg(\mathcal{G}_{1}(\Delta
Y-m_{1,\Delta }(X))+\mathcal{G}_{1}(m_{1,\Delta }(X)-m_{\Delta }(X)-\theta )
\\
& \quad +\mathcal{G}_{g}\frac{p_{1}(X)}{p_{g}(X)}(\Delta Y-m_{\Delta }(X))%
\Bigg).
\end{align*}%
Equation (\ref{eqn:V1-calculation}) in the proof shows that $V_{\mathrm{PT}}$
is the variance of the weighted average of all such $\mathbb{IF}_{g}$, that
is, $V_{\mathrm{PT}}=\mathbb{E}[(\sum_{g\geq 2}w_{g}(X)\mathbb{IF}_{g})^{2}]$%
. For illustration, in a stylized scenario with a single control group ($g=2$%
) and a trivial weight $w_{g}\equiv 1$, $V_{\mathrm{PT}}$ reduces to the
efficiency bound derived in \cite{sant2020doubly}. Alternatively, if one
disregards the finer group structure and instead pools all control groups
into a single aggregated control group, the asymptotic variance of the
doubly robust DiD estimator in \cite{sant2020doubly} becomes 
\begin{align}
V_{\text{\textrm{pool}}}& =\frac{1}{\pi _{1}^{2}}\Bigg(\mathbb{E}\Bigg[%
\mathcal{G}_{1}(\Delta Y-m_{1,\Delta }(X))^{2}+\mathcal{G}_{1}(m_{1,\Delta
}(X)-m_{\Delta }(X)-\theta )^{2}  \notag  \label{eqn:drdid-var} \\
& \quad +\sum_{g=2}^{N_{G}+1}\left( \frac{p_{g}(X)}{p_{-1}(X)}\right) ^{2}%
\mathcal{G}_{g}\frac{p_{1}(X)^{2}}{p_{g}(X)^{2}}(\Delta Y-m_{\Delta }(X))^{2}%
\Bigg]\Bigg),
\end{align}%
where $p_{-1}(X)\equiv \mathbb{P}(G\neq 1|X)$ and the weight $%
p_{g}(X)/p_{-1}(X)$ reflects the fraction of units in control group $g$
relative to all control units. The variance $V_{\text{\textrm{pool}}}$ in (%
\ref{eqn:drdid-var}) uses the weight $p_{g}(X)/p_{-1}(X)$, which is distinct
from the synthetic control weight $\omega _{g}\left( X\right) $ in this
paper. In general, there is no dominance relation between $V_{\mathrm{PT}}$
and $V_{\text{\textrm{pool}}}$.

Theorem \ref{thm:asymp-dist-1} aligns with the framework of double/debiased
machine learning (DML) as discussed in \citep{chernozhukov2018DML}. Under
Assumption \ref{asm:PT}, the moment function $\phi $ satisfies Neyman
orthogonality with respect to the nuisance parameters. This implies that the
asymptotic distribution of the estimator is the same as if the true values
of the nuisance parameters were used, provided that the product of the
estimation errors for $m_{\Delta }$ and $p$ converges at a rate faster than $%
1/\sqrt{n}$. This rate requirement allows for the use of a wide range of
estimators, including machine learning methods or traditional nonparametric
methods, such as kernel-based or sieve-based methods. Notably, there is no
rate requirement for the weight estimator $w$, as long as it converges in
probability to some limit. This is consistent with the result of Theorem \ref%
{thm:identification}(i), which shows that the nuisance weights play a
secondary role under Assumption \ref{asm:PT}.

\paragraph{Asymptotic theory under synthetic control}

Deriving the asymptotic distribution of $\hat{\theta}$ under Assumption \ref%
{asm:SC} is more challenging because $\phi $ does not satisfy Neyman
orthogonality with respect to $p$ and $w$. Consequently, when computing the
asymptotic variance, additional adjustment terms are needed to account for
the first-stage estimation error in estimating these nuisance parameters. To
derive the adjustment terms, we need to find the influence function
associated with their estimation. To keep our paper focused, we examine
kernel-based estimators for the nuisance parameters, although other
nonparametric methods, such as the method of sieves, could also be used.

For any function $f$, define the empirical average operator $P_{n}\left[
\cdot \right] $ as $P_{n}[f(S)]\equiv \frac{1}{n}\sum_{i=1}^{n}f(S_{i})$.
Note that the propensity score enters the estimand only through the ratios $%
r_{1,g}\equiv p_{1}/p_{g}$. Using the fact that the ratio $r_{1,g}$
minimizes the objective function $\mathbb{E}[\rho (r(X),\mathcal{G})]$ for 
\begin{equation*}
\rho (r(X),\mathcal{G})\equiv r(X)^{2}\mathcal{G}_{g}-2r(X)\mathcal{G}_{1},
\end{equation*}%
we construct the following local polynomial regression:\footnote{%
An alternative approach involves estimating the propensity scores using local polynomial regression and combining them into ratio estimates. Under suitable regularity conditions, this estimator has the same Bahadur representation as in Theorem \ref{thm:asymp-dist-2}(ii).}
\begin{equation}
\hat{r}_{1,g}(x)=\iota _{1}^{\prime }\argmin_{(\beta _{0},\ldots ,\beta _{%
\bar{s}})}P_{n}\left[ K\left( \frac{X-x}{h}\right) \rho \left( \sum_{s=0}^{%
\bar{s}}\beta _{s}(X-x)^{s},\mathcal{G}\right) \right] ,
\label{eqn:hat_r-def}
\end{equation}%
where $h$ is the bandwidth, $K$ is the kernel function, $\iota _{1}\equiv
(1,0,\ldots ,0)^{\prime }$ is a vector of length $\bar{s}+1$ with $1$ in the
first position and $0$ elsewhere, and $\bar{s}$ is the order of the local
polynomial.\footnote{%
Here, we focus on the case with a scalar $X.$ The general case with a vector 
$X$ requires only notational changes.} To simplify the presentation, we
define $r_{1,g}(\cdot )$ and $\hat{r}_{1,g}(\cdot )$ to be the constant
function $\mathbf{1}(\cdot )$ when $g=1.$

Second, the weights $w$ are determined by solving the system of
identification equations involving the outcome functions $m_{g,t}(x)\equiv 
\mathbb{E}[Y_{t}|G=g,X=x]$. Let $\bm{w}_{0}\equiv (w_{2},\dots
,w_{N_{G}})^{\prime }$ denote the vector of weights excluding the entry for
the last group. Since the weights sum to one, the weight for the last group
can be expressed as $w_{N_{G}+1}=1-\mathbf{1}_{N_{G}-1}^{\prime }\bm{w}_{0}$%
, where $\mathbf{1}_{N_{G}-1}$ is a vector of ones with length $N_{G}-1$. By
Assumption \ref{asm:SC}, the weights can be identified via the equation: 
\begin{equation*}
\begin{pmatrix}
m_{2,1} & \cdots  & m_{N_{G}+1,1} \\ 
\vdots  &  & \vdots  \\ 
m_{2,\mathcal{T}-1} & \cdots  & m_{N_{G}+1,\mathcal{T}-1}%
\end{pmatrix}%
\begin{pmatrix}
\bm{w}_{0}^{\prime } \\ 
1-\mathbf{1}_{N_{G}-1}^{\prime }\bm{w}_{0}%
\end{pmatrix}%
=%
\begin{pmatrix}
m_{1,1} \\ 
\vdots  \\ 
m_{1,\mathcal{T}-1}.%
\end{pmatrix}%
\end{equation*}%
After rearranging terms, we obtain that $M\bm{w}_{0}=m_{1}$, where 
\begin{equation*}
M\equiv 
\begin{pmatrix}
m_{-1,1}^{\prime } \\ 
m_{-1,2}^{\prime } \\ 
\vdots  \\ 
m_{-1,\mathcal{T}-1}^{\prime }%
\end{pmatrix}%
,m_{1}\equiv 
\begin{pmatrix}
m_{1,1}-m_{N_{G}+1,1} \\ 
\vdots  \\ 
m_{1,\mathcal{T}-1}-m_{N_{G}+1,\mathcal{T}-1}%
\end{pmatrix}%
,
\end{equation*}%
with $m_{-1,t}\equiv (m_{2,t}-m_{N_{G}+1,t},\ldots
,m_{N_{G},t}-m_{N_{G}+1,t})^{\prime }$. Provided that $\mathcal{T}\geq N_{G}$
and $M^{\prime }M$ is invertible, we can solve for $\bm{w}_{0}$ as%
\begin{equation}
\bm{w}_{0}=\argmin_{u}(Mu-m_{1})^{\prime }(Mu-m_{1})=(M^{\prime
}M)^{-1}M^{\prime }m_{1}.  \label{eqn:hat_w-def}
\end{equation}%
This is similar to solving an ordinary least squares problem. Let $\hat{M}$
and $\hat{m}_{1}$ denote the respective estimators of $M$ and $m_{1}$,
obtained by replacing each $m_{g,t}$ with the corresponding $\hat{m}_{g,t}$.
The weight estimator is constructed as $\hat{\bm{w}}_{0}\equiv (\hat{M}%
^{\prime }\hat{M})^{-1}\hat{M}^{\prime }\hat{m}_{1}$. Thus, the estimation
of $w$ and $m_{\Delta }$ reduces to estimating $m_{g,t}$, which can be
accomplished using another local polynomial regression:%
\begin{equation}
\hat{m}_{g,t}(x)=\iota _{1}^{\prime }\argmin_{(\beta _{0},\ldots ,\beta _{%
\bar{s}})}P_{n}\left[ K\left( \frac{X-x}{h}\right) \left( Y_{t}-\sum_{s=0}^{%
\bar{s}}\beta _{s}(X-x)^{s}\right) ^{2}\mathcal{G}_{g}\right] .
\label{eqn:hat_m_gt-def}
\end{equation}

\begin{theorem}
\label{thm:asymp-dist-2} Let Assumptions \ref{asm:NA}, \ref{asm:panel}, \ref%
{asm:overlap}, \ref{asm:SC}, conditions (i) and (ii) of Theorem \ref%
{thm:asymp-dist-1}, and the following conditions hold:

\begin{enumerate}
\item The convergence rates of the nuisance estimators satisfy 
\begin{equation*}
\left\Vert \hat{m}_{g,t}-m_{g,t}\right\Vert _{L_{2}(F_{X})},\left\Vert \hat{r%
}_{1,g}-r_{1,g}\right\Vert _{L_{2}(F_{X})},\left\Vert \hat{w}%
_{g}-w_{g}\right\Vert _{L_{2}(F_{X})}=o_{p}(n^{-1/4}),\text{ for all }g.
\end{equation*}

\item Let $h=o(n^{-1/4})$ be an undersmoothing bandwidth, and let $K$ be a
symmetric probability density function satisfying $\int_{-\infty }^{\infty
}u^{2}K(u)du<\infty $. The nuisance estimators admit the following uniform
Bahadur representations: for all $g$, 
\begin{align*}
\hat{r}_{1,g}(x)-r_{1,g}(x)& =P_{n}\left[ \frac{K_{h}\left( X-x\right) }{%
f_{X}(x)}\frac{\mathcal{G}_{1}-r_{1,g}(X)\mathcal{G}_{g}}{p_{g}(X)}\right]
+o_{p}(n^{-1/2}), \\
\hat{m}_{g,t}(x)-m_{g,t}(x)& =P_{n}\left[ \frac{K_{h}\left( X-x\right) }{%
f_{X}(x)}\frac{\mathcal{G}_{g}(Y_{t}-m_{g,t}(X))}{p_{g}(X)}\right]
+o_{p}(n^{-1/2}),1\leq t\leq \mathcal{T}-1,
\end{align*}%
where $f_{X}$ is the marginal density function of $X$, $K_{h}(\cdot )\equiv
K(\cdot /h)/h$, and the $o_{p}$-terms hold uniformly over $x$ $\in \mathcal{X%
}$, which is assumed to be compact.

\item The functions $w_{g}$, $m_{g,t}$, and $p_{g}$, $1\leq g\leq
N_{G}+1,1\leq t\leq \mathcal{T},$ are twice continuously differentiable.
There exists a constant $c>0$ such that the smallest eigenvalue of $%
M(x)^{\prime }M(x)$ is larger than $c$ for all $x\in \mathcal{X}$.
\end{enumerate}

Let $\psi _{\mathrm{SC}}\left( S\right) $ be the influence function defined
in (\ref{eqn:psi_SC}) in the proof, and assume that $V_{\mathrm{SC}}=\mathbb{%
E}\left[ \psi _{\mathrm{SC}}\left( S\right) ^{2}\right] $\ is finite. Then $%
\sqrt{n}(\hat{\theta}-\theta )\overset{d}{\rightarrow }N(0,V_{\mathrm{SC}})$.
\end{theorem}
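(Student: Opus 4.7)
The plan is to reduce the problem to an asymptotically linear representation with a carefully assembled influence function, keeping in mind that under Assumption \ref{asm:SC} the moment $\phi$ is no longer Neyman-orthogonal in $p$ and $w$. I would start from the algebraic decomposition
\begin{equation*}
\sqrt{n}(\hat\theta - \theta) = \sqrt{n}(P_n - \mathbb{E})\bigl[\phi(S; m_\Delta, p, w; \pi_1)\bigr] + \sqrt{n}\,\mathbb{E}\bigl[\phi(S; \hat m_\Delta, \hat p, \hat w; \hat\pi_1) - \phi(S; m_\Delta, p, w; \pi_1)\bigr] + R_n,
\end{equation*}
where $R_n$ gathers the empirical-process remainder. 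Cross-fitting combined with the $L_2$-consistency imposed in condition (ii) of Theorem \ref{thm:asymp-dist-1} forces $R_n = o_p(1)$ without Donsker conditions, and the contribution from replacing $\pi_1$ by $\hat\pi_1$ contributes a tractable sample average that I would track separately. Crucially, Theorem \ref{thm:identification}(ii) says that $\mathbb{E}[\phi(S; \tilde m_\Delta, p, w; \pi_1)]$ is constant in $\tilde m_\Delta$, so the Gateaux derivative of $\phi$ in $m_\Delta$ at the truth vanishes; the first-order contribution from $\hat m_\Delta - m_\Delta$ therefore drops out, and the $o_p(n^{-1/4})$ rates in condition (i) ensure that all quadratic nuisance cross-products are $o_p(n^{-1/2})$.

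What remains are the first-order effects from $\hat r_{1,g} - r_{1,g}$ and $\hat w_g - w_g$. The first is directly amenable to the uniform Bahadur representation in condition (ii); plugging it into the pathwise derivative of $\phi$ with respect to $r_{1,g}$ and applying Fubini, the kernel average $P_n[K_h(X - x)/f_X(x) \cdot (\mathcal{G}_1 - r_{1,g}(X)\mathcal{G}_g)/p_g(X)]$ integrates against the coefficient function to produce, up to an $o_p(n^{-1/2})$ smoothing bias killed by the undersmoothing $h = o(n^{-1/4})$ and twice differentiability, a clean sample average. For $\hat w$, I would first linearize $\hat{\bm w}_0 = (\hat M' \hat M)^{-1} \hat M' \hat m_1$ using the first-order optimality condition $M'(M \bm w_0 - m_1) = 0$, yielding
\begin{equation*}
\hat{\bm w}_0(x) - \bm w_0(x) = \bigl(M(x)'M(x)\bigr)^{-1} M(x)'\bigl[(\hat m_1 - m_1)(x) - (\hat M - M)(x)\,\bm w_0(x)\bigr] + o_p(n^{-1/2}),
\end{equation*}
which is well-defined by the eigenvalue bound in condition (iii). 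Each component of $\hat m_1(x) - m_1(x)$ and $\hat M(x) - M(x)$ then admits the Bahadur representation of condition (ii), and the same Fubini-plus-undersmoothing argument reduces the $\hat w$ contribution to another sample average.

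Collecting the oracle term $(P_n - \mathbb{E})\phi(S; m_\Delta, p, w; \pi_1)$, the $\hat r$ correction, the $\hat w$ correction, and the $\hat\pi_1$ correction produces the influence function $\psi_{\mathrm{SC}}$; applying the Lindeberg-Lévy CLT delivers $\sqrt{n}(\hat\theta - \theta) \xrightarrow{d} N(0, V_{\mathrm{SC}})$ with $V_{\mathrm{SC}} = \mathbb{E}[\psi_{\mathrm{SC}}(S)^2]$, provided this variance is finite as assumed. The hard part is the bookkeeping for the $\hat w$ correction: one must propagate the matrix delta-method linearization through each $m_{g,t}$ for $g = 2, \ldots, N_G + 1$ and $t = 1, \ldots, \mathcal{T}-1$, keep track of the asymmetric role played by the last group (whose weight is eliminated by the sum-to-one constraint), and then combine the resulting $\bar s$-many kernel averages with the Gateaux derivative of $\phi$ in $w$ to obtain a single pointwise influence-function summand. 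Sign and index errors in this step are the most likely failure mode, but once the linearization is correctly tabulated the remaining steps are routine.
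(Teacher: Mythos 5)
Your proposal is correct and follows essentially the same route as the paper's proof: decompose $\sqrt{n}(\hat\theta-\theta)$ into the oracle term plus first-order corrections from $\hat r_{1,g}$ and $\hat w_g$, dispose of the $\hat m_\Delta$ term and all quadratic cross-products via the flatness of $\mathbb{E}[\phi]$ in $m_\Delta$ and the $o_p(n^{-1/4})$ rates, convert the uniform Bahadur representations into sample averages by Fubini/change of variables with undersmoothing and twice differentiability, linearize $\hat{\bm{w}}_0=(\hat M'\hat M)^{-1}\hat M'\hat m_1$ by a matrix delta method, and apply the CLT to the assembled $\psi_{\mathrm{SC}}$. One caution: your compact linearization $\hat{\bm{w}}_0-\bm{w}_0=(M'M)^{-1}M'\bigl[(\hat m_1-m_1)-(\hat M-M)\bm{w}_0\bigr]+o_p(n^{-1/2})$ agrees with the paper's three-term expansion only because Assumption \ref{asm:SC} makes the fit exact, $M(x)\bm{w}_0(x)=m_1(x)$; the normal equations $M'(M\bm{w}_0-m_1)=0$ that you cite would by themselves leave an uncancelled term $(M'M)^{-1}(\hat M-M)'(m_1-M\bm{w}_0)$, so the justification should invoke exact identification of the weights rather than first-order optimality.
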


Since $\phi $ is not Neyman orthogonal under Assumption \ref{asm:SC}, the
required rate of convergence for each nuisance function estimator, specified
in the first condition of Theorem \ref{thm:asymp-dist-2}, has been
strengthened to $o_{p}\left( n^{-1/4}\right) $ compared to Theorem \ref%
{thm:asymp-dist-1}. This rate is typical for first-step nonparametric
estimators \citep{newey1994asymptotic,chen2003estimation}. The second
condition of Theorem \ref{thm:asymp-dist-2} specifies the asymptotic linear
(Bahadur) representations of local polynomial estimators. These
representations are well-established in the literature, with primitive
conditions provided in, for example, \cite{kong2010uniform}.

The asymptotic properties of our estimator fall within the broader framework
of semiparametric two-step estimation theory %
\citep[e.g.,][]{chen2003estimation}. Our main contribution in Theorem \ref%
{thm:asymp-dist-2} is to derive the adjustment to the asymptotic variance to
account for the nuisance estimators, which corresponds to Condition (2.6) in 
\cite{chen2003estimation}. This derivation is particularly challenging in
our setting due to the sophisticated way in which the nuisance conditional
mean functions $m_{g,t}\left( \cdot \right) $ enter the estimating equation
through the synthetic control weights.

While Theorem \ref{thm:asymp-dist-2} derives the asymptotic distribution of $%
\hat{\theta}$ under a local polynomial specification for the nuisance
estimators, our result is expected to extend more broadly. The same
asymptotic distribution should hold when alternative nonparametric
estimation methods, such as sieve estimators, are employed. This is because,
according to \cite{newey1994asymptotic}, the adjustment term in the
asymptotic variance should be the same regardless of the estimator used,
though the technical details may differ.

The nonparametric nuisance estimators discussed thus far pertain to the case
of continuous covariates. When discrete covariates are present, a natural
approach is to partition the data according to their levels and perform the
estimation procedure separately within each partition. The theoretical
framework remains valid for each subgroup defined by the discrete
covariates. This approach will be implemented in our empirical analysis in
Section \ref{sec:empirical}.

\subsection{Bootstrap inference}

To conduct inference, we propose a multiplier bootstrap method to
approximate the asymptotic distribution of $\hat{\theta}$. While analytical
standard error estimators can be derived in principle, they involve
complicated expressions, particularly under the synthetic control condition
(cf. the asymptotic variance $V_{\mathrm{SC}}$). More importantly, these variance
formulas depend on the identification assumption, which is unknown in
practice. The bootstrap method circumvents the complex variance formulas and
provides a unified approach to inference, as the bootstrap asymptotic
distribution converges to the true asymptotic distribution of the estimator,
regardless of the identification assumption.

Let $\mathcal{W}_{n}\equiv (W_{1},\ldots,W_{n})$ denote the bootstrap
weights. For any function $f$, define the bootstrap operator $P_{n}^{\ast
}[\cdot]$ as $P_{n}^{\ast }[f(S)]\equiv \frac{1}{n}%
\sum_{i=1}^{n}W_{i}f(S_{i})$, which represents an empirical operator
constructed using the bootstrap weights, and should not be confused with any
empirical measure. Let $\hat{\pi}_{1}^{\ast }\equiv P_{n}^{\ast }[\mathcal{G}%
_{1}]$ denote the bootstrap estimator for $\pi_{1}$.

To construct the bootstrap nuisance estimators $\hat{r}_{1,g}^{\ast }$ and $%
\hat{m}_{g,t}^{\ast }$, we follow the same procedure as the local polynomial
estimators in equations (\ref{eqn:hat_r-def}) to (\ref{eqn:hat_m_gt-def}),
but replace the empirical operator $P_{n}$ with the bootstrap operator $%
P_{n}^{\ast }$. The bootstrap synthetic weights estimator $\hat{w}_{g}^{\ast
}$ is constructed using (\ref{eqn:hat_w-def}), replacing $m_{g,t}$ with $%
\hat{m}_{g,t}^{\ast }$. In the multiplier bootstrap process, we use the same
sample splitting procedure as before. Specifically, for an observation $%
X_{i} $ with index $i\in I_{\ell }$, we use the subsamples $S_{i}$ and
weights $W_{i}$ with indices in $I_{\ell }^{c}$ to compute the nonparametric
estimates $\hat{m}_{g,t}^{\ast ,\ell }(X_{i})$, $\hat{r}_{1,g}^{\ast ,\ell
}(X_{i})$, and $\hat{w}^{\ast ,\ell }(X_{i})$. The bootstrap estimator for $%
\theta $ is%
\begin{equation}
\hat{\theta}^{\ast }\equiv \frac{1}{n}\sum_{\ell =1}^{L}\sum_{i\in I_{\ell
}}W_{i}\phi (S_{i};\hat{m}_{\Delta }^{\ast ,\ell },\hat{r}^{\ast ,\ell },%
\hat{w}^{\ast ,\ell };\hat{\pi}_{1}^{\ast }).  \label{eqn:theta-hat-star-def}
\end{equation}%
With a slight abuse of notation, we use, in the above, the ratios $\hat{r}^{\ast ,\ell }\equiv
(\hat{r}_{1,g}^{\ast ,\ell },g=2,\ldots ,N_{G}+1)$ in place of the
raw propensity scores in $\phi$.

The bootstrap weights are required to satisfy the following mild conditions.

\begin{comment} 
	\begin{namedassumption}{BW}[Bootstrap Weights] 
\label{asm:bootstrap-weights}
The bootstrap weights $\{W_i\}_{i=1}^n$ are iid across $i=1,\ldots, n$ and independent from the data $\mathcal{S}_n = (S_i, 1 \leq i \leq n)$. In addition,
   \(\mathbb{E}[W_i] = 1\), \(\text{var}(W_i) = 1\), and there exists some $\delta>0$ such that \(\mathbb{E}[|W_i|^{2+\delta}] < \infty\).
\end{namedassumption}
\end{comment}

This multiplier bootstrap approach differs from the nonparametric bootstrap,
where the bootstrap weights $(W_{1},\ldots,W_{n})$ follow a multinomial
distribution. Assumption \ref{asm:bootstrap-weights} excludes the
nonparametric bootstrap because its weights are not independent.\ This
dependence introduces additional technical challenges in establishing
bootstrap consistency. Consequently, we consider only the multiplier
bootstrap in this paper. Nonetheless, the multiplier bootstrap remains a
practical, widely used, and theoretically valid inference method, as
demonstrated by our next theorem.

Before presenting the theorem, we introduce the following convention: for
any random variable $\Delta _{n}$ that is a function of both $\mathcal{W}%
_{n} $ and $\mathcal{S}_{n}$, we say that $\Delta _{n}=o_{p}(1)$ if it
converges in probability to zero under the joint distribution of $(\mathcal{W%
}_{n},\mathcal{S}_{n})$, which follows a product law due to the independence
between $\mathcal{W}_{n}$ and $\mathcal{S}_{n}$. We say that $\Delta
_{n}=O_{p}(1)$ if it is bounded in probability under the joint distribution
of $(\mathcal{W}_{n},\mathcal{S}_{n})$.

\begin{theorem}
\label{thm:bootstrap} Let Assumption \ref{asm:bootstrap-weights} hold.

\begin{enumerate}
\item Let the assumptions of Theorem \ref{thm:asymp-dist-1} hold. We further
assume that (1) the elements of $\hat{w}^{\ast }$ sum to one and are bounded
in probability, i.e., $\sum_{g\geq 2}\hat{w}_{g}^{\ast }(x)=1$ for all $x\in 
\mathcal{X}$ and $\lVert \hat{w}_{g}^{\ast }\rVert _{\infty }=O_{p}(1),g\geq
2$, (2) the bootstrap nuisance estimators are $L_{2}$-consistent, i.e., 
\begin{align*}
\left\Vert \hat{m}_{\Delta }^{\ast }-m_{\Delta }\right\Vert _{L_{2}(F_{X})}&
=o_{p}(1), \\
\left\Vert \hat{p}_{1}^{\ast }/\hat{p}_{g}^{\ast }-p_{1}/p_{g}\right\Vert
_{L_{2}(F_{X})}& =o_{p}(1),g\geq 2, \\
\left\Vert \hat{w}_{g}^{\ast }-\omega _{g}\right\Vert _{L_{2}(F_{X})}&
=o_{p}(1),g\geq 2,
\end{align*}%
and (3) their rates satisfy that 
\begin{equation*}
\left\Vert \hat{p}_{1}^{\ast }/\hat{p}_{g}^{\ast }-p_{1}/p_{g}\right\Vert
_{L_{2}(F_{X})}\left\Vert \hat{m}_{\Delta }^{\ast }-m_{\Delta }\right\Vert
_{L_{2}(F_{X})}=o_{p}(n^{-1/2}),g\geq 2.
\end{equation*}%
In addition, assume that $\mathbb{E}[|\phi (S;m_{\Delta },p,w;\pi
_{1})-\theta \mathcal{G}_{1}/\pi _{1}|^{2+\delta }]<\infty $ for some
positive $\delta $. Then 
\begin{equation}
\sup_{a\in \mathbb{R}}\left\vert \mathbb{P}^{\ast }\left( \sqrt{n}(\hat{%
\theta}^{\ast }-\hat{\theta})\leq a\big|\mathcal{S}_{n}\right) -\mathbb{P}%
\left( \sqrt{n}(\hat{\theta}-\theta )\leq a\right) \right\vert =o_{p}(1),
\label{eqn:bootstrap-consistency}
\end{equation}%
where $\mathbb{P}^{\ast }(\cdot |\mathcal{S}_{n})$ is the conditional
probability distribution given the sample $\mathcal{S}_{n}$.

\item Let the assumptions of Theorem \ref{thm:asymp-dist-2} hold. We further
assume that the bootstrap nuisance estimators satisfy the following
conditions: for all $g$, 
\begin{equation*}
\left\Vert \hat{m}_{g,t}^{\ast }-m_{g,t}\right\Vert
_{L_{2}(F_{X})},\left\Vert \hat{r}_{1,g}^{\ast }-r_{1,g}\right\Vert
_{L_{2}(F_{X})},\left\Vert \hat{w}_{g}^{\ast }-w_{g}\right\Vert
_{L_{2}(F_{X})}=o_{p}(n^{-1/4}),
\end{equation*}%
and 
\begin{align*}
\hat{r}_{1,g}^{\ast }(x)-r_{1,g}(x)& =P_{n}^{\ast }\left[ \frac{K_{h}\left(
X-x\right) }{f_{X}(x)}\frac{\mathcal{G}_{1}-r_{1,g}(X)\mathcal{G}_{g}}{%
p_{g}(X)}\right] +o_{p}(n^{-1/2}), \\
\hat{m}_{g,t}^{\ast }(x)-m_{g,t}(x)& =P_{n}^{\ast }\left[ \frac{K_{h}\left(
X-x\right) }{f_{X}(x)}\frac{\mathcal{G}_{g}(Y_{t}-m_{g,t}(X))}{p_{g}(X)}%
\right] +o_{p}(n^{-1/2}),1\leq t\leq \mathcal{T}-1,
\end{align*}%
uniformly over $x\in \mathcal{X}$.\footnote{%
We define $\hat{r}_{1,g}^{\ast }\left( \cdot \right) $ and $\hat{w}%
_{g}^{\ast }\left( \cdot \right) $ to be the constant function $\boldsymbol{1%
}\left( \cdot \right) $ when $g=1.$} In addition, assume that the influence
function $\psi _{\mathrm{SC}}\left( S\right) $ defined in (\ref{eqn:psi_SC})
has finite $2+\delta $ moment for some positive $\delta $. Then (\ref%
{eqn:bootstrap-consistency}) holds.
\end{enumerate}
\end{theorem}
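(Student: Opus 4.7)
My strategy is to establish, in both parts, an asymptotic linear representation of the form
\begin{equation*}
\sqrt{n}\bigl(\hat{\theta}^{\ast}-\hat{\theta}\bigr)
=\sqrt{n}\bigl(P_{n}^{\ast}-P_{n}\bigr)\bigl[\psi_{\cdot}(S)\bigr]+o_{p}(1),
\end{equation*}
where $\psi_{\cdot}(S)$ denotes the influence function governing $\sqrt{n}(\hat{\theta}-\theta)$ in the corresponding Theorem \ref{thm:asymp-dist-1} or \ref{thm:asymp-dist-2}. For Part (i), this influence function is $\psi_{\mathrm{PT}}(S)\equiv\phi(S;m_{\Delta},p,\omega;\pi_{1})-\theta\mathcal{G}_{1}/\pi_{1}$, while for Part (ii) it is $\psi_{\mathrm{SC}}(S)$ as defined in (\ref{eqn:psi_SC}). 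Once this representation is in hand, conditional convergence of $\sqrt{n}(P_{n}^{\ast}-P_{n})[\psi_{\cdot}(S)]=n^{-1/2}\sum_{i=1}^{n}(W_{i}-1)\psi_{\cdot}(S_{i})$ to $N(0,V_{\cdot})$ follows from a conditional Lindeberg--Feller CLT, using the $2+\delta$ moment assumptions on both $W_{i}$ and $\psi_{\cdot}(S)$. Finally, continuity of the limiting normal distribution together with Polya's theorem upgrades pointwise conditional weak convergence to the uniform statement (\ref{eqn:bootstrap-consistency}), after matching with the limit of $\sqrt{n}(\hat{\theta}-\theta)$ provided by Theorems \ref{thm:asymp-dist-1} and \ref{thm:asymp-dist-2}.

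\textbf{Carrying out Part (i).} The linearization in the parallel-trends case mirrors, but acts jointly on $(P_{n},P_{n}^{\ast})$, the proof of Theorem \ref{thm:asymp-dist-1}. Neyman orthogonality of $\phi$ with respect to $(m_{\Delta},p)$ and the irrelevance of $w$ under Assumption \ref{asm:PT} imply that plug-in errors enter only through products $\lVert\hat{p}_{1}^{\ast}/\hat{p}_{g}^{\ast}-p_{1}/p_{g}\rVert_{L_{2}(F_{X})}\cdot\lVert\hat{m}_{\Delta}^{\ast}-m_{\Delta}\rVert_{L_{2}(F_{X})}=o_{p}(n^{-1/2})$, whose bootstrap analog is imposed directly in the theorem. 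The normalization by $\hat{\pi}_{1}^{\ast}$ is handled via a delta-method expansion, noting that $\hat{\pi}_{1}^{\ast}-\hat{\pi}_{1}=(P_{n}^{\ast}-P_{n})[\mathcal{G}_{1}]+o_{p}(n^{-1/2})$, which produces the $-\theta\mathcal{G}_{1}/\pi_{1}$ recentering in $\psi_{\mathrm{PT}}$. Cross-fitting guarantees that the nuisance estimators on fold $\ell$ are independent of $(S_{i},W_{i})_{i\in I_{\ell}}$, so standard conditional-mean-zero arguments absorb the stochastic-equicontinuity terms into $o_{p}(n^{-1/2})$.

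\textbf{Carrying out Part (ii).} Because $\phi$ fails Neyman orthogonality under Assumption \ref{asm:SC}, the linearization must carry additional adjustment terms induced by $\hat{m}_{g,t}^{\ast}-m_{g,t}$ and $\hat{r}_{1,g}^{\ast}-r_{1,g}$. I would substitute the bootstrap Bahadur representations assumed in Theorem \ref{thm:bootstrap}(ii) into the first-order expansion of $\phi$ around $(m_{g,t},r_{1,g},w_{g})$, and then use Fubini to reverse the order of the kernel-weighted double sums. Together with undersmoothing ($h=o(n^{-1/4})$) and smoothness of the underlying functions, this expresses the adjustment precisely as $(P_{n}^{\ast}-P_{n})$ applied to the same correction functionals that appear in the construction of $\psi_{\mathrm{SC}}$ in the proof of Theorem \ref{thm:asymp-dist-2}. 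Combining the leading term and the adjustments yields the required linearization with $\psi_{\mathrm{SC}}$.

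\textbf{Main obstacle.} The hardest step is Part (ii)'s linearization: one must verify that when the bootstrap weights $(W_{i})$ are inserted into the kernel-based representation, the resulting V-statistic-style double sums collapse, after Hoeffding decomposition, to the same $\psi_{\mathrm{SC}}$ as in the non-bootstrap case, with all remainder terms being $o_{p}(n^{-1/2})$ under the joint law of $(\mathcal{W}_{n},\mathcal{S}_{n})$. This requires a careful accounting that the bootstrap multipliers do not spoil the kernel-based bias/variance trade-off---in particular, that the undersmoothing condition controls the smoothing bias uniformly in a way that is robust to the multiplier structure, and that independence between $\mathcal{W}_{n}$ and $\mathcal{S}_{n}$, together with $\mathbb{E}[W_{i}]=1$ and $\mathrm{Var}(W_{i})=1$, lets the bootstrap versions of the stochastic-equicontinuity and negligibility arguments proceed in parallel with the originals.
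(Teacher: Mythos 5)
Your proposal follows essentially the same route as the paper's proof: linearize $\hat{\theta}^{\ast}$ and $\hat{\theta}$ around the same influence function ($\psi_{\mathrm{PT}}$ or $\psi_{\mathrm{SC}}$), using Neyman orthogonality plus the bootstrap product-rate condition in Part (i) and the bootstrap Bahadur representations with the kernel change-of-variables/Fubini argument in Part (ii), then conclude by a conditional Lindeberg--Feller multiplier CLT under the $2+\delta$ moment conditions and Polya's theorem for the uniform statement. The only detail you leave implicit, which the paper handles via Lemma 3 of Cheng and Huang, is the conversion of remainder terms from $o_{p}(1)$ under the joint law of $(\mathcal{W}_{n},\mathcal{S}_{n})$ to $o_{p}^{\ast}(1)$ conditionally on $\mathcal{S}_{n}$, but this does not change the substance of the argument.
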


In Theorem \ref{thm:bootstrap}, all the $o_{p}\left( \cdot \right) $ and $%
O_{p}\left( \cdot \right) $ terms in the conditions are understood to hold
under the joint distribution of the bootstrap weights $\mathcal{W}_{n}$ and
the sample $\mathcal{S}_{n}$. These conditions are straightforward to
verify, as the multiplier bootstrap weights are independent of the sample.

In the bootstrap setting, there is another notion of convergence in
probability, which we use in the proof of Theorem \ref{thm:bootstrap}. We
say that $\Delta _{n}=o_{p}^{\ast }(1)$ if for any $\delta >0$, $\mathbb{P}%
^{\ast }(|\Delta _{n}|>\delta $ $|$ $\mathcal{S}_{n})=o_{p}(1)$. Similarly,
there is another notion of boundedness in probability. We say that $\Delta
_{n}=O_{p}^{\ast }(1)$ if for any $\delta >0$, there exists a positive
constant $C_{\delta }$ such that $\mathbb{P}(\mathbb{P}^{\ast }(|\Delta
_{n}|>C_{\delta }|\mathcal{S}_{n})>\delta )\rightarrow 0$. In the proof of
the theorem, we utilize Lemma 3 from \cite{cheng2010bootstrap} to convert $%
o_{p}(1)$ and $O_{p}(1)$ terms, respectively, into $o_{p}^{\ast }(1)$ and $%
O_{p}^{\ast }(1)$ terms, thereby establishing the result of conditional weak
convergence.

The difference between the bootstrap theory of the estimator under parallel
trends and synthetic control is analogous to that in estimation theory.
Under the parallel trends assumption, the conditions on the bootstrap
estimators of the nuisance functions are relatively weaker (only requiring
rate conditions) due to orthogonality.\footnote{%
Notably, these bootstrap estimators of the nuisance functions can coincide
with the original sample estimators in Theorem \ref{thm:asymp-dist-1}, as
they automatically satisfy the rate conditions. This is because the moment
function is insensitive to the nuisance estimators. However, this approach
cannot be implemented in practice, as it fails under the synthetic control
condition.} Under the synthetic control assumption, the bootstrap nuisance
estimators are additionally required to satisfy the Bahadur representation
adjusted by the bootstrap weights.

To construct an $\alpha$-level bootstrap confidence interval based on Theorem \ref{thm:bootstrap}, we take the $\frac{\alpha}{2}$ and $(1-\frac{\alpha}{2})$ quantiles of $\hat{\theta}^{\ast }-\hat{\theta}$, denoted as $(\hat{\theta}^{\ast }-\hat{\theta})_{\frac{\alpha}{2}}$ and $(\hat{\theta}^{\ast }-\hat{\theta})_{1-\frac{\alpha}{2}}$, respectively. The confidence interval is then given by 
\begin{equation*}
\left[ \hat{\theta}-(\hat{\theta}^{\ast }-\hat{\theta})_{1-\frac{\alpha}{2}},\; 
       \hat{\theta}-(\hat{\theta}^{\ast }-\hat{\theta})_{\frac{\alpha}{2}}\right].
\end{equation*}

Finally, we emphasize that Theorems \ref{thm:asymp-dist-1}, \ref%
{thm:asymp-dist-2}, and \ref{thm:bootstrap} do not describe the asymptotics
of two different estimators. Rather, they analyze the same estimator $\hat{%
\theta}$ and its bootstrap counterpart, as defined in (\ref%
{eqn:theta-hat-def}) and (\ref{eqn:theta-hat-star-def}), respectively, under
different identification assumptions. In addition, we impose different
assumptions on the estimators of the nuisance functions. Under the synthetic
control condition, the requirements for the nuisance estimators are more
explicit --- and potentially more restrictive --- compared to the rate
requirements under the parallel trends assumption, as the former does not
satisfy Neyman orthogonality.

\section{Repeated cross-sectional data}

\label{sec:rc}

In practice, the ideal panel data structure, where we have repeated
observations of the same individuals or units across multiple time periods,
is not always available. Researchers commonly utilize repeated
cross-sectional data as an alternative, where observations are collected
across distinct time periods without tracking the same individuals. This
data structure is prevalent in scenarios where longitudinal linkage is
either inherently impractical or ruled out by design. Notable applications
include administrative records of episodic events, such as insurance claims %
\citep{meyer1995workers} and car accidents \citep{cohen2003effects},
longitudinal analyses of fixed-age cohorts over extended periods \citep{corak2001death}, and nationally representative surveys, such as the
Current Population Survey \citep{acemoglu2001consequences} and the Canadian
General Social Surveys \citep{finkelstein2002effect}, where different
individuals are sampled in each wave.

In repeated cross-sectional data, we do not observe $\{Y_{t}:t=1, \ldots,%
\mathcal{T}\}$ directly. Instead, each data entry includes a time indicator
variable $T\in \{1,\ldots ,\mathcal{T}\}$, which specifies the time period
for that entry. Denote $\mathfrak{T}_{t}\equiv \mathbf{1}\{T=t\}$. The
observed outcome is then given by 
\begin{equation*}
Y\equiv \sum_{t=1}^{\mathcal{T}}\mathfrak{T}_{t}Y_{t}.
\end{equation*}%
The observed variables in the repeated cross-sectional dataset are $S^{\func{%
rc}}\equiv (Y,G,T,X)$. We impose the following conditions on the sampling
scheme.

\begin{comment} 
	\begin{namedassumption}{RCS}[Repeated Cross-Sectional Sampling] \label{asm:rc}
    We observe an iid sample $\{S^{\operatorname{rc}}_i = (Y_i, G_i, T_i, X_i): 1 \leq i \leq n\}$ of $S^{\operatorname{rc}} = (Y, G, T, X)$.
\end{namedassumption}
\end{comment}

\begin{comment} 
	\begin{namedassumption}{TI}[Time Invariance] \label{asm:time-invariance}
The joint distribution of the outcomes, covariates, and group indicator is independent of the time indicator $T$, i.e.,
\begin{align*}
 (Y_1,\ldots,Y_{\mathcal{T}},X,G)\indep T.  
\end{align*}
Furthermore, for each period 
$t$, $\lambda_t \equiv \mathbb{P}(T=t)>0.$
\end{namedassumption}
\end{comment}

Under the repeated cross-sectional sample scheme in Assumption \ref{asm:rc}, 
$T_{i}$ is a categorical variable indicating the time period to which
observation $i$ belongs, and $\mathfrak{T}_{ti}=\mathbf{1}\{T_{i}=t\}$ is a
binary variable indicating whether observation $i$ belongs to time period $t$%
.

Assumption \ref{asm:time-invariance} aligns with the time invariance
conditions commonly found in the DiD literature, such as Assumption 3.3 in 
\cite{abadie2005semiparametric}. Under this assumption, the joint
distribution of $(Y,G,X,T)$ can be expressed through the following mixture
representation: 
\begin{equation*}
\mathbb{P}(Y\leq y,G=g,X\leq x,T=t)=\lambda _{t}\mathbb{P}(Y_{t}\leq
y,G=g,X\leq x).
\end{equation*}%
Similar to the repeated cross-sectional settings discussed in \cite%
{sant2020doubly} and \cite{callaway2021difference}, Assumption \ref%
{asm:time-invariance} excludes compositional changes, that is, changes in
the distribution of $G$ and $X$ over time. Recent work by \cite%
{sant2023difference} has explored compositional changes in the DiD setting.
We leave the investigation of such changes within our framework for future
research.

Since each $Y_{t}$ is not directly observable in repeated cross-sections,
the nuisance parameters $m_{g,t}$ and $m_{\Delta }$ need to be redefined. We
introduce some new notations. Let $\mu _{g,t}\left( x\right) \equiv \mathbb{E%
}[\mathfrak{T}_{t}Y|G=g,X=x]$ and $\mu _{G\neq 1,t}\left( x\right) \equiv 
\mathbb{E}[\mathfrak{T}_{t}Y|G\neq 1,X=x]$, which are directly identifiable
based on the observation of $S^{rc}$. Then, under Assumption \ref%
{asm:time-invariance}, 
\begin{align*}
\mu _{g,t}\left( x\right) & =\mathbb{E}[\mathfrak{T}_{t}Y|G=g,X=x,\mathfrak{T%
}_{t}=1]\Pr \left( \mathfrak{T}_{t}=1|G=g,X=x\right) \\
& =\mathbb{E}[Y|G=g,X=x,\mathfrak{T}_{t}=1]\cdot \lambda _{t}\equiv
m_{g,t}\left( x\right) \lambda _{t},
\end{align*}%
where $m_{g,t}\left( x\right) =\mathbb{E}[Y|G=g,X=x,\mathfrak{T}_{t}=1].$
Similarly, under Assumption \ref{asm:time-invariance}, 
\begin{equation*}
\mu _{G\neq 1,t}\left( x\right) =\mathbb{E}[Y|G\neq 1,X=x,\mathfrak{T}%
_{t}=1]\cdot \lambda _{t}\equiv m_{G\neq 1,t}\left( x\right) \lambda _{t},
\end{equation*}%
where $m_{G\neq 1,t}\left( x\right) =\mathbb{E}[Y|G\neq 1,X=x,\mathfrak{T}%
_{t}=1].$ Define 
\begin{equation*}
m_{\Delta }\left( x\right) \equiv m_{G\neq 1,\mathcal{T}}\left( x\right)
-m_{G\neq 1,\mathcal{T}-1}\left( x\right) =\frac{\mu _{G\neq 1,\mathcal{T}%
}\left( x\right) }{\lambda _{\mathcal{T}}}-\frac{\mu _{G\neq 1,\mathcal{T}%
-1}\left( x\right) }{\lambda _{\mathcal{T}-1}}.
\end{equation*}

We will still maintain Assumption \ref{asm:PT} or \ref{asm:SC}. To
facilitate interpretation, we rewrite them in an equivalent form. For
Assumption \ref{asm:PT}, we express it as: 
\begin{align*}
& \mathbb{E}\left[ Y_{\mathcal{T}}\left( 0\right) |G=1,X\right] -\mathbb{E}%
\left[ Y_{\mathcal{T-}1}\left( 0\right) |G=1,X\right] \\
=\,& \mathbb{E}\left[ Y_{\mathcal{T}}\left( 0\right) |G=g,X\right] -\mathbb{E%
}\left[ Y_{\mathcal{T-}1}\left( 0\right) |G=g,X\right]
\end{align*}%
or equivalently 
\begin{align*}
& \mathbb{E}\left[ Y\left( 0\right) |G=1,X,\mathfrak{T}_{\mathcal{T}}=1%
\right] -\mathbb{E}\left[ Y\left( 0\right) |G=1,X,\mathfrak{T}_{\mathcal{T}%
-1}=1\right] \\
=\,& \mathbb{E}\left[ Y\left( 0\right) |G=g,X,\mathfrak{T}_{\mathcal{T}}=1%
\right] -\mathbb{E}\left[ Y\left( 0\right) |G=g,X,\mathfrak{T}_{\mathcal{T}%
-1}=1\right] .
\end{align*}%
This says that the change in the conditional mean of the baseline outcome
for the treatment group from period $\mathcal{T}-1$ to period $\mathcal{T}$
is the same as that for any of the control groups. For each of the two time
periods, the conditional mean is taken only over the observable individuals.

For Assumption \ref{asm:SC}, we express it as 
\begin{equation*}
\mathbb{E}\left[ Y\left( 0\right) |G=1,X=x,\mathfrak{T}_{t}=1\right]
=\sum_{g=2}^{N_{G}+1}w_{g}\left( x\right) \mathbb{E}\left[ Y\left( 0\right)
|G=g,X=x,\mathfrak{T}_{t}=1\right] ,
\end{equation*}%
for $t=1,\ldots ,\mathcal{T}$. So the synthetic control condition is imposed
on the conditional means of the baseline outcome for the observable
individuals in each group. The weight $w_{g}\left( x\right) $ can be
identified from the following equations: 
\begin{equation*}
\mu _{1,t}\left( x\right) =\sum_{g=2}^{N_{G}+1}w_{g}\left( x\right) \mu
_{g,t}\left( x\right) \text{ for }t=1,\ldots ,\mathcal{T}-1,
\end{equation*}%
where the factor $1/\lambda _{t}$ on both sides cancels out. Similar to (\ref%
{eqn:hat_w-def}), we can solve for $\bm{w}_{0}$, which is the vector of
weights excluding the last entry, as $\bm{w}_{0}=(M_{\text{rc}}^{\prime }M_{%
\text{rc}})^{-1}M_{\text{rc}}^{\prime }m_{1,\text{rc}}$, where 
\begin{equation*}
M_{\text{rc}}\equiv 
\begin{pmatrix}
\mu _{-1,1}^{\prime } \\ 
\mu _{-1,2}^{\prime } \\ 
\vdots  \\ 
\mu _{-1,\mathcal{T}-1}^{\prime }%
\end{pmatrix}%
,m_{1,\text{rc}}\equiv 
\begin{pmatrix}
\mu _{1,1}-\mu _{N_{G}+1,1} \\ 
\vdots  \\ 
\mu _{1,\mathcal{T}-1}-\mu _{N_{G}+1,\mathcal{T}-1}%
\end{pmatrix}%
,
\end{equation*}%
with $\mu _{-1,t}\equiv (\mu _{2,t}-\mu _{N_{G}+1,t},\ldots ,\mu
_{N_{G},t}-\mu _{N_{G}+1,t})^{\prime }$.

We write the doubly robust moment function for $\theta $ in the repeated
cross-sectional setting as 
\begin{align*}
& \phi_{\text{rc}}(S^{\text{rc}};\mu _{G\neq 1,\mathcal{T}},\mu _{G\neq 1,%
\mathcal{T}-1},p,w;\pi _{1},\lambda _{\mathcal{T}},\lambda _{\mathcal{T}-1})
\\
=\,& \frac{1}{\pi _{1}}\left( \mathcal{G}_{1}-\sum_{g=2}^{N_{G}+1}\mathcal{G}%
_{g}w_{g}(X)\frac{p_{1}(X)}{p_{g}(X)}\right) \left( \left( \frac{\mathfrak{T}%
_{\mathcal{T}}}{\lambda _{\mathcal{T}}}-\frac{\mathfrak{T}_{\mathcal{T}-1}}{%
\lambda _{\mathcal{T}-1}}\right) Y-\left( \frac{\mu _{G\neq 1,\mathcal{T}}(X)%
}{\lambda _{\mathcal{T}}}-\frac{\mu _{G\neq 1,\mathcal{T}-1}(X)}{\lambda _{%
\mathcal{T}-1}}\right) \right) .
\end{align*}

\begin{corollary}
\label{thm:rc-identification} Let Assumptions \ref{asm:NA}, \ref{asm:overlap}%
, \ref{asm:rc}, and \ref{asm:time-invariance} hold. \ 

\begin{enumerate}
\item If Assumption \ref{asm:PT} holds, then 
\begin{align*}
\theta & =\mathbb{E}[\phi _{\mathrm{rc}}(S^{\mathrm{rc}};\mu _{G\neq 1,%
\mathcal{T}},\mu _{G\neq 1,\mathcal{T}-1},\tilde{p},\tilde{w};\pi
_{1},\lambda _{\mathcal{T}},\lambda _{\mathcal{T}-1})] \\
& =\mathbb{E}[\phi _{\mathrm{rc}}(S^{\mathrm{rc}};\tilde{\mu}_{G\neq 1,%
\mathcal{T}},\tilde{\mu}_{G\neq 1,\mathcal{T}-1},p,\tilde{w};\pi
_{1},\lambda _{\mathcal{T}},\lambda _{\mathcal{T}-1})],
\end{align*}%
for any vector of weight functions $\Tilde{w}$ whose elements sum to one,
any non-zero functions $\Tilde{p}$, and any functions $(\tilde{\mu}_{G\neq 1,%
\mathcal{T}},\tilde{\mu}_{G\neq 1,\mathcal{T}-1})$.

\item If Assumption \ref{asm:SC} holds, then 
\begin{equation*}
\theta =\mathbb{E}[\phi _{\mathrm{rc}}(S^{\mathrm{rc}};\tilde{\mu}_{G\neq 1,%
\mathcal{T}},\tilde{\mu}_{G\neq 1,\mathcal{T}-1},p,w;\pi _{1},\lambda _{%
\mathcal{T}},\lambda _{\mathcal{T}-1})],
\end{equation*}%
for any functions $(\tilde{\mu}_{G\neq 1,\mathcal{T}},\tilde{\mu}_{G\neq 1,%
\mathcal{T}-1})$, where $w$ is the vector of weight functions satisfying (%
\ref{eqn:w-id}).
\end{enumerate}
\end{corollary}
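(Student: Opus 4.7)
The plan is to reduce Corollary \ref{thm:rc-identification} to Theorem \ref{thm:identification} by showing that, under the time-invariance assumption, the expectation of the repeated-cross-sectional moment function $\phi_{\mathrm{rc}}$ coincides with the expectation of the panel moment function $\phi$ evaluated on the latent potential-outcome process. Once this identity is established, the double-robustness and the two identification statements follow directly from parts (i) and (ii) of Theorem \ref{thm:identification}.

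The key building block is a pair of reweighting identities. Under Assumption \ref{asm:time-invariance}, $T$ is independent of $(Y_{1},\ldots,Y_{\mathcal{T}},G,X)$ and $\mathfrak{T}_{t}Y=\mathfrak{T}_{t}Y_{t}$, so for any measurable function $f(G,X)$ and any $t\in\{\mathcal{T}-1,\mathcal{T}\}$,
\[
\mathbb{E}\!\left[f(G,X)\,\tfrac{\mathfrak{T}_{t}}{\lambda_{t}}\,Y\right]
=\tfrac{1}{\lambda_{t}}\,\mathbb{E}[\mathfrak{T}_{t}]\,\mathbb{E}[f(G,X)Y_{t}]
=\mathbb{E}[f(G,X)Y_{t}].
\]
Moreover, under the same assumption, $\mu_{G\neq 1,t}(X)/\lambda_{t}=m_{G\neq 1,t}(X)=\mathbb{E}[Y_{t}\mid G\neq 1,X]$, so
\[
\tfrac{\mu_{G\neq 1,\mathcal{T}}(X)}{\lambda_{\mathcal{T}}}-\tfrac{\mu_{G\neq 1,\mathcal{T}-1}(X)}{\lambda_{\mathcal{T}-1}}
=m_{G\neq 1,\mathcal{T}}(X)-m_{G\neq 1,\mathcal{T}-1}(X)=m_{\Delta}(X),
\]
where $m_{\Delta}$ is exactly the nuisance function used in the panel-setting $\phi$.

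With these identities in hand, I would take $f(G,X)=\mathcal{G}_{1}-\sum_{g\geq 2}w_{g}(X)\tfrac{p_{1}(X)}{p_{g}(X)}\mathcal{G}_{g}$, expand $\phi_{\mathrm{rc}}$ term-by-term, and apply the identities to the $Y$-containing terms while pulling the purely $X$-dependent adjustment $m_{\Delta}(X)$ out of the conditional expectation given $(G,X)$. This yields
\[
\mathbb{E}[\phi_{\mathrm{rc}}(S^{\mathrm{rc}};\mu_{G\neq 1,\mathcal{T}},\mu_{G\neq 1,\mathcal{T}-1},p,w;\pi_{1},\lambda_{\mathcal{T}},\lambda_{\mathcal{T}-1})]
=\mathbb{E}[\phi(S;m_{\Delta},p,w;\pi_{1})].
\]
Invoking Theorem \ref{thm:identification}(i) then gives part (i) of the corollary, and Theorem \ref{thm:identification}(ii) gives part (ii). For the mixed-nuisance versions, the argument is the same after noting that substituting $\tilde p$, $\tilde w$, or $(\tilde\mu_{G\neq 1,\mathcal{T}},\tilde\mu_{G\neq 1,\mathcal{T}-1})$ only replaces deterministic functions of $(G,X)$ in $\phi_{\mathrm{rc}}$; in particular, $\tilde\mu$ maps to $\tilde m_{\Delta}(X)\equiv\tilde\mu_{G\neq 1,\mathcal{T}}(X)/\lambda_{\mathcal{T}}-\tilde\mu_{G\neq 1,\mathcal{T}-1}(X)/\lambda_{\mathcal{T}-1}$ after the same pull-out step, which does not require $\tilde\mu$ itself to obey any time-invariance structure.

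The main obstacle is purely bookkeeping: one has to confirm that the substituted $\tilde\mu$'s still collapse to a $\tilde m_{\Delta}$-style function of $X$ alone, so that the reduction to the panel $\phi$ survives arbitrary nuisance misspecification. This is immediate once one recognizes that $\tilde\mu(X)/\lambda_{t}$ is a deterministic function of $X$ and thus behaves identically to the pullout of $m_{\Delta}(X)$ in the argument above; no new probabilistic content is needed beyond the two identities displayed earlier.
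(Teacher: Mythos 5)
Your proposal is correct and follows essentially the same route as the paper: under Assumption \ref{asm:time-invariance} the expectation of $\phi_{\mathrm{rc}}$ reduces to that of the panel moment $\phi$ (with $\Delta Y$ and $m_{\Delta}(X)$ replaced by their reweighted repeated-cross-section counterparts), after which both parts follow from Theorem \ref{thm:identification}; your reweighting identity and the observation that substituted $\tilde{\mu}$'s collapse to an arbitrary $\tilde{m}_{\Delta}$ simply spell out the details the paper leaves implicit.
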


For estimation, we consider the same cross-fitting method as in the panel
setting. The estimator for $\theta $ is 
\begin{equation*}
\hat{\theta}_{\text{rc}}\equiv \frac{1}{n}\sum_{\ell =1}^{L}\sum_{i\in
I_{\ell }}\phi _{\text{rc}}(S_{i}^{\text{rc}};\hat{\mu}_{G\neq 1,\mathcal{T}%
}^{\ell },\hat{\mu}_{G\neq 1,\mathcal{T}-1}^{\ell },\hat{p}^{\ell },\hat{w}%
^{\ell };\hat{\pi}_{1},\hat{\lambda}_{\mathcal{T}},\hat{\lambda}_{\mathcal{T}%
-1}),
\end{equation*}%
where $\hat{\mu}_{G\neq 1,\mathcal{T}}^{\ell },\hat{\mu}_{G\neq 1,\mathcal{T}%
-1}^{\ell },\hat{p}^{\ell },$ and $\hat{w}^{\ell }$ are nuisance estimators
constructed using samples not in the $\ell $th fold, and $\hat{\lambda}%
_{t}\equiv P_{n}\mathfrak{T}_{t}$ is the estimator for $\lambda _{t}$, for $%
t=\mathcal{T}-1,\mathcal{T}$.

\begin{theorem}
\label{thm:rc-PT} Let Assumptions \ref{asm:NA}, \ref{asm:overlap}, \ref%
{asm:PT}, \ref{asm:rc}, \ref{asm:time-invariance}, and the following
conditions hold:

\begin{enumerate}
\item For each $g\geq 1$, $\mathbb{E}\Big[\Big(\frac{\mathfrak{T}_{\mathcal{T%
}}Y-\mu _{G\neq 1,\mathcal{T}}(X)}{\lambda _{\mathcal{T}}}-\frac{\mathfrak{T}%
_{\mathcal{T}-1}Y-\mu _{G\neq 1,\mathcal{T}-1}(X)}{\lambda _{\mathcal{T}-1}}%
\Big)^{2}\mid X,G=g\Big]$, $1/p_{g}$, and $\omega _{g}$ are bounded
functions in $X$.

\item The first-stage estimators satisfy that (1) the estimated weight
functions sum to one and are bounded in probability, i.e., $\sum_{g\geq 2}%
\hat{w}_{g}\left( x\right) =1$ for all $x\in \mathcal{X}$ and $\lVert \hat{w}%
_{g}\rVert _{\infty }=O_{p}(1),g\geq 2$, (2) the estimators are $L_{2}$%
-consistent, i.e., 
\begin{align*}
\left\Vert \hat{\mu}_{G\neq 1,t}-\mu _{G\neq 1,t}\right\Vert
_{L_{2}(F_{X})}& =o_{p}(1),t=\mathcal{T}-1,\mathcal{T}, \\
\left\Vert \hat{p}_{1}/\hat{p}_{g}-p_{1}/p_{g}\right\Vert _{L_{2}(F_{X})}&
=o_{p}(1),g\geq 2, \\
\left\Vert \hat{w}_{g}-\omega _{g}\right\Vert _{L_{2}(F_{X})}&
=o_{p}(1),g\geq 2,
\end{align*}%
and (3) their rates satisfy that 
\begin{equation*}
\left\Vert \hat{p}_{1}/\hat{p}_{g}-p_{1}/p_{g}\right\Vert
_{L_{2}(F_{X})}\left\Vert \hat{\mu}_{G\neq 1,t}-\mu _{G\neq 1,t}\right\Vert
_{L_{2}(F_{X})}=o_{p}(n^{-1/2}),g\geq 2,t=\mathcal{T}-1,\mathcal{T}.
\end{equation*}
\end{enumerate}

Then $\sqrt{n}(\hat{\theta}_{\mathrm{rc}}-\theta )\overset{d}{\rightarrow }%
N(0,V_{\mathrm{PT}}^{\mathrm{rc}})$, where $V_{\mathrm{PT}}^{\text{rc}}=%
\mathbb{E}[\psi _{\mathrm{PT}}^{\text{rc}}(S^{\text{rc}})^{2}]$ and $\psi _{%
\mathrm{PT}}^{\text{rc}}(S^{\text{rc}})$ is the influence function given in (%
\ref{eqn:IF-theta-rc-PT}) in the proof.
\end{theorem}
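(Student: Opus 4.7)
The proof will parallel that of Theorem \ref{thm:asymp-dist-1}, with modifications to account for the repeated cross-sectional structure and the presence of the time-period probabilities $\lambda_{\mathcal{T}}$ and $\lambda_{\mathcal{T}-1}$. The key starting ingredient is that, under Assumption \ref{asm:PT}, the moment function $\phi_{\text{rc}}$ is Neyman orthogonal with respect to the nonparametric nuisance parameters $(\mu_{G\neq 1,\mathcal{T}}, \mu_{G\neq 1,\mathcal{T}-1}, p, w)$ at their true values. I would verify this directly via pathwise derivatives: the factored structure $(\mathcal{G}_{1} - \sum_{g}\mathcal{G}_{g}w_{g}\,p_{1}/p_{g})$ multiplied by a residual-type term delivers the same cancellation mechanism as in the panel case, because under Assumption \ref{asm:PT} the conditional expectation of the residual given $G=g, X$ matches that given $G=1, X$ for every control group.

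Next, I would establish the asymptotic linearization using the cross-fitting argument. Standard DML-style bounds reduce $\hat{\theta}_{\text{rc}} - \theta$ to $P_{n}[\phi_{\text{rc}}(S^{\text{rc}};\mu_{G\neq 1,\mathcal{T}},\mu_{G\neq 1,\mathcal{T}-1},p,w;\pi_{1},\lambda_{\mathcal{T}},\lambda_{\mathcal{T}-1})] - \theta$ plus three kinds of remainders: (i) nuisance-estimation remainders from the nonparametric pieces, (ii) scalar-estimation remainders from $\hat{\pi}_1, \hat{\lambda}_{\mathcal{T}}, \hat{\lambda}_{\mathcal{T}-1}$, and (iii) cross-fitting cross-terms. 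Remainders of type (i) are handled by Neyman orthogonality together with the product rate $\lVert \hat{p}_1/\hat{p}_g - p_1/p_g\rVert_{L_2} \lVert \hat{\mu}_{G\neq 1,t} - \mu_{G\neq 1,t}\rVert_{L_2} = o_p(n^{-1/2})$; $L_{2}$-consistency of $\hat{w}_{g}$ alone suffices because $w$ plays a secondary role, exactly as in Theorem \ref{thm:asymp-dist-1}. Remainders of type (iii) vanish by the usual cross-fitting conditional-independence argument together with the boundedness conditions in the first hypothesis.

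The remaining task is to convert the dependence on $\hat{\pi}_{1}, \hat{\lambda}_{\mathcal{T}}, \hat{\lambda}_{\mathcal{T}-1}$ into their influence-function contributions. Because these scalars appear both as an outer factor $1/\hat{\pi}_{1}$ and inside the bracketed residual via $\mathfrak{T}_{t}/\hat{\lambda}_{t}$, a first-order Taylor expansion around the true values yields
\begin{equation*}
\hat{\theta}_{\text{rc}} - \theta = P_{n}[\phi_{\text{rc}}(S^{\text{rc}};\mu,p,w;\pi_1,\lambda_{\mathcal{T}},\lambda_{\mathcal{T}-1})] - \theta + A_{\pi}(\hat{\pi}_{1} - \pi_{1}) + \sum_{t\in\{\mathcal{T}-1,\mathcal{T}\}} A_{\lambda,t}(\hat{\lambda}_{t} - \lambda_{t}) + o_{p}(n^{-1/2}),
\end{equation*}
where $A_{\pi}$ and $A_{\lambda,t}$ are the partial derivatives of $\mathbb{E}[\phi_{\text{rc}}]$ evaluated at the truth. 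Substituting the elementary expansions $\hat{\pi}_{1} - \pi_{1} = P_{n}[\mathcal{G}_{1} - \pi_{1}]$ and $\hat{\lambda}_{t} - \lambda_{t} = P_{n}[\mathfrak{T}_{t} - \lambda_{t}]$ collects everything into a single empirical average $P_{n}[\psi_{\mathrm{PT}}^{\text{rc}}(S^{\text{rc}})]$, and the classical CLT then yields $\sqrt{n}(\hat{\theta}_{\text{rc}} - \theta) \overset{d}{\rightarrow} N(0, V_{\mathrm{PT}}^{\text{rc}})$ with $V_{\mathrm{PT}}^{\text{rc}} = \mathbb{E}[\psi_{\mathrm{PT}}^{\text{rc}}(S^{\text{rc}})^{2}]$.

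The main obstacle I anticipate is the bookkeeping in the final step. Unlike the panel case, where only $\hat{\pi}_{1}$ contributes a scalar-correction term, here three scalar estimators enter nonlinearly through both the outer $1/\pi_{1}$ factor and the inner $\mathfrak{T}_{t}/\lambda_{t}$ weighting; the $\lambda_{t}$ factors further interact with the conditional expectations via the mixture decomposition $\mu_{g,t}(x) = \lambda_{t}\, m_{g,t}(x)$ implied by Assumption \ref{asm:time-invariance}. Evaluating $A_{\pi}$ and $A_{\lambda,t}$ under the true data-generating process, and checking that the resulting aggregated influence function simplifies cleanly to $\psi_{\mathrm{PT}}^{\text{rc}}$, is the most delicate part of the argument and will rely on careful use of the time-invariance identity $\mathbb{E}[\mathfrak{T}_{t} f(Y_{t},G,X)] = \lambda_{t}\,\mathbb{E}[f(Y_{t},G,X)]$ to cancel $\lambda_t$ factors.
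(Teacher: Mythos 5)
Your proposal is correct and follows essentially the same route as the paper: the nonparametric nuisance remainders are killed by the orthogonality/cancellation structure under Assumption \ref{asm:PT} together with the product rate condition (after replacing $\hat{\lambda}_t$ by $\lambda_t$ at cost $o_p(n^{-1/2})$), and the three scalar estimators $\hat{\pi}_1,\hat{\lambda}_{\mathcal{T}},\hat{\lambda}_{\mathcal{T}-1}$ are handled by a delta-method expansion, which is exactly what the paper does by writing $\tilde{\theta}_{\mathrm{rc}}=h(\hat{\pi}_1,\hat{\lambda}_{\mathcal{T}},\hat{\lambda}_{\mathcal{T}-1},\bar{\varphi}_{\mathcal{T}},\bar{\varphi}_{\mathcal{T}-1})$ and computing the gradient of $h$. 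The "bookkeeping" you flag is precisely that gradient calculation, noting that under Assumption \ref{asm:PT} alone $\mathbb{E}[\varphi_{\mathcal{T}-1}]$ need not vanish, so the $\mathfrak{T}_{\mathcal{T}}$ and $\mathfrak{T}_{\mathcal{T}-1}$ correction terms in $\psi_{\mathrm{PT}}^{\mathrm{rc}}$ do not simplify away.
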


\begin{theorem}
\label{thm:rc-SC} Let Assumptions \ref{asm:NA}, \ref{asm:overlap}, \ref%
{asm:SC}, \ref{asm:rc}, \ref{asm:time-invariance}, conditions (i) and (ii)
of Theorem \ref{thm:rc-PT}, and the following conditions hold:

\begin{enumerate}
\item The convergence rates of the nuisance estimators satisfy 
\begin{equation*}
\left\Vert \hat{\mu}_{g,t}-\mu _{g,t}\right\Vert _{L_{2}(F_{X})},\left\Vert 
\hat{r}_{1,g}-r_{1,g}\right\Vert _{L_{2}(F_{X})},\left\Vert \hat{w}%
_{g}-w_{g}\right\Vert _{L_{2}(F_{X})}=o_{p}(n^{-1/4}),\text{ for all }g.
\end{equation*}

\item Let $h=o(n^{-1/4})$ be an undersmoothing bandwidth, and let $K$ be a
symmetric probability density function satisfying $\int_{-\infty }^{\infty
}u^{2}K(u)\,du<\infty $. The nuisance estimators admit the following uniform
Bahadur representations: for all $g$, 
\begin{align*}
\hat{r}_{1,g}(x)-r_{1,g}(x)& =P_{n}\left[ \frac{K_{h}\left( X-x\right) }{%
f_{X}(x)}\frac{\mathcal{G}_{1}-r_{1,g}(X)\mathcal{G}_{g}}{p_{g}(X)}\right]
+o_{p}(n^{-1/2}), \\
\hat{\mu}_{g,t}(x)-\mu _{g,t}(x)& =P_{n}\left[ \frac{K_{h}\left( X-x\right) 
}{f_{X}(x)}\frac{\mathcal{G}_{g}(\mathfrak{T}_{t}Y-\mu _{g,t}(X))}{p_{g}(X)}%
\right] +o_{p}(n^{-1/2}),1\leq t\leq \mathcal{T}-1,
\end{align*}%
where $K_{h}(\cdot )\equiv K(\cdot /h)/h$ and the $o_{p}$-terms hold
uniformly over $x\in $ $\mathcal{X}$, which is assumed to be compact.

\item The functions $w_{g}$, $\mu _{g,t}$, and $p_{g}$, for $1\leq g\leq
N_{G}+1,1\leq t\leq \mathcal{T},$ are twice continuously differentiable.
There exists a constant $c>0$ such that the smallest eigenvalue of $M_{%
\mathrm{rc}}(x)^{\prime }M_{\mathrm{rc}}(x)$ is larger than $c$ for any $%
x\in \mathcal{X}$.
\end{enumerate}

Let $\psi _{\mathrm{SC}}^{\mathrm{rc}}(S^{\mathrm{rc}})$ be the influence
function defined in (\ref{eqn:IF-theta-rc-SC}) in the proof, and assume that 
$V_{\mathrm{SC}}^{\mathrm{rc}}=\mathbb{E}[\psi _{\mathrm{SC}}^{\mathrm{rc}%
}(S^{\mathrm{rc}})^{2}]$ is finite. Then $\sqrt{n}(\hat{\theta}_{\mathrm{rc}%
}-\theta )\overset{d}{\rightarrow }N(0,V_{\mathrm{SC}}^{\mathrm{rc}}).$
\end{theorem}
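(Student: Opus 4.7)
The plan is to mirror the proof strategy of Theorem \ref{thm:asymp-dist-2}, adapting it to the repeated cross-sectional setting in which the nuisance functions are $\mu_{g,t}$, $r_{1,g}$, and $w_g$, and the outcome residual $\Delta Y - m_\Delta(X)$ is replaced by $(\mathfrak{T}_\mathcal{T}/\lambda_\mathcal{T} - \mathfrak{T}_{\mathcal{T}-1}/\lambda_{\mathcal{T}-1})Y - (\mu_{G\neq 1,\mathcal{T}}(X)/\lambda_\mathcal{T} - \mu_{G\neq 1,\mathcal{T}-1}(X)/\lambda_{\mathcal{T}-1})$. First I would decompose $\sqrt{n}(\hat{\theta}_{\mathrm{rc}} - \theta)$ into (a) the empirical average $\sqrt{n}(P_n \phi_{\mathrm{rc}}(S^{\mathrm{rc}}; \mu_{G\neq 1,\mathcal{T}}, \mu_{G\neq 1,\mathcal{T}-1}, p, w; \pi_1, \lambda_{\mathcal{T}}, \lambda_{\mathcal{T}-1}) - \theta)$ evaluated at the true nuisances, and (b) correction terms attributable to the first-step estimation of each nuisance. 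Because $\phi_{\mathrm{rc}}$ is not Neyman-orthogonal under Assumption \ref{asm:SC} (paralleling Theorem \ref{thm:asymp-dist-2}), the corrections contribute at the $n^{-1/2}$ order and must be carried into the limiting distribution.

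Next, for each nuisance function I would compute the pathwise (Gateaux) derivative of $\mathbb{E}[\phi_{\mathrm{rc}}]$ at the truth. The derivatives with respect to $r_{1,g}$ and $\mu_{G\neq 1, t}$ are directly linear functionals in the respective deviations. The derivative with respect to $w_g$ requires linearizing the closed-form expression $\hat{\bm{w}}_0(x) = (\hat{M}_{\mathrm{rc}}(x)' \hat{M}_{\mathrm{rc}}(x))^{-1} \hat{M}_{\mathrm{rc}}(x)' \hat{m}_{1,\mathrm{rc}}(x)$; by the uniform-in-$x$ invertibility afforded by condition (iii) and the chain rule, this derivative becomes a linear functional in the individual $\hat{\mu}_{g,t}(x) - \mu_{g,t}(x)$ for $g = 1,\ldots,N_G+1$ and $t = 1,\ldots,\mathcal{T}-1$. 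Substituting the uniform Bahadur representations from condition (ii) into each linear functional produces second-order $U$-statistic-type expressions; exchanging the kernel smoothing with the outer expectation via Fubini/symmetry, and using the undersmoothing bandwidth $h = o(n^{-1/4})$ together with the twice-differentiability in condition (iii), each such term reduces to a single sum over the data — an influence-function contribution — up to an $o_p(n^{-1/2})$ kernel-bias remainder. The cross-fitting across folds makes the nuisance estimates used for observation $i$ independent of $S_i^{\mathrm{rc}}$, which delivers the requisite stochastic equicontinuity for free. Finally, $\hat{\pi}_1 = P_n \mathcal{G}_1$ and $\hat{\lambda}_t = P_n \mathfrak{T}_t$ contribute simple linear summands through a standard Taylor expansion of $\phi_{\mathrm{rc}}$ in these scalar arguments.

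Collecting all leading linear terms yields the influence function $\psi_{\mathrm{SC}}^{\mathrm{rc}}(S^{\mathrm{rc}})$ to be displayed as equation \eqref{eqn:IF-theta-rc-SC}. The Lindeberg-Lévy CLT applied to $P_n \psi_{\mathrm{SC}}^{\mathrm{rc}}$, under the assumed finiteness of $V_{\mathrm{SC}}^{\mathrm{rc}} = \mathbb{E}[\psi_{\mathrm{SC}}^{\mathrm{rc}}(S^{\mathrm{rc}})^2]$, then delivers the claimed $N(0, V_{\mathrm{SC}}^{\mathrm{rc}})$ limit.

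The main obstacle is the weight-linearization step. The synthetic weights are implicitly defined through a system of pre-treatment equations involving $\mu_{g,t}(x)$ for all groups $g$ and all pre-treatment periods $t$, and each $\hat{\mu}_{g,t}$ has its own kernel-smoothed leading expansion. Tracking the bookkeeping of these interlinked contributions — and verifying that higher-order cross-products (e.g., products of $\hat{r}_{1,g} - r_{1,g}$ with $\hat{w}_g - w_g$ or with $\hat{\mu}_{g,t} - \mu_{g,t}$) are genuinely $o_p(n^{-1/2})$ using the $o_p(n^{-1/4})$ rate condition in (i) — requires careful accounting. The repeated cross-sectional scaffolding itself introduces only mild additional complexity through the indicator $\mathfrak{T}_t$ and the scalar means $\lambda_t$, both of which enter linearly; the conceptual core of the argument is identical to that in Theorem \ref{thm:asymp-dist-2}.
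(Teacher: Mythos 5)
Your proposal is correct and follows essentially the same route as the paper: decompose $\hat{\theta}_{\mathrm{rc}}$ into the term at the true nuisances plus first-step corrections, note that only the $\hat{r}_{1,g}$ and $\hat{w}_g$ directions contribute (the moment remains orthogonal in $\hat{\mu}_{G\neq 1,t}$ and the cross-products are killed by the $o_p(n^{-1/4})$ rates), linearize the weights through $(\hat{M}_{\mathrm{rc}}'\hat{M}_{\mathrm{rc}})^{-1}\hat{M}_{\mathrm{rc}}'\hat{m}_{1,\mathrm{rc}}$ using the uniform eigenvalue bound, plug in the Bahadur representations with undersmoothing and a change of variables to extract the adjustment summands, handle $\hat{\pi}_1$ and $\hat{\lambda}_t$ by the delta method as in Theorem \ref{thm:rc-PT}, and conclude with the CLT under the assumed finite second moment of $\psi_{\mathrm{SC}}^{\mathrm{rc}}$. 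This matches the paper's proof, which carries out exactly this program by transplanting the $E_{1}$/$E_{2}$ analysis of Theorem \ref{thm:asymp-dist-2} to the terms $\hat{E}_{\mathrm{rc},1g,t}$ and $\hat{E}_{\mathrm{rc},2g,t}$ after replacing $\hat{\lambda}_t$ by $\lambda_t$ at cost $o_p(n^{-1/2})$.
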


Similar to the panel setting, we construct the bootstrap estimator as 
\begin{equation*}
\hat{\theta}_{\text{rc}}^{\ast }\equiv \frac{1}{n}\sum_{\ell
=1}^{L}\sum_{i\in I_{\ell }}W_{i}\phi _{\text{rc}}(S_{i}^{\text{rc}};\hat{\mu%
}_{G\neq 1,\mathcal{T}}^{\ast ,\ell },\hat{\mu}_{G\neq 1,\mathcal{T}%
-1}^{\ast ,\ell },\hat{p}^{\ast ,\ell },\hat{w}^{\ast ,\ell };\hat{\pi}%
_{1}^{\ast },\hat{\lambda}_{\mathcal{T}}^{\ast },\hat{\lambda}_{\mathcal{T}%
-1}^{\ast }),
\end{equation*}%
where $\hat{\lambda}_{t}^{\ast }\equiv P_{n}^{\ast }\mathfrak{T}_{t},t=%
\mathcal{T}-1,\mathcal{T}$, and the bootstrap nonparametric nuisance
estimators are constructed analogously to the panel setting.

\begin{corollary}
\label{cor:rc-bootstrap} Suppose that the bootstrap weights satisfy
Assumption \ref{asm:bootstrap-weights}.

\begin{enumerate}
\item Let the assumptions of Theorem \ref{thm:rc-PT} hold. We further assume
that the bootstrap nuisance estimators satisfy that (1) the elements of $%
\hat{w}^{\ast }$ sum to one and are bounded in probability, i.e., $%
\sum_{g\geq 2}\hat{w}_{g}^{\ast }\left( x\right) =1$, $x\in \mathcal{X}$ and 
$\lVert \hat{w}_{g}^{\ast }\rVert _{\infty }=O_{p}(1),g\geq 2$, (2) the
bootstrap nuisance estimators are $L_{2}$-consistent, i.e., 
\begin{align*}
\left\Vert \hat{\mu}_{G\neq 1,t}^{\ast }-\mu _{G\neq 1,t}\right\Vert
_{L_{2}(F_{X})}& =o_{p}(1),t=\mathcal{T}-1,\mathcal{T}, \\
\left\Vert \hat{p}_{1}^{\ast }/\hat{p}_{g}^{\ast }-p_{1}/p_{g}\right\Vert
_{L_{2}(F_{X})}& =o_{p}(1),g\geq 2, \\
\left\Vert \hat{w}_{g}^{\ast }-\omega _{g}\right\Vert _{L_{2}(F_{X})}&
=o_{p}(1),g\geq 2,
\end{align*}%
and (3) their rates satisfy that 
\begin{equation*}
\left\Vert \hat{p}_{1}^{\ast }/\hat{p}_{g}^{\ast }-p_{1}/p_{g}\right\Vert
_{L_{2}(F_{X})}\left\Vert \hat{\mu}_{G\neq 1,t}^{\ast }-\mu _{G\neq
1,t}\right\Vert _{L_{2}(F_{X})}=o_{p}(n^{-1/2}),g\geq 2,t=\mathcal{T}-1,%
\mathcal{T}.
\end{equation*}%
In addition, assume that the influence function defined in (\ref%
{eqn:IF-theta-rc-PT}) has finite $2+\delta $ moment for some positive $%
\delta $. Then we have%
\begin{equation}
\sup_{a\in \mathbb{R}}\left\vert \mathbb{P}^{\ast }\left( \sqrt{n}(\hat{%
\theta}_{\mathrm{rc}}^{\ast }-\hat{\theta}_{\mathrm{rc}})\leq a|\mathcal{S}%
_{n}\right) -\mathbb{P}\left( \sqrt{n}(\hat{\theta}_{\mathrm{rc}}-\theta
)\leq a\right) \right\vert =o_{p}(1).  \label{eqn:rc-bootstrap-consistency}
\end{equation}

\item Let the assumptions of Theorem \ref{thm:rc-SC} hold. We further assume
that the bootstrap nuisance estimators satisfy the following conditions: for
all $g$, 
\begin{equation*}
\left\Vert \hat{\mu}_{g,t}^{\ast }-\mu _{g,t}\right\Vert
_{L_{2}(F_{X})},\left\Vert \hat{r}_{1,g}^{\ast }-r_{1,g}\right\Vert
_{L_{2}(F_{X})},\left\Vert \hat{w}_{g}^{\ast }-w_{g}\right\Vert
_{L_{2}(F_{X})}=o_{p}(n^{-1/4}),
\end{equation*}%
and 
\begin{align*}
\hat{r}_{1,g}^{\ast }(x)-r_{1,g}(x)& =P_{n}^{\ast }\left[ \frac{K_{h}\left(
X-x\right) }{f_{X}(x)}\frac{\mathcal{G}_{1}-r_{1,g}(X)\mathcal{G}_{g}}{%
p_{g}(X)}\right] +o_{p}(n^{-1/2}), \\
\hat{\mu}_{g,t}^{\ast }(x)-\mu _{g,t}(x)& =P_{n}^{\ast }\left[ \frac{%
K_{h}\left( X-x\right) }{f_{X}(x)}\frac{\mathcal{G}_{g}(\mathfrak{T}%
_{t}Y-\mu _{g,t}(X))}{p_{g}(X)}\right] +o_{p}(n^{-1/2}),1\leq t\leq \mathcal{%
T}-1,
\end{align*}%
uniformly over $x\in \mathcal{X}$. In addition, assume that the influence
function $\psi _{\mathrm{SC}}^{\mathrm{rc}}(S^{\mathrm{rc}})$ defined in (%
\ref{eqn:IF-theta-rc-SC}) has finite $2+\delta $ moment for some positive $%
\delta $. Then (\ref{eqn:rc-bootstrap-consistency}) holds.
\end{enumerate}
\end{corollary}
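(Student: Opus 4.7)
The plan is to mirror the proof of Theorem \ref{thm:bootstrap}, adapting it to the repeated cross-sectional moment function $\phi_{\mathrm{rc}}$. The essential structural change is that $\phi_{\mathrm{rc}}$ replaces the panel difference $\Delta Y - m_\Delta(X)$ with a residualized repeated-cross-section analogue involving $\mathfrak{T}_{\mathcal{T}}/\lambda_{\mathcal{T}}$ and $\mathfrak{T}_{\mathcal{T}-1}/\lambda_{\mathcal{T}-1}$, and introduces two extra scalar nuisance parameters, $\lambda_{\mathcal{T}}$ and $\lambda_{\mathcal{T}-1}$, estimated by sample means. Since the proofs of Theorems \ref{thm:rc-PT} and \ref{thm:rc-SC} have already produced the influence functions $\psi_{\mathrm{PT}}^{\mathrm{rc}}$ and $\psi_{\mathrm{SC}}^{\mathrm{rc}}$ together with the relevant Bahadur-type linearizations for $\hat\mu_{g,t}$, $\hat r_{1,g}$, $\hat w_g$, $\hat\pi_1$, and $\hat\lambda_t$, the corollary should follow by running those same expansions through the bootstrap operator $P_n^*$.

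For part (i) under parallel trends, I would first expand
\[
\sqrt{n}\,(\hat\theta_{\mathrm{rc}}^* - \hat\theta_{\mathrm{rc}}) = \sqrt{n}\,(P_n^* - P_n)\,\phi_{\mathrm{rc}}(S^{\mathrm{rc}}; \mu_{G\ne 1,\mathcal{T}}, \mu_{G\ne 1,\mathcal{T}-1}, p, \omega; \pi_1, \lambda_{\mathcal{T}}, \lambda_{\mathcal{T}-1}) + R_n^*,
\]
where $R_n^*$ collects all terms involving the first-step estimation errors and the plug-in errors for $\hat\pi_1^*,\hat\lambda_t^*$. Because $\phi_{\mathrm{rc}}$ is Neyman-orthogonal in $(\mu_{G\ne 1,\mathcal{T}},\mu_{G\ne 1,\mathcal{T}-1}, p)$ and merely linear (with mean-zero score at the truth) in $w$, the cross-fitted product-rate condition on $(\hat p_1^*/\hat p_g^*,\hat\mu^*_{G\ne 1,t})$ together with the $L_2$-consistency conditions on $\hat w^*$ forces $R_n^* = o_p^*(1)$ by the same Cauchy--Schwarz and independence arguments used in the proof of Theorem \ref{thm:bootstrap}(i), with the added $\hat\lambda_t^*-\lambda_t = O_p^*(n^{-1/2})$ contributions absorbed through the delta-method linearization. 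Conditional weak convergence of the leading term to $N(0,V_{\mathrm{PT}}^{\mathrm{rc}})$ then follows from the conditional multiplier CLT (Lindeberg--Feller, as in Lemma 3 of \citealp{cheng2010bootstrap}), using the finite $2+\delta$ moment of $\psi_{\mathrm{PT}}^{\mathrm{rc}}$.

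For part (ii) under the synthetic control condition, $\phi_{\mathrm{rc}}$ is no longer orthogonal in $p$ or $w$, so the nuisance errors do not vanish at first order and must instead be linearized. I would plug the assumed bootstrap Bahadur representations for $\hat r_{1,g}^*$ and $\hat\mu_{g,t}^*$ into the expansion of $\sqrt{n}(\hat\theta_{\mathrm{rc}}^* - \hat\theta_{\mathrm{rc}})$, propagate them through the closed-form mapping $\bm{w}_0 = (M_{\mathrm{rc}}^\prime M_{\mathrm{rc}})^{-1} M_{\mathrm{rc}}^\prime m_{1,\mathrm{rc}}$ (using the invertibility condition from Theorem \ref{thm:rc-SC}(iii)), and collect the resulting $P_n^*$-averages into $P_n^* \psi_{\mathrm{SC}}^{\mathrm{rc}}(S^{\mathrm{rc}}) + o_p^*(1)$. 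This is the exact bootstrap analogue of the derivation yielding (\ref{eqn:IF-theta-rc-SC}). The $o_p^*(1)$ remainder requires the strengthened $o_p(n^{-1/4})$ rates so that cross-products of nuisance errors are $o_p(n^{-1/2})$, and requires handling the kernel-smoothed terms $P_n^*[K_h(X-x)/f_X(x)\cdot\,\cdot\,]$ as a $U$-process-type object; here I would use Fubini together with undersmoothing ($h = o(n^{-1/4})$) to exchange the $P_n^*$ average with integration against $f_X$, as in the panel proof. Another conditional multiplier CLT then gives convergence to $N(0,V_{\mathrm{SC}}^{\mathrm{rc}})$, and Pólya's theorem upgrades pointwise conditional convergence to the uniform Kolmogorov form (\ref{eqn:rc-bootstrap-consistency}).

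The main obstacle will be the remainder analysis in part (ii): propagating the bootstrap Bahadur expansions of $\hat\mu_{g,t}^*$ through the nonlinear synthetic-weight map and then through the product $(\hat{\mathcal{G}}_1 - \hat w^* \hat r^* \hat{\mathcal{G}}_g)(\cdot)$, while simultaneously controlling the plug-in errors in $\hat\lambda_t^*$ and $\hat\pi_1^*$. Verifying that all cross-terms are $o_p^*(n^{-1/2})$ under the stated rate and moment conditions --- and that the stochastic equicontinuity of the bootstrap-weighted kernel averages holds uniformly in $x\in\mathcal{X}$ --- is where the bulk of the technical work will live; the $2+\delta$ moment hypothesis on $\psi_{\mathrm{SC}}^{\mathrm{rc}}$ is exactly what is needed to invoke the conditional Lindeberg condition in the final step.
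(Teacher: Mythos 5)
Your proposal is correct and follows essentially the same route as the paper: the paper's own proof simply states that one mirrors the argument of Theorem \ref{thm:bootstrap}, derives the asymptotic linear representations of $\sqrt{n}(\hat{\theta}_{\mathrm{rc}}^{\ast}-\theta)$ and $\sqrt{n}(\hat{\theta}_{\mathrm{rc}}-\theta)$ (built on the expansions already obtained in Theorems \ref{thm:rc-PT} and \ref{thm:rc-SC}), takes their difference, and applies Lemma \ref{lm:clt}, omitting the details. Your treatment of the extra scalar nuisances $\hat{\lambda}_{t}^{\ast}$, $\hat{\pi}_{1}^{\ast}$ via the delta-method linearization and the propagation of the bootstrap Bahadur representations through the synthetic-weight map is exactly the omitted bookkeeping, so nothing substantive differs.
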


\section{Staggered treatment adoption}

\label{sec:staggered}

Staggered design, where different groups adopt the treatment at varying
times, is a common feature of panel datasets. Our methodology can be readily
extended to accommodate such staggered designs. Let $D_{t}$ represent the
treatment indicator in period $t$. In the settings of previous sections, $%
D_{t}$ equals one if and only if the unit belongs to the treatment group and
the time period is the final one. Under the staggered design, however, $%
D_{t} $ can equal one for any group and in any time period. We focus on the
common case where the treatment status is irreversible, meaning $D_{t}\geq
D_{t-1}$. In other words, once a unit becomes treated, it remains treated
throughout.

The initial treatment time for an individual, $\min \{t: D_t = 1\}$, is
determined by the group to which the individual belongs. Mathematically,
this means that $\min \{t: D_t = 1\}$ is measurable with respect to the
sigma-algebra generated by $G$. In the empirical contexts where states serve
as the aggregate units, this implies that different states will have their
own timing for policy changes. However, once a state implements the change,
the entire state adopts the new policy uniformly. Additionally, multiple
states are permitted to adopt the treatment at the same time. We denote this
mapping from $G$ to $\min \{t: D_t = 1\}$ as $\gamma(G)$. We adopt the
convention that the minimum element of an empty set is $\infty$. When $%
\gamma(G) = \infty$, the individual remains untreated throughout the entire
time span of the dataset. The eventually treated units are those with $%
\gamma(G) < \infty$.

In general, potential outcomes in a staggered treatment design should be
defined based on all potential treatment trajectories $\left( D_{1},\ldots
,D_{\mathcal{T}}\right) $, and we may denote them as $\tilde{Y}_{t}\left(
d_{1},\ldots ,d_{\mathcal{T}}\right) $. Given the irreversibility of the
treatment, we adopt the simple notation $Y_{t}(s)=\tilde{Y}_{t}\left(
d_{1},\ldots ,d_{\mathcal{T}}\right) $ where $s=\min \{t:d_{t}=1\}.$ That
is, $Y_{t}(s)$ represents the potential outcome at time $t$ had an
individual belonged to a group whose time of treatment adoption was $s$. The observed outcome is $Y_{t}\equiv
Y_{t}(G)=\sum_{g=1}^{N_{G}+1}\mathcal{G}_{g}Y_{t}(\gamma(g))$. Following the
literature, we assume that, prior to treatment, a unit's potential outcomes
are equal to its never-treated potential outcomes, a generalization of the
no-anticipation assumption to the staggered treatment setting.

\begin{comment} 
	\begin{namedassumption}{NA-stg}[No Anticipation in Staggered Design] \label{asm:NA-staggered}
   For $t < \gamma(g)$, $Y_{t}(g) = Y_t(\infty)$.
\end{namedassumption}
\end{comment}

The causal effects of interest are the group-level ATTs in post-treatment
periods: 
\begin{equation*}
\func{ATT}(g,t)\equiv \mathbb{E}[Y_{t}(\gamma(g))-Y_{t}(\infty )|G=g],\text{ for }%
t\geq \gamma (g).
\end{equation*}%
To analyze how the treatment effect evolves over time, a common approach is
to transform calendar time $t$ into event time $e=t-\gamma (g)$, which
measures the time elapsed relative to the treatment's initiation for group $%
g $. This framework, often referred to as an event study, focuses on the
causal parameter of interest as a weighted average of ATTs: 
\begin{align*}
\func{ES}(e)& \equiv \mathbb{E}[\func{ATT}(G,\gamma (G)+e)|\gamma (G)+e\in
\lbrack 2,\mathcal{T}]] \\
& =\sum_{g:\gamma (g)<\infty }\mathbb{P}(G=g|\gamma (G)+e\in \lbrack 2,%
\mathcal{T}])\func{ATT}(g,\gamma (g)+e),\text{ }e\geq 0.
\end{align*}

The ATT parameters are the building blocks of the event study parameters.
Once the identification and estimation of the ATTs are established, the
event study parameters can be derived straightforwardly.

To identify the ATTs, we generalize the parallel trends and synthetic
control assumptions to the staggered design. We need to first specify the
set of control groups for comparison, commonly known as the donor pool in
the literature. For an event time $e\geq 0$, the set of possible donor
groups for the identification of $\func{ATT}(g,\gamma (g)+e)$ consists of
groups that are untreated up to the period $\gamma (g)+e$: 
\begin{equation*}
\mathcal{D}_{g,e}\equiv \{g^{\prime }:\gamma (g^{\prime })>\gamma (g)+e\}.
\end{equation*}%
This definition of the donor groups depends on both the treated group $g$
and event time $e$, which could make the method overly complex. To avoid a
time-varying donor group, we follow \cite{ben2022synthetic} and focus on a
fixed set of donor groups $\mathcal{D}_{g,\bar{e}}$ for some sufficiently
large $\bar{e}$ that exceeds all event times of interest. This approach
ensures that the estimated counterfactual outcomes remain stable and are not
artificially affected by variations in the composition of the donor groups
across event time. Nevertheless, our proposed method can easily accommodate
varying donor groups by replacing $\mathcal{D}_{g,\bar{e}}$ with $\mathcal{D}%
_{g,e}$ in the procedure. Note that $\mathcal{D}_{g,\bar{e}}$ will always be
non-empty if there are never-treated groups with $\gamma (g)=\infty $ in the
data.

The following assumptions are maintained for the identification of the
event-study ATTs: $\func{ATT}(g,\gamma (g)+e)$, with $2\leq \gamma (g) + e
\leq \mathcal{T} $ and $e\in \{0,1,\ldots ,\bar{e}\}$.

\begin{comment} 
	\begin{namedassumption}{PT-stg}[Parallel Trends in Staggered Design] \label{asm:PT-staggered}
   For any $g$ and $g'$ such that $g' \in \mathcal{D}_{g,\bar{e}}$, 
   \begin{align*}
       \mathbb{E}[Y_t(\infty) - Y_{t-1}(\infty) | G=g,X] = \mathbb{E}[Y_t(\infty) - Y_{t-1}(\infty) | G=g',X], \gamma(g) \leq t \leq \gamma(g) + \bar{e}.
   \end{align*}
\end{namedassumption}
\end{comment}

\begin{comment} 
	\begin{namedassumption}{SC-stg}[Synthetic Control in Staggered Design] \label{asm:SC-staggered}
   For each eventually treated group $g$, there exists a vector of weight functions $w^g \equiv (w^g_{g'},g' \in \mathcal{D}_{g,\bar{e}})$ whose elements sum to one and satisfy that for almost all $x \in \mathcal{X}$, 
\begin{align} 
   \mathbb{E}[Y_{t}(\infty)|G=g,X=x] = \sum_{g' \in \mathcal{D}_{g,\bar{e}}} w^g_{g'}(x) \mathbb{E}[Y_{t}(\infty)|G=g',X=x], t =1,\ldots,\gamma(g)+\bar{e}. \label{eqn:w-id-staggered}
\end{align}
Furthermore, $\gamma(g) \geq |\mathcal{D}_{g,\bar{e}}| \geq 2$, for all $g$, where $| \cdot |$ denotes the cardinality of a set.
\end{namedassumption}
\end{comment}

Similar to \cite{callaway2021difference}, Assumption \ref{asm:PT-staggered}
assumes that the parallel trends assumption holds only starting from the
post-treatment periods. For a given $g$ and $e$, the identification of $%
\func{ATT}(g,\gamma (g)+e)$ follows a logic analogous to the single treated
unit and single treatment period framework studied in Section \ref{sec:setup}%
. The adjustments involve shifting the treatment period from $\mathcal{T}$
to $\gamma (g)+e$ and the pre-treatment periods from $\{1,\ldots ,\mathcal{T}%
-1\}$ to $\{1,\ldots ,\gamma (g)-1\}$. Additionally, the set of control
groups is updated from $\{2,\ldots,N_{G}+1\}$ to $\mathcal{D}_{g,\bar{e}}$.
The condition $\gamma (g)\geq |\mathcal{D}_{g,\bar{e}}|$ ensures that there are a sufficient number of pre-treatment periods for each group to identify
the weights. This is similar to the setup in \cite{ben2022synthetic} in that
all eventually treated groups remain untreated for some time before
receiving treatment.

The moment function for estimating $\func{ATT}(g,\gamma (g)+e),e\geq 0,$ is
defined as 
\begin{align*}
& \phi ^{g,e}(S;m_{\Delta ,g,\gamma (g)+e},p,w^{g};\pi _{g}) \\
=& \frac{1}{\pi _{g}}\left( \mathcal{G}_{g}-\sum_{g^{\prime }\in \mathcal{D}%
_{g,\bar{e}}}w_{g^{\prime }}^{g}(X)\frac{p_{1}(X)}{p_{g^{\prime }}(X)}%
\mathcal{G}_{g^{\prime }}\right) (Y_{\gamma (g)+e}-Y_{\gamma
(g)-1}-m_{\Delta ,g,\gamma (g)+e}(X)),
\end{align*}%
where $\pi _{g}\equiv \mathbb{P}(G=g)$ and $m_{\Delta ,g,\gamma (g)+e}\left(
x\right) \equiv \mathbb{E}[Y_{\gamma (g)+e}-Y_{\gamma (g)-1}|G\in \mathcal{D}%
_{g,\bar{e}},X=x]$. The dependence of $m_{\Delta ,g,\gamma (g)+e}\left(
x\right) $, $w_{g^{\prime }}^{g}\left( \cdot \right) $ and $\phi ^{g,e}$ on $%
\bar{e}$ is suppressed.

\begin{corollary}
\label{cor:staggered} Let Assumptions \ref{asm:panel}, \ref{asm:overlap},
and \ref{asm:NA-staggered} hold. \ 

\begin{enumerate}
\item If Assumption \ref{asm:PT-staggered} holds and $0\leq e\leq \bar{e}%
\leq \mathcal{T}-\gamma (g)$, then 
\begin{equation*}
\func{ATT}(g,\gamma (g)+e)=\mathbb{E}[\phi ^{g,e}(S;m_{\Delta ,g,\gamma
(g)+e},\tilde{p},\tilde{w}^{g};\pi _{g})]=\mathbb{E}[\phi ^{g,e}(S;\tilde{m}%
_{\Delta ,g,\gamma (g)+e},p,\tilde{w}^{g};\pi _{g})],
\end{equation*}%
for any weight functions $\Tilde{w}^{g}$ satisfying $\sum_{g^{\prime }\in 
\mathcal{D}_{g,\bar{e}}}\tilde{w}_{g^{\prime }}^{g}(x)=1$ for all $x\in 
\mathcal{X}$, any non-zero functions $\Tilde{p}$, and any functions $\tilde{m%
}_{\Delta ,g,\gamma (g)+e}$.

\item If Assumption \ref{asm:SC-staggered} holds and $0\leq e\leq \bar{e}%
\leq \mathcal{T}-\gamma (g)$, then 
\begin{equation*}
\func{ATT}(g,\gamma (g)+e)=\mathbb{E}[\phi ^{g,e}(S;\tilde{m}_{\Delta
,g,\gamma (g)+e},p,w^{g};\pi _{g})],
\end{equation*}%
for any function $\tilde{m}_{\Delta ,g,\gamma (g)+e}$, where $w^{g}$ is the
vector of weight functions satisfying (\ref{eqn:w-id-staggered}).
\end{enumerate}
\end{corollary}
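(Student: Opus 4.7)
The plan is to reduce the staggered-design identification problem to the single-treated-group, single-treatment-period problem of Theorem \ref{thm:identification} via a restriction and relabeling argument. Fix a treated group $g$ and an event time $e \in \{0, 1, \ldots, \bar{e}\}$, and restrict attention to units with $G \in \{g\} \cup \mathcal{D}_{g,\bar{e}}$. Within this subpopulation, every control group $g' \in \mathcal{D}_{g,\bar{e}}$ satisfies $\gamma(g') > \gamma(g) + e$, so control units are observed in their never-treated state throughout $t \leq \gamma(g)+e$, i.e., $Y_t = Y_t(\infty)$. By Assumption \ref{asm:NA-staggered}, the treated group $g$ also satisfies $Y_t = Y_t(\infty)$ in all pre-treatment periods $t \leq \gamma(g)-1$, while at $t = \gamma(g)+e$ the observed outcome equals $Y_{\gamma(g)+e}(\gamma(g))$. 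Hence in the restricted sample, the observed dynamics on the relevant time window coincide with those in the Section \ref{sec:setup} framework with only the post-treatment period contaminated by treatment.

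Next I would establish the correspondence between the staggered primitives and those of Section \ref{sec:setup}: the last-period $\mathcal{T}$ maps to $\gamma(g)+e$, the pre-treatment period $\mathcal{T}-1$ maps to $\gamma(g)-1$, the control index set $\{2,\ldots,N_G+1\}$ maps to $\mathcal{D}_{g,\bar{e}}$, the probability $\pi_1$ maps to $\pi_g$, and $m_{\Delta,g,\gamma(g)+e}(X)$ matches the role of $m_\Delta(X)$. Under this mapping, $\phi^{g,e}$ is exactly the moment function $\phi$ from equation~(\ref{eqn:phi-expression2}) applied to the restricted subpopulation, and $\func{ATT}(g,\gamma(g)+e)$ plays the role of $\theta$.

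For part (i), Assumption \ref{asm:PT-staggered} specialized to $t = \gamma(g)+e$ gives the conditional parallel-trends condition between periods $\gamma(g)-1$ and $\gamma(g)+e$ for each $g' \in \mathcal{D}_{g,\bar{e}}$, which is precisely the analog of Assumption \ref{asm:PT} in the restricted system. An application of Theorem \ref{thm:identification}(i) then yields both doubly robust representations. For part (ii), Assumption \ref{asm:SC-staggered} supplies equation~(\ref{eqn:w-id-staggered}) for all $t = 1, \ldots, \gamma(g) + \bar{e}$, which in particular covers the $\gamma(g)-1$ pre-treatment periods needed to pin down $w^g$ under the rank condition $\gamma(g) \geq |\mathcal{D}_{g,\bar{e}}|$ (the staggered analog of $\mathcal{T} \geq N_G$), as well as the post-treatment period $\gamma(g)+e$ needed for the counterfactual. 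Theorem \ref{thm:identification}(ii) then delivers the desired identification.

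The main obstacle, modest as it is, is to verify carefully that replacing observed $Y_t$ by never-treated potential outcomes $Y_t(\infty)$ is valid at every point where the moment function $\phi^{g,e}$ touches the data. This requires simultaneously invoking the donor-pool restriction (to rule out already-treated controls) and Assumption \ref{asm:NA-staggered} (to rule out anticipatory effects in group $g$), so that $Y_{\gamma(g)-1} = Y_{\gamma(g)-1}(\infty)$ for both treated and control units and $Y_{\gamma(g)+e} = Y_{\gamma(g)+e}(\infty)$ for the controls, while $Y_{\gamma(g)+e} - Y_{\gamma(g)+e}(\infty) = Y_{\gamma(g)+e}(\gamma(g)) - Y_{\gamma(g)+e}(\infty)$ on the treated group. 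Once this bookkeeping is in place, the corollary follows directly from Theorem \ref{thm:identification}.
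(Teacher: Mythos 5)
Your proposal is correct and follows essentially the same route as the paper: the paper's own proof is exactly this relabeling argument (group $g$ plays the role of group $1$, the donor pool $\mathcal{D}_{g,\bar{e}}$ plays the role of groups $2,\ldots,N_G+1$, period $\gamma(g)+e$ plays the role of $\mathcal{T}$ and $\gamma(g)-1$ that of $\mathcal{T}-1$), after which Theorem \ref{thm:identification} applies verbatim. Your additional bookkeeping—checking via Assumption \ref{asm:NA-staggered} and the donor-pool restriction that observed outcomes coincide with never-treated potential outcomes wherever $\phi^{g,e}$ touches the data—is exactly the detail the paper leaves implicit.
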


The semiparametric estimation and bootstrap inference proceed as in Section %
\ref{sec:asymptotics} and are omitted here for brevity. The results under
the staggered adoption design can be extended to the repeated
cross-sectional setting in a manner analogous to Section \ref{sec:rc}.

\section{Empirical study}

\label{sec:empirical}

In this section, we apply our method to study the impact of minimum wage on
family income, utilizing the natural experiment of the 2003 minimum wage
increase in Alaska, as examined by \cite{gunsilius2023distributional}. The
effect of minimum wage policies is of significant importance for labor
market dynamics and poverty alleviation strategies, making it a key area of
concern for economists and policymakers. However, the impact is complex and
not clear \textit{a priori}, often involving a trade-off between higher
earnings for certain low-wage workers and potential negative outcomes for
others, such as job losses, reduced work hours, or inflation pressures,
which may diminish the overall gains. Minimum wage is also the focus of \cite%
{card1994minimum}, a seminal paper often regarded as one of the most
influential studies in the DiD literature.

\cite{gunsilius2023distributional} used a subset of the data provided by 
\cite{dube2019minimum}, which was derived from individual-level data in the
Current Population Survey (CPS). \cite{dube2019minimum} employed a two-way
(state and time) fixed effects regression, along with other specifications
that included additional controls and trends. This approach can be viewed as
a parametric specification of the parallel trends assumption in Assumption %
\ref{asm:PT}, though such a parametric form is stronger than necessary for
our purpose. In contrast, \cite{gunsilius2023distributional} proposed the
distributional synthetic control method and used it to construct a synthetic
Alaska by matching state-level quantile curves of family income. Unlike
their approach, our method allows for consistent estimation and valid
inference of the ATT, regardless of whether the parallel trends or synthetic
control assumption holds.\footnote{\cite{dube2019minimum} primarily focused
on the poverty-reducing effects and estimated the unconditional quantile
partial effect of minimum wage increases on family income for families at
lower income quantiles. Similarly, \cite{gunsilius2023distributional}
explored shifts across the entire family income distribution to illustrate
the distributional synthetic control method. To maintain clarity and focus
in our application, we restrict our discussion to the effect on the mean
family income.}

We provide a brief description of the data and variables used in the
analysis. In the CPS, different households are rotated in and out of the
survey at different time intervals. As a result, similar to \cite%
{gunsilius2023distributional}, we treat the dataset as repeated
cross-sectional data. The dataset covers a six-year period, from 1998 to
2003, during which the treatment group, Alaska, increased its minimum wage
from \$5.65 to \$7.15 in the final year. During this period, 33 other states
did not change their minimum wage and thus could potentially serve as
control states. Since our method requires the number of control states not
to exceed the length of the time period, we select six control states based
on the analysis in \cite{gunsilius2023distributional}, namely the states
with the most significant weights in their synthetic control analysis:
Virginia, New Hampshire, Maryland, Utah, Michigan, and Ohio.\footnote{%
Specifically, in \cite{gunsilius2023distributional}'s study, the four
control states with the largest weights in the method using quantile curves
were Virginia (0.11), New Hampshire (0.11), Maryland (0.09), and Utah
(0.07), where the value in parentheses represents the weight assigned to
each state in constructing the synthetic control. The four control states
with the largest weights in the method relying on the mixture of cumulative
distributions were Michigan (0.12), Ohio (0.10), Maryland (0.10), and
Virginia (0.07). Combining these, we obtain the six control states.}

We define a household as an individual unit $i$ in the study. The outcome
variable $Y$ is the equivalized family income, defined as
multiples of the federal poverty threshold, as in \cite{dube2019minimum}.
The grouping variable $G$ represents the state in which the household
resides. We aim to maintain a set of covariates comparable to those in \cite%
{dube2019minimum}, incorporating one continuous variable (age) and two
discrete variables (education level and number of children).\footnote{%
Given that the household is the unit of analysis in our study, we exclude
person-level covariates such as gender, race, and ethnicity, which were
included in \cite{dube2019minimum}. Additionally, while \cite%
{dube2019minimum} included family size as a covariate, we omit it here due
to its high correlation with the number of children, with a correlation
coefficient of 0.75 in the dataset.} To aggregate individual education
levels at the household level, we count the number of individuals with some
college education and categorize households into three levels: 0, 1, and 2
or more. For the number of children, we similarly consider three categories:
0, 1, and 2 or more. These categorizations yield a relatively balanced
sample size across the nine resulting subgroups, with the number of
observations ranging from 4,000 to 12,000. This exercise demonstrates that
the minimal group size can be sufficiently substantial to implement our
method in empirically relevant datasets.

We estimate conditional expectations using local linear regression. Bandwidths are chosen by undersmoothing data‑driven, mean‑squared‑error (MSE)–optimal bandwidths.\footnote{For the nonparametric outcome regression, we first compute the MSE‑optimal bandwidth $h_{\text{cct}}$ from a local polynomial regression of the outcome on age at the median age, using the \texttt{nprobust} package \citep{calonico2019nprobust}. We then set our bandwidth to $n^{1/5-1/3.5} \times h_{\text{cct}}$. Because the MSE‑optimal bandwidth is of rate $n^{-1/5}$, multiplying by $n^{1/5-1/3.5}$ yields an undersmoothed bandwidth of order $n^{-1/3.5}$, which is $o(n^{-1/4})$, as required by our theory (cf. the undersmoothing adjustment in \cite{shen2016distributional}). We select the propensity‑score bandwidth in the same way. Other reasonable choices of the undersmoothing power close to $1/3.5$ give similar results.} For cross‑fitting, we use $L = 2$ folds.

\begin{comment}
\begin{table}[htbp]
 \centering
 \renewcommand{\arraystretch}{1.2}
 \begin{tabular}{lccc}
   \toprule
   & \multicolumn{3}{c}{Number of Children} \\
   \cmidrule(lr){2-4}
   Education & 0 Children & 1 Child & 2+ Children \\
   \midrule
   \multirow{2}{*}{$0$}  & $-0.636$ & $1.737$ & $0.157$  \\ 
                     & $(-1.949, 1.287)$ & $(-0.023, 4.291)$ & $(-1.310, 1.955)$  \\ \hline
\multirow{2}{*}{$1$}  & $0.819$ & $1.671$ & $0.263$ \\ 
                     & $(-1.035, 2.947)$ & $(-0.474, 4.714)$ & $(-1.338, 2.197)$ \\ \hline
\multirow{2}{*}{$2+$} & $-0.838$  & $-0.451$  & $-1.194$  \\ 
                     & $(-5.667, 2.471)$  & $(-4.563, 2.668)$  & $(-3.140, 1.219)$  \\

   \bottomrule
 \end{tabular}
 \caption{Estimated ATT from the 2003 Minimum Wage Increase in Alaska}
 \label{tab:empirical_results}
 \caption*{\footnotesize The table reports the estimated ATT of equivalized family income for nine subgroups, defined by education level and number of children. Education refers to the number of individuals in the household with some college education. Multiplier bootstrap confidence intervals at the 95\% level, constructed using 500 bootstrap replications, are presented in parentheses below each point estimate.}
\end{table}
\end{comment}

Table \ref{tab:empirical_results} reports the estimated ATTs along with the
95\% confidence intervals obtained via the multiplier bootstrap method. In
line with the findings of \cite{gunsilius2023distributional}, our results
indicate no statistically significant immediate effect on family income following the minimum wage increase in Alaska. Several factors may contribute to this outcome. Firms facing higher labor costs might respond by
reducing working hours or adjusting non-wage benefits to offset the
increased wage expenses. In competitive labor markets, the transmission of
higher minimum wages to overall family income can be limited by substitution
effects, where employers may prefer to replace low-wage workers with more
experienced employees.

Due to the double robustness of our method, our findings provide an
effective shield that protects the null effect reported by %
\citet{gunsilius2023distributional} against potential violations of the
underlying identification assumptions. Specifically, our results remain
valid as long as either the parallel trends assumption holds for family
income in the absence of the minimum wage change or the synthetic control
structure accurately captures counterfactual income dynamics across states,
but not necessarily both.

\section{Simulation} \label{sec:simulation}

We conduct simulations based on panel-data DGPs calibrated to the empirical study in Section \ref{sec:empirical}. We denote the observed data from the empirical study by $(\{Y_{i,t}^*\}_{t=1}^{\mathcal{T}}, G_i^*, X_i^*)$, to distinguish them from the variables used in the simulation study. The panel matrix of $Y_{i,t}^*$ is imputed from the repeated cross-sectional CPS data using a low-rank matrix completion method. The covariate $X_i^*$ is normalized to lie between 0 and 1.

We set $\mathcal{T} = N_G = 6$ and $n = 1000,2000,3000$. The marginal distribution of $X$ is taken to be a discrete uniform distribution over an equally spaced grid of 101 points on $[0,1]$,\footnote{This ratio of the number of covariate grid points to the sample size is similar to that in the empirical study.} and the conditional distribution $G|X$ is generated from a multinomial logit model estimated from the empirical data. The treated potential outcome is set to 
\begin{align*}
Y_{i,t}(1) = Y_{i,t}(0) + \frac{\text{sd}(\Delta Y_{i,T}(0))}{\text{sd}(\sin(2\pi X_i))} \sin(2\pi X_i) + 1,
\end{align*}
so that the treatment effect is heterogeneous and its variation is comparable to the untreated outcome trend.\footnote{While the unconditional mean of the sinusoidal term is zero, the true ATT is different from 1 because the conditional distribution of $X_i$ given $G=1$ is not uniform over the grid points on $[0,1]$ (cf. Table \ref{tab:sim}).} Below, we set out three data-generating processes for the untreated potential outcome $Y_{i,t}(0)$: DGP1 satisfies Assumption \ref{asm:SC}, DGP2 satisfies Assumption \ref{asm:PT}, and DGP3 satisfies both.

\paragraph{DGP1} For $g \geq 2$, generate the untreated potential outcome according to
\begin{align*}
    Y_{i,t}(0) = \lambda(G_i, X_i)' F_t + \alpha_{G_i} + \delta_t + \varepsilon_{i,t}.
\end{align*}
The variables are generated as follows. Let $\alpha_g =  \frac{1}{\mathcal{T}-1} \sum_{t=1}^{\mathcal{T}-1} \bar{Y}_{g,t}^*$ be the sample average of outcomes for each control group. Let $\delta_t = \bar{Y}_{\cdot,t}^* = \frac{1}{n} \sum_{i=1}^{n} Y_{i,t}^*$ be the sample average outcome at each time period. Denote $\tilde{Y}_{i,t}^*$ as the double-demeaned outcome:
\begin{align*}
    \tilde{Y}_{i,t}^* = Y_{i,t}^* - \frac{1}{n} \sum_{i=1}^{n} Y_{i,t}^* - \frac{1}{\mathcal{T}-1} \sum_{t=1}^{\mathcal{T}-1} Y_{i,t}^* + \frac{1}{n(\mathcal{T}-1)} \sum_{i=1}^{n} \sum_{t=1}^{\mathcal{T}-1} Y_{i,t}^*.
\end{align*}
We estimate a two-factor model on $\tilde{Y}_{i,t}^*$ in the control groups to obtain the loadings $\lambda_i^*$ and factors $F_t$. 
The loading functions $\lambda(g,x)$ for the control groups are obtained by regressing $\lambda_i^*$ on $G_i^*$ , $X_i^*$, their interactions, and squared terms. 
We define the synthetic control weights as
\begin{align*}
    w_g(x) = 0.2 + 0.8 \cdot \frac{\exp\left[(g - N_G - 1) x\right]}{\sum_{g'=2}^{N_G+1} \exp\left[(g' - N_G - 1) x\right]}, \quad g = 2, \dots, N_G+1.
\end{align*}
This guarantees that each weight is at least 0.2 and varies sufficiently with $x$. The loading function for the treatment group is then constructed as the weighted average:
\begin{align*}
    \lambda(1, x) = \sum_{g=2}^{N_G+1} w_g(x) \cdot \lambda(g, x).
\end{align*}
Similarly, the group fixed effect for the treatment group is constructed as:
\begin{align*}
    \alpha_1 = \sum_{g=2}^{N_G+1} w_g(x) \cdot \alpha_g.
\end{align*}
Finally, the error terms $\varepsilon_{i,t}$ are serially independent and, for each period $t$, are iid draws from a mean‑zero normal distribution whose variance is estimated from the residuals of the factor model for $\tilde{Y}_{i,t}^{*}$.

\paragraph{DGP2} 
We generate the untreated potential outcome as
\begin{align*}
    Y_{i,t}(0)=\mu_t(G_i,X_i)+\alpha_{G_i}+\delta_t+\varepsilon_{i,t}.
\end{align*}
Let $\alpha_g=\frac{1}{\mathcal{T}-1}\sum_{t=1}^{\mathcal{T}-1}\bar{Y}_{g,t}^*$ denote the pre-treatment average outcome for group $g$, and generate $\delta_t$ in the same way as in DGP1. In the pre-treatment periods $1\le t\le \mathcal{T}-1$, set $\mu_t(g,x)=0$ for all control groups $g\ge2$; for the treatment group, estimate $\mu_t(1,x)$ by a quadratic regression of $\tilde{Y}_{i,t}^*$ on $X_i^*$ using the treatment-group data at period $t$. In the treatment period, set $\mu_{\mathcal{T}}(g,x)=\mu_{\mathcal{T}-1}(g,x)+h(x)$ for all groups, where $h(x)$ is estimated by a quadratic regression of $\tilde{Y}_{i,\mathcal{T}}^*-\tilde{Y}_{i,\mathcal{T}-1}^*$ on $X_i^*$ using the control-group data. The error terms $\varepsilon_{i,t}$ are generated in the same way as in DGP1. In this DGP, Assumption \ref{asm:PT} holds, whereas Assumption \ref{asm:SC} does not, because the weighted average of the zero values $\mu_t(g,x)$ for the control groups will always be zero and cannot replicate $\mu_t(1,x)$.

\paragraph{DGP3} All variables are generated exactly as in DGP1, except that the loading function $\lambda$ is now estimated by regressing $\lambda_i^{*}$ solely on $X_i^{*}$ and its square. Consequently, the loading function no longer depends on group assignment, and Assumption \ref{asm:PT} is satisfied in addition to Assumption \ref{asm:SC}.

The simulation results are reported in Table \ref{tab:sim}. Across all three DGPs, our procedure performs as predicted by theory: the bias is negligible relative to the standard deviation, the standard deviation decreases at the expected $1/\sqrt{n}$ rate, and coverage aligns with the nominal level. The resulting confidence intervals are of reasonable length, with power improving as the sample size increases. We select the bandwidth as $n^{-1/3.5}$, and alternative undersmoothing bandwidths yield similarly robust performance.

\begin{comment}
    \begin{table}[htbp]
    \centering
    \begin{tabular}{lccrcccc}
        \toprule
        & True ATT & $n$ & Bias & SD & Coverage & CI Length \\
        \midrule
        \multirow{3}{*}{DGP 1} & \multirow{3}{*}{1.083} & 1000 & -0.002 & 0.085 & 0.952 & 0.469 \\
        & & 2000 & 0.004 & 0.058 & 0.936 & 0.270 \\
        & & 3000 & 0.007 & 0.046 & 0.920 & 0.210 \\ \hline
        \multirow{3}{*}{DGP 2} & \multirow{3}{*}{1.477} & 1000 & -0.025 & 0.418 & 0.959 & 2.270 \\
        & & 2000 & -0.002 & 0.287 & 0.927 & 1.310 \\
        & & 3000 & 0.011 & 0.220 & 0.945 & 1.016 \\ \hline
        \multirow{3}{*}{DGP 3} & \multirow{3}{*}{1.080} & 1000 & -0.007 & 0.082 &  0.958 & 0.459 \\
        & & 2000 & -0.000 & 0.058 & 0.918 & 0.265 \\
        & & 3000 & 0.002 & 0.044 & 0.930 & 0.206 \\
        \bottomrule
    \end{tabular}
    \caption{Simulation results across DGPs and sample sizes}
    \label{tab:sim}
    \caption*{\footnotesize The table summarizes the simulation results for the three DGPs. The second column presents the true ATT based on 10000 replications. The fourth and fifth columns show the bias and standard deviation of the point estimator. The last two columns report the coverage and the median length of the bootstrap confidence interval at the 95\% nominal level, computed with 500 bootstrap replications. The reported results are based on 1000 simulation replications.}
\end{table}

\end{comment}

\section{Conclusion}

\label{sec:conclusion}

This paper introduces a doubly robust methodology that unifies DiD and
synthetic control approaches, enabling identification, consistent
semiparametric estimation, and valid bootstrap inference of the ATT under
either parallel trends or synthetic control assumptions. The method is
applicable in a wide range of empirical settings, including panel data,
repeated cross-sectional data, and staggered treatment designs. The
framework is suitable for various micro-level datasets, such as
administrative records, survey data, or digital trace data, where meaningful
heterogeneity across groups can be exploited. We recommend that empirical
researchers adopt this approach to mitigate biases stemming from strong
identifying assumptions and to ensure more robust causal inference.

\bibliographystyle{chicago}
\bibliography{DiD_SC}

\appendix

\numberwithin{equation}{section} \numberwithin{lemma}{section}

\section{Proofs}

\begin{proof}[Proof of Theorem \protect\ref{thm:identification}]
Part (i): When $m_{\Delta }$ is correctly specified, we have, under
Assumption \ref{asm:PT}: 
\begin{eqnarray*}
&&\mathbb{E}\left[ \mathcal{G}_{g}\left( \Delta Y-m_{\Delta }\left( X\right)
\right) |X\right]  \\
&=&\mathbb{E}\left[ \Delta Y-m_{\Delta }\left( X\right) |X,\mathcal{G}_{g}=1%
\right] \Pr \left( \mathcal{G}_{g}=1|X\right)  \\
&=&\left( \mathbb{E}\left[ Y_{\mathcal{T}}-Y_{\mathcal{T}-1}|X,\mathcal{G}%
_{g}=1\right] -m_{\Delta }\left( X\right) \right) \Pr \left( \mathcal{G}%
_{g}=1|X\right)  \\
&=&\left( \mathbb{E}\left[ Y_{\mathcal{T}}-Y_{\mathcal{T}-1}|X,\mathcal{G}%
_{g}=1\right] -\mathbb{E}\left[ Y_{\mathcal{T}}-Y_{\mathcal{T}-1}|X,\mathcal{%
G}_{g}\neq 1\right] \right) \Pr \left( \mathcal{G}_{g}=1|X\right)  \\
&=&0,
\end{eqnarray*}%
for $g=2,\ldots ,N_{G}+1,$ and so%
\begin{equation*}
\mathbb{E}\left[ \phi \left( S;m_{\Delta },\tilde{p},\tilde{w};\pi
_{1}\right) \right] =\mathbb{E}\left[ \phi \left( S;m_{\Delta },p,\tilde{w}%
;\pi _{1}\right) \right] .
\end{equation*}

When $p$ is correctly specified and $\sum_{g=2}^{N_{G}+1}\tilde{w}_{g}\left(
X\right) =1$, we have%
\begin{equation}
\mathbb{E}\left[ \left. \mathcal{G}_{1}-\sum_{g=2}^{N_{G}+1}\tilde{w}%
_{g}\left( X\right) \frac{p_{1}\left( X\right) }{p_{g}\left( X\right) }%
\mathcal{G}_{g}\right\vert X\right] =0,  \label{conditional mean zero(p)}
\end{equation}%
and so 
\begin{equation*}
\mathbb{E}\left[ \phi \left( S;\tilde{m}_{\Delta },p,\tilde{w};\pi
_{1}\right) \right] =\mathbb{E}\left[ \phi \left( S;m_{\Delta },p,\tilde{w}%
;\pi _{1}\right) \right] .
\end{equation*}

It then suffices to show that 
\begin{equation*}
\mathbb{E}\left[ \phi \left( S;m_{\Delta },p,\tilde{w};\pi _{1}\right) %
\right] =\theta .
\end{equation*}%
But 
\begin{eqnarray*}
&&\mathbb{E}\left[ \phi \left( S;m_{\Delta },p,\tilde{w};\pi _{1}\right) %
\right] \\
&=&\mathbb{E}\left[ \frac{\mathcal{G}_{1}}{\pi _{1}}\left( \Delta
Y-m_{\Delta }\left( X\right) \right) \right] =\mathbb{E}\left[ \frac{%
\mathcal{G}_{1}}{\pi _{1}}\Delta Y\right] -\mathbb{E}\left[ \frac{\mathcal{G}%
_{1}}{\pi _{1}}m_{\Delta }\left( X\right) \right] \\
&=&\mathbb{E}\left[ \Delta Y|G=1\right] -\mathbb{E}\left[ m_{\Delta }\left(
X\right) |G=1\right] \\
&=&\mathbb{E}\left[ Y_{\mathcal{T}}\left( 1\right) -Y_{\mathcal{T}-1}\left(
0\right) |G=1\right] -\mathbb{E}\Big[ \underset{\equiv m_{\Delta }\left(
X\right) }{\underbrace{\mathbb{E}\left( Y_{\mathcal{T}}\left( 0\right) -Y_{%
\mathcal{T}-1}\left( 0\right) |X,G\neq 1\right) }}|G=1\Big] \\
&=&\mathbb{E}\left[ Y_{\mathcal{T}}\left( 1\right) -Y_{\mathcal{T}-1}\left(
0\right) |G=1\right] -\mathbb{E}\left[ \mathbb{E}\left( Y_{\mathcal{T}%
}\left( 0\right) -Y_{\mathcal{T}-1}\left( 0\right) |X,G=1\right) |G=1\right]
\\
&=&\mathbb{E}\left[ Y_{\mathcal{T}}\left( 1\right) -Y_{\mathcal{T}-1}\left(
0\right) |G=1\right] -\mathbb{E}\left[ Y_{\mathcal{T}}\left( 0\right) -Y_{%
\mathcal{T}-1}\left( 0\right) |G=1\right] \\
&=&\mathbb{E}\left[ Y_{\mathcal{T}}\left( 1\right) -Y_{\mathcal{T}}\left(
0\right) |G=1\right] =\theta ,
\end{eqnarray*}%
as desired, where the fourth line uses Assumption \ref{asm:PT}.

Part (ii). When $p$ is correctly specified and $\sum_{g=2}^{N_{G}+1}w_{g}%
\left( X\right) =1$, we use (\ref{conditional mean zero(p)}) to obtain:%
\begin{eqnarray*}
&&\mathbb{E}\left[ \phi \left( S;m_{\Delta },p,w;\pi _{1}\right) \right] \\
&=&\frac{1}{\pi _{1}}\mathbb{E}\left[ \left( \mathcal{G}_{1}-%
\sum_{g=2}^{N_{G}+1}w_{g}\left( X\right) \frac{p_{1}\left( X\right) }{%
p_{g}\left( X\right) }\mathcal{G}_{g}\right) \Delta Y\right] \\
&=&\frac{1}{\pi _{1}}\mathbb{E}\left( \mathcal{G}_{1}\Delta Y\right) -\frac{1%
}{\pi _{1}}\mathbb{E}\sum_{g=2}^{N_{G}+1}w_{g}\left( X\right) \frac{%
p_{1}\left( X\right) }{p_{g}\left( X\right) }\mathbb{E}\left( \mathcal{G}%
_{g}\Delta Y|X\right) \\
&=&\frac{1}{\pi _{1}}\mathbb{E}\left( \mathcal{G}_{1}\Delta Y\right) -\frac{1%
}{\pi _{1}}\mathbb{E}\sum_{g=2}^{N_{G}+1}w_{g}\left( X\right) p_{1}\left(
X\right) \mathbb{E}\left( Y_{\mathcal{T}}\left( 0\right) -Y_{\mathcal{T}%
-1}\left( 0\right) |X,G=g\right) \\
&=&\frac{1}{\pi _{1}}\mathbb{E}\left( \mathcal{G}_{1}\left[ Y_{\mathcal{T}%
}\left( 1\right) -Y_{\mathcal{T}-1}\left( 0\right) \right] \right) -\frac{1}{%
\pi _{1}}\mathbb{E}\left\{ p_{1}\left( X\right) \mathbb{E}\left( Y_{\mathcal{%
T}}\left( 0\right) -Y_{\mathcal{T}-1}\left( 0\right) |X,G=1\right) \right\}
\\
&=&\frac{1}{\pi _{1}}\mathbb{E}\left( \mathcal{G}_{1}\left[ Y_{\mathcal{T}%
}\left( 1\right) -Y_{\mathcal{T}-1}\left( 0\right) \right] \right) -\frac{1}{%
\pi _{1}}\mathbb{E}\left\{ \mathbb{E}\left[ \mathcal{G}_{1}(Y_{\mathcal{T}%
}\left( 0\right) -Y_{\mathcal{T}-1}\left( 0\right) )|X\right] \right\} \\
&=&\frac{1}{\pi _{1}}\mathbb{E}\left( \mathcal{G}_{1}\left[ Y_{\mathcal{T}%
}\left( 1\right) -Y_{\mathcal{T}-1}\left( 0\right) \right] \right) -\frac{1}{%
\pi _{1}}\mathbb{E}\left\{ \mathcal{G}_{1}\left[ Y_{\mathcal{T}}\left(
0\right) -Y_{\mathcal{T}-1}\left( 0\right) \right] \right\} \\
&=&\frac{1}{\pi _{1}}\mathbb{E}\left( \mathcal{G}_{1}\left[ Y_{\mathcal{T}%
}\left( 1\right) -Y_{\mathcal{T}}\left( 0\right) \right] \right) =\mathbb{E}%
\left( \left[ Y_{\mathcal{T}}\left( 1\right) -Y_{\mathcal{T}}\left( 0\right) %
\right] |G=1\right) =\theta ,
\end{eqnarray*}%
where the fourth line uses Assumption \ref{asm:SC}.
\end{proof}

\begin{proof}[Proof of Theorem \protect\ref{thm:asymp-dist-1}]
In the proof, we use the notation $w$ rather than $\omega $ to represent the
pseudo-true weights. For simplicity of the exposition, we assume that the
nuisance estimators $\hat{m}_{\Delta }$, $\hat{p}$, and $\hat{w}$ are
constructed from another independent sample, which is essentially achieved
by the cross-fitting method.\footnote{%
See, for example, \citet{kennedy2020sharp} and %
\citet{kennedy2024semiparametric} for similar proofs related to double
machine learning estimators.} The estimator will be simplified to 
\begin{equation*}
\hat{\theta}=\frac{1}{n}\sum_{i=1}^{n}\phi (S_{i};\hat{m}_{\Delta },\hat{p},%
\hat{w};\hat{\pi}_{1}).
\end{equation*}%
Define the following infeasible estimator 
\begin{equation*}
\Tilde{\theta}=\frac{1}{n}\sum_{i=1}^{n}\phi (S_{i};m_{\Delta },p,w;\hat{\pi}%
_{1}),
\end{equation*}%
where the nuisance estimators $\hat{m}_{\Delta },\hat{p},$ and $\hat{w}$ are
replaced by their respective true values, but $\hat{\pi}_{1}$ remains
estimated. Our goal is to establish the first-order equivalence between $%
\hat{\theta}$ and $\tilde{\theta}$, where it suffices to only examine the
numerators while omitting the common denominator $\hat{\pi}_{1}$. The
difference between the numerators of $\phi (S;\hat{m}_{\Delta },\hat{p},\hat{%
w};\hat{\pi}_{1})$ and $\phi (S;m_{\Delta },p,w;\hat{\pi}_{1})$ can be
decomposed into the following five terms 
\begin{equation*}
\phi (S;\hat{m}_{\Delta },\hat{p},\hat{w};\hat{\pi}_{1})\hat{\pi}_{1}-\phi
(S;m_{\Delta },p,w;\hat{\pi}_{1})\hat{\pi}_{1}=E_{1}+E_{2}+E_{3}+E_{4}+E_{5},
\end{equation*}%
where 
\begin{align}
E_{1}(S;\hat{m}_{\Delta },\hat{p},\hat{w})& \equiv
\sum_{g=2}^{N_{G}+1}E_{1g}(S;\hat{m}_{\Delta },\hat{p},\hat{w})\equiv
-\sum_{g=2}^{N_{G}+1}\hat{w}_{g}(X)\left( \frac{\hat{p}_{1}(X)}{\hat{p}%
_{g}(X)}-\frac{p_{1}(X)}{p_{g}(X)}\right) \mathcal{G}_{g}(\Delta Y-m_{\Delta
}(X)),  \label{eqn:E1} \\
E_{2}(S;\hat{m}_{\Delta },\hat{p},\hat{w})& \equiv
\sum_{g=2}^{N_{G}+1}E_{2g}(S;\hat{m}_{\Delta },\hat{p},\hat{w})\equiv
-\sum_{g=2}^{N_{G}+1}(\hat{w}_{g}(X)-w_{g}(X))\frac{p_{1}(X)}{p_{g}(X)}%
\mathcal{G}_{g}(\Delta Y-m_{\Delta }(X)),  \label{eqn:E2} \\
E_{3}(S;\hat{m}_{\Delta },\hat{p},\hat{w})& \equiv -\left( \mathcal{G}%
_{1}-\sum_{g=2}^{N_{G}+1}w_{g}(X)\mathcal{G}_{g}\frac{p_{1}(X)}{p_{g}(X)}%
\right) (\hat{m}_{\Delta }(X)-m_{\Delta }(X)),  \notag \\
E_{4}(S;\hat{m}_{\Delta },\hat{p},\hat{w})& \equiv
\sum_{g=2}^{N_{G}+1}E_{4g}(S;\hat{m}_{\Delta },\hat{p},\hat{w})\equiv
\sum_{g=2}^{N_{G}+1}\mathcal{G}_{g}\frac{p_{1}(X)}{p_{g}(X)}(\hat{w}%
_{g}(X)-w_{g}(X))(\hat{m}_{\Delta }(X)-m_{\Delta }(X)),  \notag \\
E_{5}(S;\hat{m}_{\Delta },\hat{p},\hat{w})& \equiv
\sum_{g=2}^{N_{G}+1}E_{5g}(S;\hat{m}_{\Delta },\hat{p},\hat{w})\equiv
\sum_{g=2}^{N_{G}+1}\mathcal{G}_{g}\left( \frac{\hat{p}_{1}(X)}{\hat{p}%
_{g}(X)}-\frac{p_{1}(X)}{p_{g}(X)}\right) \hat{w}_{g}(X)(\hat{m}_{\Delta
}(X)-m_{\Delta }(X)).  \notag
\end{align}

Both $E_{1}$ and $E_{2}$ have zero mean because $\mathbb{E}[\mathcal{G}%
_{g}(\Delta Y-m_{\Delta }(X))|X]=0,g>1,$ under Assumption \ref{asm:PT}. The
term $E_{3}$ has a zero mean because 
\begin{equation*}
\mathbb{E}\left[ \sum_{g=2}^{N_{G}+1}w_{g}(X)\mathcal{G}_{g}\frac{p_{1}(X)}{%
p_{g}(X)}\big|X\right] =p_{1}(X)\sum_{g=2}^{N_{G}+1}w_{g}(X)=\mathbb{E}[%
\mathcal{G}_{1}|X].
\end{equation*}%
The term $E_{4}$ also has a zero mean because 
\begin{align*}
& \sum_{g=2}^{N_{G}+1}\mathbb{E}\left[ \mathcal{G}_{g}\frac{p_{1}(X)}{%
p_{g}(X)}(\hat{w}_{g}(X)-w_{g}(X))(\hat{m}_{\Delta }(X)-m_{\Delta }(X))|X,%
\hat{m}_{\Delta },\hat{w}_{g}\right] \\
& =p_{1}(X)(\hat{m}_{\Delta }(X)-m_{\Delta }(X))\left( \sum_{g=2}^{N_{G}+1}(%
\hat{w}_{g}(X)-w_{g}(X))\right) \\
& =p_{1}(X)(\hat{m}_{\Delta }(X)-m_{\Delta }(X))(1-1)=0,
\end{align*}%
using $\sum_{g=2}^{N_{G}+1}\hat{w}_{g}(X)=1$ and $%
\sum_{g=2}^{N_{G}+1}w_{g}(X)=1.$ In the above, the expectation is taken with
respect to the distribution of $\mathcal{G}_{g}$, conditioned on $X$, while
holding the functions $\hat{m}_{\Delta }$ and $\hat{w}_{g}$ fixed so that
the expectation is not taken with respect to the estimation errors in these
two functions. We adopt the same convention in the rest of the proofs.

To summarize this decomposition, the first three terms are sample averages
of terms with zero mean and shrinking variance, and the fourth and fifth
terms are products of the first-stage estimation errors. Based on this
decomposition, we can write the difference between $\hat{\theta}$ and $%
\Tilde{\theta}$ as follows: 
\begin{align*}
& \hat{\theta}-\Tilde{\theta}=\frac{1}{\hat{\pi}_{1}}\left(
\sum_{g=2}^{N_{G}+1}\sum_{j=1,2,4,5}\hat{E}_{jg}+\hat{E}_{3}\right) , \\
& \text{ with }\hat{E}_{jg}=\frac{1}{n}\sum_{i=1}^{n}E_{jg}(S_{i};\hat{m}%
_{\Delta },\hat{p},\hat{w}),\hat{E}_{3}=\frac{1}{n}\sum_{i=1}^{n}E_{3}(S_{i};%
\hat{m}_{\Delta },\hat{p},\hat{w}).
\end{align*}%
The goal is to show that all these terms are of order $o_{p}(n^{-1/2})$.
Since $N_{G}$ is finite, it suffices for us to focus on a single group $g$.
We first examine $\hat{E}_{1g}$. Notice that, conditional on $(\hat{m}%
_{\Delta },\hat{p},\hat{w})$, the sequence $(E_{1g}(S_{i};\hat{m}_{\Delta },%
\hat{p},\hat{w}))_{i=1}^{n}$ is iid because $(\hat{m}_{\Delta },\hat{p},\hat{%
w})%
\indep%
(S_{i})_{i=1}^{n}$. Hence, for $i\neq i^{\prime }$, we have 
\begin{align*}
& \mathbb{E}\left[ E_{1g}(S_{i};\hat{m}_{\Delta },\hat{p},\hat{w}%
)E_{1g}(S_{i^{\prime }};\hat{m}_{\Delta },\hat{p},\hat{w})|\hat{m}_{\Delta },%
\hat{p},\hat{w}\right] \\
& =\mathbb{E}\left[ E_{1g}(S_{i};\hat{m}_{\Delta },\hat{p},\hat{w})|\hat{m}%
_{\Delta },\hat{p},\hat{w}\right] \mathbb{E}\left[ E_{1g}(S_{i^{\prime }};%
\hat{m}_{\Delta },\hat{p},\hat{w})|\hat{m}_{\Delta },\hat{p},\hat{w}\right]
=0,
\end{align*}%
where the last line follows from that $\mathbb{E}\left[ E_{1g}(S_{i};\hat{m}%
_{\Delta },\hat{p},\hat{w})|\hat{m}_{\Delta },\hat{p},\hat{w}\right] =0$.
Then the conditional second moment of $\hat{E}_{1g}$ will only consists of
the diagonal terms: 
\begin{align*}
\mathbb{E}\left[ (\hat{E}_{1g})^{2}|\hat{m}_{\Delta },\hat{p},\hat{w}\right]
& =\frac{1}{n^{2}}\sum_{i,i^{\prime }=1}^{n}\mathbb{E}\left[ E_{1g}(S_{i};%
\hat{m}_{\Delta },\hat{p},\hat{w})E_{1g}(S_{i^{\prime }};\hat{m}_{\Delta },%
\hat{p},\hat{w})|\hat{m}_{\Delta },\hat{p},\hat{w}\right] \\
& =\frac{1}{n^{2}}\sum_{i=1}^{n}\mathbb{E}\left[ E_{1g}(S_{i};\hat{m}%
_{\Delta },\hat{p},\hat{w})^{2}|\hat{m}_{\Delta },\hat{p},\hat{w}\right] \\
& =\frac{1}{n}\mathbb{E}\left[ E_{1g}(S_{i};\hat{m}_{\Delta },\hat{p},\hat{w}%
)^{2}|\hat{m}_{\Delta },\hat{p},\hat{w}\right] .
\end{align*}%
The conditional second moment of $E_{1g}(S_{i};\hat{m}_{\Delta },\hat{p},%
\hat{w})$ is bounded as 
\begin{align*}
& \mathbb{E}\left[ E_{1g}(S_{i};\hat{m}_{\Delta },\hat{p},\hat{w})^{2}|\hat{m%
}_{\Delta },\hat{p},\hat{w}\right] \\
& =\mathbb{E}\left[ p_{g}(X)\left( \frac{\hat{p}_{1}(X)}{\hat{p}_{g}(X)}-%
\frac{p_{1}(X)}{p_{g}(X)}\right) ^{2}\hat{w}_{g}(X)^{2}\mathbb{E}\left[
(\Delta Y-m_{\Delta }(X))^{2}|X,G=g\right] \Big|\hat{m}_{\Delta },\hat{p},%
\hat{w}\right] \\
& \leq C\lVert \hat{w}_{g}^{2}\rVert _{\infty }\left\Vert \hat{p}_{1}/\hat{p}%
_{g}-p_{1}/p_{g}\right\Vert _{L_{2}(F_{X})}^{2},
\end{align*}%
under the assumption that $\mathbb{E}\left[ (\Delta Y-m_{\Delta
}(X))^{2}|X,G=g\right] $ is bounded. By Markov's inequality, we have for any 
$t>0$, 
\begin{align*}
& \mathbb{P}\left( \frac{\sqrt{n}|\hat{E}_{1g}|}{\lVert \hat{w}_{g}\rVert
_{\infty }\left\Vert \hat{p}_{1}/\hat{p}_{g}-p_{1}/p_{g}\right\Vert
_{L_{2}(F_{X})}}>t\right) \\
& =\,\mathbb{E}\left[ \mathbb{P}\left( \frac{\sqrt{n}|\hat{E}_{1g}|}{\lVert 
\hat{w}_{g}\rVert _{\infty }\left\Vert \hat{p}_{1}/\hat{p}%
_{g}-p_{1}/p_{g}\right\Vert _{L_{2}(F_{X})}}>t\Bigg|\hat{m}_{\Delta },\hat{p}%
,\hat{w}\right) \right] \\
& \leq \,\mathbb{E}\left[ \frac{\mathbb{E}\left[ n|\hat{E}_{1g}|^{2}\text{ }|%
\text{ }\hat{m}_{\Delta },\hat{p},\hat{w}\right] }{\lVert \hat{w}_{g}\rVert
_{\infty }^{2}\left\Vert \hat{p}_{1}/\hat{p}_{g}-p_{1}/p_{g}\right\Vert
_{L_{2}(F_{X})}^{2}t^{2}}\right] \leq C/t^{2}.
\end{align*}%
This proves that 
\begin{equation*}
\hat{E}_{1g}=O_{p}\left( \frac{1}{\sqrt{n}}\lVert \hat{w}_{g}\rVert _{\infty
}\left( \left\Vert \hat{p}_{1}/\hat{p}_{g}-p_{1}/p_{g}\right\Vert
_{L_{2}(F_{X})}\right) \right) =o_{p}(n^{-1/2}).
\end{equation*}%
Similarly, we can show that $\hat{E}_{2g}=o_{p}(n^{-1/2})$ under the
additional condition that $\left\Vert \hat{w}_{g}-w_{g}\right\Vert
_{L_{2}(F_{X})}=o_{p}(1)$. For the term $\hat{E}_{3}$, notice that the
conditional second moment of $E_{3}$ is bounded as 
\begin{align*}
& \mathbb{E}\left[ E_{3}(S_{i};\hat{m}_{\Delta },\hat{p},\hat{w})^{2}|\hat{m}%
_{\Delta },\hat{p},\hat{w}\right] \\
& =\mathbb{E}\left[ \left( \mathcal{G}_{1}-\sum_{g=2}^{N_{G}+1}w_{g}(X)%
\mathcal{G}_{g}\frac{p_{1}(X)}{p_{g}(X)}\right) ^{2}(\hat{m}_{\Delta
}(X)-m_{\Delta }(X))^{2}\Big|\hat{m}_{\Delta },\hat{p},\hat{w}\right] \\
& \leq \,C\left\Vert \hat{m}_{\Delta }-m_{\Delta }\right\Vert
_{L_{2}(F_{X})}^{2},
\end{align*}%
under the assumption that $1/p_{g}$ and $w_{g}$ are bounded functions. Then,
following the same steps as above, we can show that 
\begin{equation*}
\hat{E}_{3}=O_{p}\left( \frac{1}{\sqrt{n}}\left( \left\Vert \hat{m}_{\Delta
}-m_{\Delta }\right\Vert _{L_{2}(F_{X})}\right) \right) =o_{p}(n^{-1/2}).
\end{equation*}%
The same reasoning applies to $\hat{E}_{4g}$: 
\begin{equation*}
\hat{E}_{4g}=O_{p}\left( \left\Vert \hat{w}_{g}-w_{g}\right\Vert
_{L_{2}(F_{X})}\left\Vert \hat{m}_{\Delta }-m_{\Delta }\right\Vert
_{L_{2}(F_{X})}/\sqrt{n}\right) =o_{p}(n^{-1/2}).
\end{equation*}%
For the term $\hat{E}_{5g}$, by the triangle inequality, we have 
\begin{align*}
\mathbb{E}\left[ |\hat{E}_{5g}||\hat{m}_{\Delta },\hat{p},\hat{w}\right] &
\leq \frac{1}{n}\sum_{i=1}^{n}\mathbb{E}\left[ |E_{5g}(S_{i};\hat{m}_{\Delta
},\hat{p},\hat{w})||\hat{m}_{\Delta },\hat{p},\hat{w}\right] \\
& =\mathbb{E}\left[ |E_{5g}(S_{i};\hat{m}_{\Delta },\hat{p},\hat{w})||\hat{m}%
_{\Delta },\hat{p},\hat{w}\right] .
\end{align*}%
We can proceed to prove that 
\begin{align*}
& \mathbb{E}\left[ |E_{5g}(S_{i};\hat{m}_{\Delta },\hat{p},\hat{w})||\hat{m}%
_{\Delta },\hat{p},\hat{w}\right] \\
\leq \,& \mathbb{E}\left[ \hat{w}_{g}(X)\left\vert \hat{p}_{1}(X)/\hat{p}%
_{g}(X)-p_{1}(X)/p_{g}(X))(\hat{m}_{\Delta }(X)-m_{\Delta }(X))\right\vert |%
\hat{m}_{\Delta },\hat{p},\hat{w}\right] \\
\leq \,& \lVert \hat{w}_{g}\rVert _{\infty }\left\Vert \hat{p}_{1}/\hat{p}%
_{g}-p_{1}/p_{g}\right\Vert _{L_{2}(F_{X})}\left\Vert \hat{m}_{\Delta
}-m_{\Delta }\right\Vert _{L_{2}(F_{X})},
\end{align*}%
based on the Cauchy-Schwarz inequality. Using Markov's inequality again, we
obtain that 
\begin{equation*}
\hat{E}_{5g}=O_{p}\left( \lVert \hat{w}_{g}\rVert _{\infty }\left\Vert \hat{p%
}_{1}/\hat{p}_{g}-p_{1}/p_{g}\right\Vert _{L_{2}(F_{X})}\left\Vert \hat{m}%
_{\Delta }-m_{\Delta }\right\Vert _{L_{2}(F_{X})}\right) =o_{p}(n^{-1/2}).
\end{equation*}%
This proves that $\hat{\theta}-\Tilde{\theta}=o_{p}(n^{-1/2})$. Therefore, $%
\hat{\theta}$ has the same asymptotic distribution as $\Tilde{\theta}$. But 
\begin{align}
& \sqrt{n}(\tilde{\theta}-\theta )  \notag \\
=& \frac{1}{\hat{\pi}_{1}\sqrt{n}}\sum_{i=1}^{n}\left( \mathcal{G}%
_{1i}(\Delta Y_{i}-m_{\Delta }(X_{i}))-\sum_{g=2}^{N_{G}+1}w_{g}(X_{i})%
\mathcal{G}_{gi}\frac{p_{1}(X_{i})}{p_{g}(X_{i})}(\Delta Y_{i}-m_{\Delta
}(X_{i}))-\theta \mathcal{G}_{1i}\right)  \notag \\
=& \frac{1}{\pi _{1}\sqrt{n}}\sum_{i=1}^{n}\left( \mathcal{G}_{1i}(\Delta
Y_{i}-m_{\Delta }(X_{i}))-\sum_{g=2}^{N_{G}+1}w_{g}(X_{i})\mathcal{G}_{gi}%
\frac{p_{1}(X_{i})}{p_{g}(X_{i})}(\Delta Y_{i}-m_{\Delta }(X_{i}))-\theta 
\mathcal{G}_{1i}\right) +o_{p}(1)  \notag \\
\equiv & \frac{1}{\sqrt{n}}\sum_{i=1}^{n}\psi _{\mathrm{PT}}(S_{i})+o_{p}(1),
\label{eqn: theta_tilde_minus_theta}
\end{align}%
where $\psi _{\mathrm{PT}}$ is defined by
\begin{eqnarray}
\psi _{\mathrm{PT}}(S) &=&\frac{1}{\pi _{1}}\left[ \mathcal{G}_{1}(\Delta
Y-m_{\Delta }(X)-\theta )-\sum_{g=2}^{N_{G}+1}w_{g}(X)\mathcal{G}_{g}\frac{%
p_{1}(X)}{p_{g}(X)}(\Delta Y-m_{\Delta }(X))\right]  \notag \\
&=&\phi \left( S;m_{\Delta },p,w;\pi _{1}\right) -\theta \mathcal{G}_{1}/\pi
_{1}.  \label{eqn:psi_PT}
\end{eqnarray}%
Then, $\sqrt{n}(\hat{\theta}-\theta )\overset{d}{\rightarrow }N(0,V_{\mathrm{%
PT}})$ with the asymptotic variance given by $V_{\mathrm{PT}}=\mathbb{E}%
\left[ \psi _{\mathrm{PT}}(S)^{2}\right] $. We can represent $V_{\mathrm{PT}%
} $ as 
\begin{align}
V_{\mathrm{PT}} & =\mathbb{E}\left[ \psi _{\mathrm{PT}}(S)^{2}\right]
\label{eqn:V1-calculation} \\
& =\frac{1}{\pi _{1}^{2}}\left( \mathbb{E}\left[ \mathcal{G}_{1}(\Delta
Y-m_{\Delta }(X)-\theta )^{2}+\sum_{g=2}^{N_{G}+1}w_{g}(X)^{2}\mathcal{G}_{g}%
\frac{p_{1}(X)^{2}}{p_{g}(X)^{2}}(\Delta Y-m_{\Delta }(X))^{2}\right] \right)
\notag \\
& =\frac{1}{\pi _{1}^{2}}\Bigg(\mathbb{E}\Bigg[\mathcal{G}_{1}(\Delta
Y-m_{1,\Delta }(X))^{2}+\mathcal{G}_{1}(m_{1,\Delta }(X)-m_{\Delta
}(X)-\theta )^{2}  \notag \\
& \quad +\sum_{g=2}^{N_{G}+1}w_{g}(X)^{2}\mathcal{G}_{g}\frac{p_{1}(X)^{2}}{%
p_{g}(X)^{2}}(\Delta Y-m_{\Delta }(X))^{2}\Bigg]\Bigg),  \notag
\end{align}%
where in the second line the cross-product terms disappear because $\mathcal{%
G}_{1}\mathcal{G}_{g}=0,g\geq 2$, and in the third line we use the following
fact: 
\begin{align*}
& \mathbb{E}\left[ \mathcal{G}_{1}(m_{1,\Delta }(X)-m_{\Delta }(X)-\theta
)(\Delta Y-m_{1,\Delta }(X))\right] \\
=\,& \mathbb{E}\left[ \mathbb{E}\left[ p_{1}(X)(m_{1,\Delta }(X)-m_{\Delta
}(X)-\theta )(\Delta Y-m_{1,\Delta }(X))|X,G=1\right] \right] =0,
\end{align*}%
as $\mathbb{E}\left[ (\Delta Y-m_{1,\Delta }(X))|X,G=1\right] =0.$ Under
condition (i) of Theorem \ref{thm:asymp-dist-1}, the asymptotic variance is
finite. Thus, $\hat{\theta}$ is asymptotically normal with the given
asymptotic variance, completing the proof.
\end{proof}

\begin{proof}[Proof of Theorem \protect\ref{thm:asymp-dist-2}]
We follow the steps in the proof of Theorem \ref{thm:asymp-dist-1}. We have
the following decomposition: 
\begin{equation*}
\phi (S;\hat{m}_{\Delta },\hat{p},\hat{w};\hat{\pi}_{1})\hat{\pi}_{1}-\phi
(S;m,p,w;\hat{\pi}_{1})\hat{\pi}_{1}=E_{1}+E_{2}+E_{3}+E_{4}+E_{5}+E_{6},
\end{equation*}%
where 
\begin{align}
E_{1}(S;\hat{m}_{\Delta },\hat{p},\hat{w})& \equiv
-\sum_{g=2}^{N_{G}+1}w_{g}(X)\left( \hat{r}_{1,g}(X)-r_{1,g}(X)\right) 
\mathcal{G}_{g}(\Delta Y-m_{\Delta }(X))  \notag \\
E_{2}(S;\hat{m}_{\Delta },\hat{p},\hat{w})& \equiv -\sum_{g=2}^{N_{G}+1}(%
\hat{w}_{g}(X)-w_{g}(X))r_{1,g}(X)\mathcal{G}_{g}(\Delta Y-m_{\Delta }(X)), 
\notag \\
E_{3}(S;\hat{m}_{\Delta },\hat{p},\hat{w})& \equiv -\left( \mathcal{G}%
_{1}-\sum_{g=2}^{N_{G}+1}w_{g}(X)\mathcal{G}_{g}r_{1,g}(X)\right) (\hat{m}%
_{\Delta }(X)-m_{\Delta }(X)),  \label{eqn:E1_E6-decomposition} \\
E_{4}(S;\hat{m}_{\Delta },\hat{p},\hat{w})& \equiv \sum_{g=2}^{N_{G}+1}%
\mathcal{G}_{g}r_{1,g}(X)(\hat{w}_{g}(X)-w_{g}(X))(\hat{m}_{\Delta
}(X)-m_{\Delta }(X)),  \notag \\
E_{5}(S;\hat{m}_{\Delta },\hat{p},\hat{w})& \equiv \sum_{g=2}^{N_{G}+1}%
\mathcal{G}_{g}\left( \hat{r}_{1,g}(X)-r_{1,g}(X)\right) \hat{w}_{g}(X)(\hat{%
m}_{\Delta }(X)-m_{\Delta }(X)),  \notag \\
E_{6}(S;\hat{m}_{\Delta },\hat{p},\hat{w})& \equiv -\sum_{g=2}^{N_{G}+1}(%
\hat{w}_{g}(X)-w_{g}(X))\left( \hat{r}_{1,g}(X)-r_{1,g}(X)\right) \mathcal{G}%
_{g}(\Delta Y-m_{\Delta }(X)).  \notag
\end{align}%

The terms $E_{2},E_{3},E_{4},$ and $E_{5}$ are the same as those in the
proof of Theorem \ref{thm:asymp-dist-1}, and $E_{1}+E_{6}$ is the same as $%
E_{1}$ defined in the proof of Theorem \ref{thm:asymp-dist-1}. As in the
proof of Theorem \ref{thm:asymp-dist-1}, for $j=1,2,4,5,6,$ we can define $%
E_{jg}(S;\hat{m}_{\Delta },\hat{p},\hat{w})$ as the summand in $E_{j}(S;\hat{%
m}_{\Delta },\hat{p},\hat{w})$ so that $E_{j}(S;\hat{m}_{\Delta },\hat{p},%
\hat{w})=\sum_{g=2}^{N_{G}+1}E_{jg}(S;\hat{m}_{\Delta },\hat{p},\hat{w})$.

For $j=1,\ldots ,6$, denote $\hat{E}_{j}\equiv
n^{-1}\sum_{i=1}^{n}E_{j}(S_{i};\hat{m}_{\Delta },\hat{p},\hat{w})$, and for 
$j=1,2,4,5,6,$ denote $\hat{E}_{jg}\equiv n^{-1}\sum_{i=1}^{n}E_{jg}(S_{i};%
\hat{m}_{\Delta },\hat{p},\hat{w}).$ Following the same arguments as in the
proof of Theorem \ref{thm:asymp-dist-1}, we have $\hat{E}%
_{j}=o_{p}(n^{-1/2}),j=3,4,5,6$. In particular, to show that $\hat{E}_6 = o_p(n^{-1/2})$, we use the fact that $\bigl\|\mathbb{E}[\mathcal{G}_g(\Delta Y - m_{\Delta}(X)) \mid X]\bigr\|_\infty = O(1)$, which follows from condition (i) of Theorem \ref{thm:asymp-dist-1} and the Cauchy–Schwarz inequality. Hence, for each $%
j=3,4,5,6,$ $\hat{E}_{j}$ is asymptotically negligible and does not contribute to the asymptotic variance of $\hat{\theta}.$

Unlike the case of $\hat{E}_{j}$ for $j=3,4,5,6$, $\hat{E}_{1}$ and $\hat{E}%
_{2}$ do contribute to the asymptotic variance of $\hat{\theta}$ when
Assumption \ref{asm:PT} is not imposed. We now consider each of $\hat{E}_{1}$
and $\hat{E}_{2}$ in turn. Let $\tilde{E}_{1g}$ denote the analog of $\hat{E}%
_{1g}$ constructed using the population distribution instead of the
empirical distribution, that is, 
\begin{align*}
\tilde{E}_{1g}& \equiv \mathbb{E}[E_{1g}(S;\hat{m}_{\Delta },\hat{p},\hat{w}%
)|\hat{r}_{1,g}] \\
& =-\int_{\mathcal{X}}w_{g}(x)p_{g}(x)(\hat{r}_{1,g}(x)-r_{1,g}(x))(m_{g,%
\Delta }(x)-m_{\Delta }(x))f_{X}(x)dx,
\end{align*}%
where $m_{g,\Delta }(x)\equiv \mathbb{E}[\Delta Y|G=g,X=x]$. For $i\neq
i^{\prime }$, we have 
\begin{equation*}
\mathbb{E}\left[ (\hat{E}_{1g}(S_{i};\hat{m}_{\Delta },\hat{p},\hat{w})-%
\tilde{E}_{1g})(\hat{E}_{1g}(S_{i^{\prime }};\hat{m}_{\Delta },\hat{p},\hat{w%
})-\tilde{E}_{1g})|\hat{r}_{1,g}\right] =0.
\end{equation*}%
Thus, 
\begin{align*}
\mathbb{E}\left[ (\hat{E}_{1g}-\tilde{E}_{1g})^{2}|\hat{r}_{1,g}\right] & =%
\frac{1}{n}\mathbb{E}\left[ (\hat{E}_{1g}(S_{i};\hat{m}_{\Delta },\hat{p},%
\hat{w})-\tilde{E}_{1g})^{2}|\hat{r}_{1,g}\right] \\
& \leq C\lVert \hat{r}_{1,g}-r_{1,g}\rVert _{L_{2}(X)}^{2}/n.
\end{align*}%
This implies that $\hat{E}_{1g}-\tilde{E}_{1g}=o_{p}(n^{-1/2})$. Therefore,
we only need to analyze the asymptotic behavior of $\tilde{E}_{1g}$. By the
uniform Bahadur representation, we have 
\begin{align*}
\tilde{E}_{1g}& =-\frac{1}{nh}\sum_{i=1}^{n}\int_{\mathcal{X}%
}w_{g}(x)p_{g}(x)K\left( \frac{X_{i}-x}{h}\right) \frac{\mathcal{G}%
_{1i}-r_{1,g}(X_{i})\mathcal{G}_{gi}}{p_{g}(X_{i})} \\
& \quad \quad \times (m_{g,\Delta }(x)-m_{\Delta }(x))dx+o_{p}(n^{-1/2}) \\
& =-\frac{1}{n}\sum_{i=1}^{n}\int_{-\infty }^{\infty
}w_{g}(X_{i}+uh)p_{g}(X_{i}+uh)K(u)\frac{\mathcal{G}_{1i}-r_{1,g}(X_{i})%
\mathcal{G}_{gi}}{p_{g}(X_{i})} \\
& \quad \quad \times (m_{g,\Delta }(X_{i}+uh)-m_{\Delta
}(X_{i}+uh))du+o_{p}(n^{-1/2}) \\
& =-\frac{1}{n}\sum_{i=1}^{n}w_{g}(X_{i})(\mathcal{G}_{1i}-r_{1,g}(X_{i})%
\mathcal{G}_{gi})(m_{g,\Delta }(X_{i})-m_{\Delta }(X_{i}))+o_{p}(n^{-1/2}),
\end{align*}%
where a change of variables $uh=X_{i}-x$ is applied to obtain the second
equality, and the third equality follows from the standard Taylor expansion
together with the assumption that $w_{g}$, $p_{g}$, and $m_{g,\Delta }$ are
twice continuously differentiable,\footnote{%
Notice that the derivatives are bounded because $X$ has a compact support.
Also, $m_{g,\Delta }=m_{g,\mathcal{T}}-m_{g,\mathcal{T}-1}$ and is twice
continuously differentiable by assumption.} $K$ is symmetric and satisfies $%
\int_{-\infty }^{\infty }K(u)du=1$ and $\int_{-\infty }^{\infty
}u^{2}K(u)du<\infty $, and $h=o(n^{-1/4})$.

Define $\hat{E}_{2g}$ and $\tilde{E}_{2g}$ analogously, and we can show that 
$\hat{E}_{2g}-\tilde{E}_{2g}=o_{p}(n^{-1/2})$. More specifically, for $%
\tilde{E}_{2g}$ and $\tilde{E}_{2}\equiv \sum_{g=2}^{N_{G}+1}\tilde{E}_{2g}$%
, we have 
\begin{align}
\tilde{E}_{2g}& =-\int_{\mathcal{X}}(\hat{w}_{g}(x)-w_{g}(x))p_{1}(x)(m_{g,%
\Delta }(x)-m_{\Delta }(x))f_{X}(x)dx,  \notag \\
\tilde{E}_{2}& =-\int_{\mathcal{X}}%
\begin{bmatrix}
\hat{\bm{w}}_{0}(x)-\bm{w}_{0}(x) \\ 
-\mathbf{1}_{N_{G}-1}^{\prime }(\hat{\bm{w}}_{0}(x)-\bm{w}_{0}(x))%
\end{bmatrix}%
^{\prime }u(x)f_{X}(x)dx,  \label{eqn:E2_tilde}
\end{align}%
where 
\begin{equation*}
u(x)\equiv p_{1}(x)\left( 
\begin{array}{c}
m_{2,\Delta }(x)-m_{\Delta }(x) \\ 
\vdots  \\ 
m_{N_{G}+1,\Delta }(x)-m_{\Delta }(x)%
\end{array}%
\right) .
\end{equation*}%
Notice that in (\ref{eqn:E2_tilde}), we no longer have the constant term
because of the following cancellation: 
\begin{equation*}
\hat{w}_{N_{G}+1}-w_{N_{G}+1}=1-\sum_{g=2}^{N_{G}}\hat{w}_{g}-(1-%
\sum_{g=2}^{N_{G}}w_{g})=-\mathbf{1}_{N_{G}-1}^{\prime }(\hat{\bm{w}}_{0}(x)-%
\bm{w}_{0}(x)).
\end{equation*}%
For $2\leq g\leq N_{G}$, we uniformly linearize the difference $\hat{\bm{w}}%
_{0}-\bm{w}_{0}$ using a first-order approximation based on matrix calculus:%
\begin{eqnarray}
&&\hat{\bm{w}}_{0}-\bm{w}_{0}  \notag \\
&=&(\hat{M}^{\prime }\hat{M})^{-1}\hat{M}^{\prime }\hat{m}_{1}-(M^{\prime
}M)^{-1}M^{\prime }m_{1}  \notag \\
&=&(M^{\prime }M)^{-1}M^{\prime }(\hat{m}_{1}-m_{1})+(M^{\prime }M)^{-1}(%
\hat{M}-M)^{\prime }m_{1}  \notag \\
&&-(M^{\prime }M)^{-1}[(\hat{M}-M)^{\prime }M+M^{\prime }(\hat{M}%
-M)](M^{\prime }M)^{-1}M^{\prime }m_{1}+o_{p}(n^{-1/2}),
\label{eqn:w0_hat-w0}
\end{eqnarray}%
where the remainder term $o_{p}(n^{-1/2})$ is uniform over $x\in \mathcal{X}$
provided that $\hat{m}_{g,t}-m_{g,t}=o_{p}(n^{-1/4})$ uniformly and that the
matrix $M(x)^{\prime }M(x)$ is invertible uniformly over $x\in \mathcal{X}$.
Combining (\ref{eqn:E2_tilde}) with (\ref{eqn:w0_hat-w0}), we know that up
to an error term of order $o_{p}(n^{-1/2})$, $\tilde{E}_{2}$ is equal to the
sum of the following three terms: 
\begin{align*}
& -\int_{\mathcal{X}}u(x)^{\prime }%
\begin{bmatrix}
(M(x)^{\prime }M(x))^{-1}M(x)^{\prime }(\hat{m}_{1}(x)-m_{1}(x)) \\ 
-\mathbf{1}_{N_{G}-1}^{\prime }\left( M(x)^{\prime }M(x))^{-1}M(x)^{\prime }(%
\hat{m}_{1}(x)-m_{1}(x)\right) 
\end{bmatrix}%
f_{X}(x)\,dx, \\
& -\int_{\mathcal{X}}u(x)^{\prime }%
\begin{bmatrix}
(M(x)^{\prime }M(x))^{-1}(\hat{M}(x)-M(x))^{\prime }m_{1}(x) \\ 
-\mathbf{1}_{N_{G}-1}^{\prime }(M(x)^{\prime }M(x))^{-1}(\hat{M}%
(x)-M(x))^{\prime }m_{1}(x)%
\end{bmatrix}%
f_{X}(x)\,dx, \\
& -\int_{\mathcal{X}}u(x)^{\prime }%
\begin{bmatrix}
-(M(x)^{\prime }M(x))^{-1}\Big[(\hat{M}(x)-M(x))^{\prime }M(x)+M(x)^{\prime
}(\hat{M}(x)-M(x))\Big] \\ 
\times (M(x)^{\prime }M(x))^{-1}M(x)^{\prime }m_{1}(x) \\ 
\mathbf{1}_{N_{G}-1}^{\prime }\Big((M(x)^{\prime }M(x))^{-1}\Big[(\hat{M}%
(x)-M(x))^{\prime }M(x)+M(x)^{\prime }(\hat{M}(x)-M(x))\Big] \\ 
\times (M(x)^{\prime }M(x))^{-1}M(x)^{\prime }m_{1}(x)\Big)%
\end{bmatrix}%
f_{X}(x)\,dx.
\end{align*}%
Furthermore, we can use the uniform Bahadur representation of $\hat{m}_{g,t}$
together with the previous change of variables techniques to show that%
\begin{equation}
\int_{\mathcal{X}}(\hat{m}_{g,t}(x)-m_{g,t}(x))\eta (x)f_{X}(x)dx=\frac{1}{n}%
\sum_{i=1}^{n}\left( \frac{\mathcal{G}_{gi}(Y_{ti}-m_{g,t}(X_{i}))}{%
p_{g}(X_{i})}\right) \eta (X_{i})+o_{p}(n^{-1/2}),  \label{eqn:m_hat-m}
\end{equation}%
for any twice continuously differentiable function $\eta (x)$. Define $A(S)$
as the following $(\mathcal{T}-1)\times (N_{G}-1)$ matrix 
\begin{equation*}
A(S)\equiv 
\begin{pmatrix}
a_{-1,1}(S)^{\prime } \\ 
\vdots  \\ 
a_{-1,\mathcal{T}-1}(S)^{\prime }%
\end{pmatrix}%
,
\end{equation*}%
where $a_{-1,t}(S)$ is defined as 
\begin{equation*}
a_{-1,t}(S)\equiv \left[ 
\begin{array}{c}
\frac{\mathcal{G}_{2}(Y_{t}-m_{2,t}(X))}{p_{2}(X)}-\frac{\mathcal{G}%
_{N_{G}+1}(Y_{t}-m_{N_{G}+1,t}(X))}{p_{N_{G}+1}(X)} \\ 
\vdots  \\ 
\frac{\mathcal{G}_{N_{G}}(Y_{t}-m_{N_{G},t}(X))}{p_{N_{G}}(X)}-\frac{%
\mathcal{G}_{N_{G}+1}(Y_{t}-m_{N_{G}+1,t}(X))}{p_{N_{G}+1}(X)}%
\end{array}%
\right] 
\end{equation*}%
for $1\leq t\leq \mathcal{T}-1.$ Similarly, we define 
\begin{equation*}
a_{1}(S)\equiv \left[ 
\begin{array}{c}
\frac{\mathcal{G}_{1}(Y_{1}-m_{1,1}(X))}{p_{1}(X)}-\frac{\mathcal{G}%
_{N_{G}+1}(Y_{1}-m_{N_{G}+1,1}(X))}{p_{N_{G}+1}(X)} \\ 
\vdots  \\ 
\frac{\mathcal{G}_{1}(Y_{\mathcal{T}-1}-m_{1,\mathcal{T}-1}(X))}{p_{1}(X)}-%
\frac{\mathcal{G}_{N_{G}+1}(Y_{\mathcal{T}-1}-m_{N_{G}+1,\mathcal{T}-1}(X))}{%
p_{N_{G}+1}(X)}%
\end{array}%
\right] .
\end{equation*}%
Define $P(S)\equiv P_{1}(S)+P_{2}(S)+P_{3}(S)$, where 
\begin{align*}
P_{1}(S)& \equiv \left( M(X)^{\prime }M\left( X\right) \right)
^{-1}M(X)^{\prime }a_{1}(S), \\
P_{2}(S)& \equiv \left( M(X)^{\prime }M\left( X\right) \right)
^{-1}A(S)^{\prime }m_{1}(X), \\
P_{3}(S)& \equiv -\left( M(X)^{\prime }M\left( X\right) \right) ^{-1}\left[
A(S)^{\prime }M(X)+M(X)^{\prime }A(S)\right] \left( M(X)^{\prime }M\left(
X\right) \right) ^{-1}M(X)^{\prime }m_{1}(X).
\end{align*}%
To clarify, here $M(X)$ and $A(S)$ are matrices constructed from a single
observation $X$ and $S$, rather than matrices derived from the entire
sample. Based on (\ref{eqn:m_hat-m}), we obtain the influence function for $%
\hat{E}_{2}$ as 
\begin{align*}
& -u(X)^{\prime }%
\begin{bmatrix}
P_{1}(S) \\ 
-\mathbf{1}_{N_{G}-1}^{\prime }P_{1}(S)%
\end{bmatrix}%
-u(X)^{\prime }%
\begin{bmatrix}
P_{2}(S) \\ 
-\mathbf{1}_{N_{G}-1}^{\prime }P_{2}(S)%
\end{bmatrix}%
-u(X)^{\prime }%
\begin{bmatrix}
P_{3}(S) \\ 
-\mathbf{1}_{N_{G}-1}^{\prime }P_{3}(S)%
\end{bmatrix}
\\
& =-u(X)^{\prime }%
\begin{bmatrix}
P(S) \\ 
-\mathbf{1}_{N_{G}-1}^{\prime }P(S)%
\end{bmatrix}%
.
\end{align*}%

Therefore, 
\begin{eqnarray*}
\sqrt{n}\left( \hat{\theta}-\tilde{\theta}\right) &=&\frac{1}{\sqrt{n}\hat{%
\pi}_{1}}\sum_{i=1}^{n}\left[ \phi (S_{i};\hat{m}_{\Delta },\hat{p},\hat{w};%
\hat{\pi}_{1})\hat{\pi}_{1}-\phi (S_{i};m,p,w;\hat{\pi}_{1})\hat{\pi}_{1}%
\right] \\
&=&\frac{1}{\sqrt{n}\pi _{1}}\sum_{i=1}^{n}\left[ \phi (S_{i};\hat{m}%
_{\Delta },\hat{p},\hat{w};\hat{\pi}_{1})\hat{\pi}_{1}-\phi (S_{i};m,p,w;%
\hat{\pi}_{1})\hat{\pi}_{1}\right] \left( 1+o_{p}\left( 1\right) \right) \\
&=&\frac{1}{\sqrt{n}}\sum_{i=1}^{n}\left[ \psi _{1}(S_{i})+\psi _{2}(S_{i})%
\right] +o_{p}\left( 1\right) ,
\end{eqnarray*}%
where 
\begin{eqnarray}
\psi _{1}\left( S\right) &=&-\frac{1}{\pi _{1}}\sum_{g=2}^{N_{G}+1}\left(
w_{g}(X)(\mathcal{G}_{1}-\frac{p_{1}(X)}{p_{g}(X)}\mathcal{G}%
_{g})(m_{g,\Delta }(X)-m_{\Delta }(X))\right) ,  \label{eqn:psi1} \\
\psi _{2}(S) &=&-\frac{1}{\pi _{1}}u(X)^{\prime }%
\begin{bmatrix}
P(S) \\ 
-\mathbf{1}_{N_{G}-1}^{\prime }P(S)%
\end{bmatrix}%
.  \label{eqn:psi2}
\end{eqnarray}

Using the above and (\ref{eqn: theta_tilde_minus_theta}), we have 
\begin{eqnarray*}
\sqrt{n}(\hat{\theta}-\theta ) &=&\sqrt{n}(\hat{\theta}-\tilde{\theta})+%
\sqrt{n}\left( \tilde{\theta}-\theta \right) \\
&=&\frac{1}{\sqrt{n}}\sum_{i=1}^{n}\left( \psi _{\mathrm{PT}}\left(
S_{i}\right) +\psi _{1}\left( S_{i}\right) +\psi _{2}\left( S_{i}\right)
\right) +o_{p}\left( 1\right) \\
&=&\frac{1}{\sqrt{n}}\sum_{i=1}^{n}\psi _{\mathrm{SC}}\left( S_{i}\right)
+o_{p}\left( 1\right) ,
\end{eqnarray*}%
where $\psi_{\mathrm{SC}}$ is defined by
\begin{equation}
\psi _{\mathrm{SC}}(S)=\psi _{\mathrm{PT}}(S)+\psi _{1}(S)+\psi _{2}(S),
\label{eqn:psi_SC}
\end{equation}%
and $\psi _{\mathrm{PT}}(S)$ is defined in (\ref{eqn:psi_PT}). So the
influence function of $\hat{\theta}$ is $\psi _{\mathrm{SC}}(S)$, which
consists of three parts: $\psi _{\mathrm{PT}}(S)$ is the part that assumes
the nuisance functions are known, and $\psi _{1}\left( S\right) $ and $\psi
_{2}\left( S\right) $ are adjustments to the influence function accounting
for errors in estimating the nuisance functions. The asymptotic variance $V_{%
\mathrm{SC}}$ of $\sqrt{n}(\hat{\theta}-\theta )$ is $\mathbb{E}\left[ \psi
_{\mathrm{SC}}(S)^{2}\right] .$
\end{proof}

\begin{lemma}
\label{lm:clt} Let $a_{ni},1\leq i\leq n,$ be a triangular sequence of
deterministic constants such that $\lim_{n\rightarrow \infty
}n^{-1}\sum_{i=1}^{n}a_{ni}^{2}=\alpha ^{2}>0$ and $n^{-1}%
\sum_{i=1}^{n}|a_{ni}|^{2+\delta }<\infty $. Let $X_{ni},1\leq i\leq n,$ be
an iid triangular sequence of random variables such that $\mathbb{E}%
[X_{ni}]=0$, $\mathbb{E}[X_{ni}^{2}]=1$, and $\mathbb{E}[|X_{ni}|^{2+\delta
}]<\infty $ for some $\delta >0$. Then $\frac{1}{\sqrt{n}}%
\sum_{i=1}^{n}a_{ni}X_{ni}$ converges in distribution to $N(0,\alpha ^{2})$,
as $n\rightarrow \infty $.
\end{lemma}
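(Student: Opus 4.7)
The plan is to apply the Lyapunov central limit theorem for triangular arrays of independent random variables. Since $a_{ni}$ is deterministic and $X_{n1},\ldots,X_{nn}$ are iid with mean zero and unit variance, the summands $Z_{ni} \equiv a_{ni} X_{ni}$ are independent across $i$ with $\mathbb{E}[Z_{ni}] = 0$ and $\mathrm{Var}(Z_{ni}) = a_{ni}^2$. The total variance is $s_n^2 \equiv \sum_{i=1}^n a_{ni}^2$, and by the first assumption on $\{a_{ni}\}$ we have $s_n^2/n \to \alpha^2 > 0$.

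The key step is to verify Lyapunov's condition at exponent $2+\delta$, namely
\[
\frac{1}{s_n^{2+\delta}} \sum_{i=1}^n \mathbb{E}\bigl[|Z_{ni}|^{2+\delta}\bigr] \longrightarrow 0.
\]
Because the $X_{ni}$ are identically distributed within each row, the numerator equals $\mathbb{E}[|X_{n1}|^{2+\delta}] \cdot \sum_{i=1}^n |a_{ni}|^{2+\delta}$. The assumption that $n^{-1}\sum_{i=1}^n |a_{ni}|^{2+\delta}$ is bounded (uniformly in $n$) and that $\mathbb{E}[|X_{n1}|^{2+\delta}] < \infty$ together imply the numerator is of order $O(n)$. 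Meanwhile $s_n^{2+\delta} = (s_n^2)^{1+\delta/2}$ is of exact order $n^{1+\delta/2}$ since $s_n^2/n \to \alpha^2 > 0$. Hence the ratio is $O(n^{-\delta/2}) \to 0$.

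With Lyapunov's condition verified, the classical Lyapunov CLT yields $s_n^{-1} \sum_{i=1}^n Z_{ni} \Rightarrow N(0,1)$. Combining this with $s_n/\sqrt{n} \to \alpha$ via Slutsky's lemma gives
\[
\frac{1}{\sqrt{n}} \sum_{i=1}^n a_{ni} X_{ni} \;=\; \frac{s_n}{\sqrt{n}} \cdot \frac{1}{s_n}\sum_{i=1}^n Z_{ni} \;\overset{d}{\longrightarrow}\; N(0,\alpha^2),
\]
as claimed.

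This proof is essentially a bookkeeping exercise; there is no genuine obstacle. The only point that requires a bit of care is the interpretation of the moment bound $n^{-1}\sum_i |a_{ni}|^{2+\delta} < \infty$ and $\mathbb{E}[|X_{ni}|^{2+\delta}] < \infty$ as holding uniformly in $n$, which is the natural and intended reading for triangular arrays; once this is granted, the Lyapunov ratio collapses to the $n^{-\delta/2}$ rate above and the conclusion follows.
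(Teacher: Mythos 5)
Your proposal is correct and follows essentially the same route as the paper's proof: both verify Lyapunov's condition for the triangular array $a_{ni}X_{ni}$, obtain the $n^{-\delta/2}$ rate from the assumed bounds on $n^{-1}\sum_i|a_{ni}|^{2+\delta}$ and $\mathbb{E}[|X_{ni}|^{2+\delta}]$, and then rescale by $s_n/\sqrt{n}\to\alpha$ via Slutsky. No substantive differences.
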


\begin{proof}[Proof of Lemma \protect\ref{lm:clt}]
Define $\tilde{X}_{ni}\equiv a_{ni}X_{ni}$ and apply the Lindeberg-Feller
central limit theorem for triangular arrays to the sequence $\tilde{X}_{ni}$%
. It suffices for us to verify the Lyapunov condition, which implies the
Lindeberg condition. Notice that $\tilde{X}_{ni}$'s are independent with
mean $\mathbb{E}[\tilde{X}_{ni}]=0$ and variance $\mathbb{E}[\tilde{X}%
_{ni}^{2}]=a_{ni}^{2}$. We verify the Lyapunov condition as follows: 
\begin{align*}
\frac{\sum_{i=1}^{n}\mathbb{E}[|\tilde{X}_{ni}|^{2+\delta }]}{\left(
\sum_{i=1}^{n}\mathbb{E}[|\tilde{X}_{ni}|^{2}]\right) ^{(2+\delta )/2}}& =%
\mathbb{E}[|X_{n1}|^{2+\delta }]\frac{\sum_{i=1}^{n}|a_{ni}|^{2+\delta }}{%
\left( \sum_{i=1}^{n}a_{ni}^{2}\right) ^{(2+\delta )/2}} \\
& =\mathbb{E}[|X_{n1}|^{2+\delta }]\frac{\frac{1}{n}%
\sum_{i=1}^{n}|a_{ni}|^{2+\delta }}{\left( \frac{1}{n}%
\sum_{i=1}^{n}a_{ni}^{2}\right) ^{(2+\delta )/2}}n^{-\delta /2}=o(1),
\end{align*}%
where we used the condition that $n^{-1}\sum_{i=1}^{n}a_{ni}^{2}\rightarrow
\alpha ^{2}>0$ and $n^{-1}\sum_{i=1}^{n}|a_{ni}|^{2+\delta }<\infty $. Then,
by the Lindeberg-Feller central limit theorem, we have 
\begin{equation*}
\frac{\frac{1}{\sqrt{n}}\sum_{i=1}^{n}a_{ni}X_{ni}}{\sqrt{\frac{1}{n}%
\sum_{i=1}^{n}a_{ni}^{2}}}=\frac{\sum_{i=1}^{n}a_{ni}X_{ni}}{\sqrt{%
\sum_{i=1}^{n}a_{ni}^{2}}}\overset{d}{\rightarrow }N(0,1),
\end{equation*}%
which implies the desired convergence in distribution result in view of
Slutsky's theorem and the assumption that $n^{-1}\sum_{i=1}^{n}a_{ni}^{2}%
\rightarrow \alpha ^{2}$.
\end{proof}

\begin{proof}[Proof of Theorem \protect\ref{thm:bootstrap}]
Recall that $o_{p}(1)$ and $O_{p}(1)$ are defined under the joint
distribution of the bootstrap weights $\mathcal{W}_{n}$ and the sample $%
\mathcal{S}_{n},$ and $o_{p}^{\ast }(1)$ and $O_{p}^{\ast }(1)$ are defined
based on the conditional probability distribution $\mathcal{W}_{n}$ given
the sample $\mathcal{S}_{n}.$ According to Lemma 3 in 
\cite{cheng2010bootstrap}%
, $o_{p}(1)$ is equivalent to $o_{p}^{\ast }(1)$, and $O_{p}(1)$ is
equivalent to $O_{p}^{\ast }(1)$. In addition, $o_{p}^{\ast }(1)\times
O_{p}^{\ast }(1)=o_{p}^{\ast }(1)$. Therefore, depending on the context, we
will use whichever of the two notions is more convenient or desired.

For Part (i), we want to derive an asymptotic representation for the
bootstrap estimator under the parallel trends condition. Define $\tilde{%
\theta}^{\ast }$ as the bootstrap estimator constructed using the true
values of the nuisance functions: 
\begin{equation*}
\tilde{\theta}^{\ast }\equiv \frac{1}{n}\sum_{i=1}^{n}W_{i}\phi
(S_{i};m_{\Delta },p,w;\hat{\pi}_{1}^{\ast }).
\end{equation*}%
Define the numerator of $\phi $ as $\phi _{\text{num}}(S)\equiv \phi
(S_{i};m_{\Delta },p,w;\pi _{1})\pi _{1}$. The difference between $\tilde{%
\theta}^{\ast }$ and the true $\theta $ is 
\begin{align}
\tilde{\theta}^{\ast }-\theta & =\frac{\frac{1}{n}\sum_{i=1}^{n}W_{i}\phi _{%
\text{num}}(S_{i})}{\hat{\pi}_{1}^{\ast }}-\theta =\frac{1}{\hat{\pi}%
_{1}^{\ast }}\left( \frac{1}{n}\sum_{i=1}^{n}W_{i}\phi _{\text{num}}(S_{i})-%
\hat{\pi}_{1}^{\ast }\theta \right)  \notag \\
& =\frac{1}{\hat{\pi}_{1}^{\ast }}\left[ \frac{1}{n}\sum_{i=1}^{n}W_{i}\phi
_{\text{num}}(S_{i})-\left( \frac{1}{n}\sum_{i=1}^{n}W_{i}\mathcal{G}%
_{1i}\right) \theta \right] =\frac{1}{\hat{\pi}_{1}^{\ast }}\frac{1}{n}%
\sum_{i=1}^{n}W_{i}\left( \phi _{\text{num}}(S_{i})-\mathcal{G}_{1i}\theta
\right)  \notag \\
& =\frac{1}{\pi _{1}}\frac{1}{n}\sum_{i=1}^{n}W_{i}\left( \phi _{\text{num}%
}(S_{i})-\mathcal{G}_{1i}\theta \right) +\frac{1}{n}\sum_{i=1}^{n}W_{i}%
\left( \phi _{\text{num}}(S_{i})-\theta \mathcal{G}_{1i}\right) \left( \frac{%
1}{\hat{\pi}_{1}^{\ast }}-\frac{1}{\pi _{1}}\right)  \notag \\
& =\frac{1}{n}\sum_{i=1}^{n}W_{i}\psi _{\mathrm{PT}}(S_{i})+o_{p}(n^{-1/2}),
\label{eqn:theta_tilde_star-theta}
\end{align}%
where $\psi _{\mathrm{PT}}(S)=\phi (S;m_{\Delta },p,w;\pi _{1})-\theta 
\mathcal{G}_{1}/\pi _{1}$, as defined in (\ref{eqn:psi_PT}), and we have
used the fact that $\hat{\pi}_{1}^{\ast }=\pi _{1}+o_{p}(1)$ with $\pi _{1}$
being strictly positive (which is true under Assumption \ref{asm:overlap}).
Using the notations from the proof of Theorem \ref{thm:asymp-dist-1}, the
difference between $\hat{\theta}^{\ast }$ and $\tilde{\theta}^{\ast }$ is
equal to 
\begin{equation*}
\hat{\theta}^{\ast }-\tilde{\theta}^{\ast }=\frac{1}{\hat{\pi}_{1}^{\ast }}%
\left( \sum_{g=2}^{N_{G}+1}\sum_{j=1,2,4,5}\hat{E}_{jg}^{\ast }+\hat{E}%
_{3}^{\ast }\right) ,
\end{equation*}%
where the superscript `$\ast $' denotes that the object is constructed using
the bootstrap weights and nuisance estimates, for example, 
\begin{equation}
\hat{E}_{jg}^{\ast }\equiv \frac{1}{n}\sum_{i=1}^{n}W_{i}E_{jg}(S_{i};\hat{m}%
_{\Delta }^{\ast },\hat{p}^{\ast },\hat{w}^{\ast }).
\label{eqn:E-star-definition}
\end{equation}%
Because the weights $W_{i}$ are iid and independent of the data $\mathcal{S}%
_{n}$ and the bootstrapped nuisance estimators satisfy the same requirements
in Theorem \ref{thm:asymp-dist-1}, we can follow the same procedure as in
the proof of Theorem \ref{thm:asymp-dist-1} to show that $\hat{\theta}^{\ast
}-\tilde{\theta}^{\ast }=o_{p}(n^{-1/2})$. Combining this with (\ref%
{eqn:theta_tilde_star-theta}), we obtain that 
\begin{equation*}
\sqrt{n}(\hat{\theta}^{\ast }-\theta )=\frac{1}{\sqrt{n}}\sum_{i=1}^{n}W_{i}%
\psi _{\mathrm{PT}}(S_{i})+o_{p}(1).
\end{equation*}%
From the proof of Theorem \ref{thm:asymp-dist-1}, we also have 
\begin{equation*}
\sqrt{n}(\hat{\theta}-\theta )=\frac{1}{\sqrt{n}}\sum_{i=1}^{n}\psi _{%
\mathrm{PT}}(S_{i})+o_{p}(1).
\end{equation*}%
Subtracting the two representations and applying Lemma 3 in 
\cite{cheng2010bootstrap}%
, we obtain that 
\begin{equation*}
\sqrt{n}(\hat{\theta}^{\ast }-\hat{\theta})=\frac{1}{\sqrt{n}}%
\sum_{i=1}^{n}\left( W_{i}-1\right) \psi _{\mathrm{PT}}(S_{i})+o_{p}^{\ast
}(1).
\end{equation*}%
Because we assume that $\mathbb{E}[|\psi _{\mathrm{PT}}(S_{i})|^{2+\delta }]$
exists for some positive $\delta $, by the strong law of large numbers, we
have almost surely that $n^{-1}\sum_{i=1}^{n}|\psi _{\mathrm{PT}%
}(S_{i})|^{2+\delta }<\infty $ and that $n^{-1}\sum_{i=1}^{n}\psi _{\mathrm{%
PT}}(S_{i})^{2}$ converges to $\mathbb{E}[\psi _{\mathrm{PT}}(S_{i})^{2}]$
almost surely. By Lemma \ref{lm:clt} and the independence between the
bootstrap weights $\mathcal{W}_{n}$ and the sample data $\mathcal{S}_{n}$,
we have that 
\begin{equation*}
\left\vert \mathbb{P}^{\ast }\left( \sqrt{n}(\hat{\theta}^{\ast }-\hat{\theta%
})\leq a\big|\mathcal{S}_{n}\right) -\mathbb{P}\left( \sqrt{n}(\hat{\theta}%
-\theta )\leq a\right) \right\vert =o_{p}(1).
\end{equation*}%
The uniformity (over the real line) in the convergence follows from Lemma
2.11 of \cite{vanderVaart1998}.

For Part (ii), using the notations from the proof of Theorem \ref%
{thm:asymp-dist-2}, the difference between $\hat{\theta}^{\ast }$ and $%
\tilde{\theta}^{\ast }$ is equal to 
\begin{equation*}
\hat{\theta}^{\ast }-\tilde{\theta}^{\ast }=\frac{1}{\hat{\pi}_{1}^{\ast }}%
\left( \sum_{j=1,2,4,5,6}\hat{E}_{j}^{\ast }+\hat{E}_{3}^{\ast }\right) :=%
\frac{1}{\hat{\pi}_{1}^{\ast }}\left( \sum_{j=1,2,4,5,6}\sum_{g=2}^{N_{G}+1}%
\hat{E}_{jg}^{\ast }+\hat{E}_{3}^{\ast }\right) ,
\end{equation*}%
where the superscript `$\ast $' denotes that the object is constructed using
the bootstrap weights and nuisance estimates. Similar to (\ref%
{eqn:E-star-definition}), we define $\hat{E}_{jg}^{\ast }$ and $\hat{E}%
_{g}^{\ast }$ as the bootstrap version of the quantities in (\ref%
{eqn:E1_E6-decomposition}). Because the weights $W_{i}$ are iid and
independent of the data $\mathcal{S}_{n}$, we can show that $\hat{E}%
_{j}^{\ast }=o_{p}(n^{-1/2})$ for $j=3,4,5,6$ by using arguments similar to
those in the proof of Theorem \ref{thm:asymp-dist-1} and in Part (i) of this
proof.

Similar to the proof of Theorem \ref{thm:asymp-dist-2}, we let $\tilde{E}%
_{1g}^{\ast }$ denote the analog of $\hat{E}_{1g}^{\ast }$ constructed using
the population distribution instead of the empirical distribution, i.e., 
\begin{align*}
\tilde{E}_{1g}^{\ast }& \equiv \mathbb{E}[W_{i}E_{1g}(S_{i};\hat{m}_{\Delta
}^{\ast },\hat{p}^{\ast },\hat{w}^{\ast })|\hat{r}_{1,g}^{\ast }] \\
& =\underbrace{\mathbb{E}[W_{i}|\hat{r}_{1,g}^{\ast }]}_{=1}\mathbb{E}%
[E_{1g}(S_{i};\hat{m}_{\Delta }^{\ast },\hat{p}^{\ast },\hat{w}^{\ast })|%
\hat{r}_{1,g}^{\ast }] \\
& =-\int_{\mathcal{X}}w_{g}(x)p_{g}(x)(\hat{r}_{1,g}^{\ast
}(x)-r_{1,g}(x))(m_{g,\Delta }(x)-m_{\Delta }(x))f_{X}(x)dx,
\end{align*}%
where, due to cross-fitting, $\hat{r}_{1,g}^{\ast }$ is constructed using
weights independent of $W_{i}$ and $S_{i}$. We can follow the same procedure
as in the proof of Theorem \ref{thm:asymp-dist-2} to show that $\hat{E}%
_{1g}^{\ast }-\tilde{E}_{1g}^{\ast }=o_{p}(n^{-1/2})$. More specifically,
for $i\neq i^{\prime }$, we have 
\begin{equation*}
\mathbb{E}\left[ (\hat{E}_{1g}^{\ast }(S_{i};\hat{m}_{\Delta }^{\ast },\hat{p%
}^{\ast },\hat{w}^{\ast })-\tilde{E}_{1g}^{\ast })(\hat{E}_{1g}^{\ast
}(S_{i^{\prime }};\hat{m}_{\Delta }^{\ast },\hat{p}^{\ast },\hat{w}^{\ast })-%
\tilde{E}_{1g}^{\ast })|\hat{r}_{1,g}^{\ast }\right] =0.
\end{equation*}%
Thus, 
\begin{align*}
\mathbb{E}\left[ (\hat{E}_{1g}^{\ast }-\tilde{E}_{1g}^{\ast })^{2}|\hat{r}%
_{1,g}^{\ast }\right] & =\frac{1}{n}\mathbb{E}\left[ (\hat{E}_{1g}^{\ast
}(S_{i};\hat{m}_{\Delta }^{\ast },\hat{p}^{\ast },\hat{w}^{\ast })-\tilde{E}%
_{1g}^{\ast })^{2}|\hat{r}_{1,g}^{\ast }\right] \\
& \leq C\lVert \hat{r}_{1,g}^{\ast }-r_{1,g}\rVert _{L_{2}(X)}^{2}/n.
\end{align*}%
This implies that $\hat{E}_{1g}^{\ast }-\tilde{E}_{1g}^{\ast
}=o_{p}(n^{-1/2})$. Therefore, we only need to analyze the asymptotic
behavior of $\tilde{E}_{1g}^{\ast }$. By the uniform Bahadur representation
specified in Theorem \ref{thm:bootstrap}, we have 
\begin{align*}
\tilde{E}_{1g}^{\ast }& =-\frac{1}{nh}\sum_{i=1}^{n}\int_{\mathcal{X}%
}w_{g}(x)p_{g}(x)W_{i}K\left( \frac{X_{i}-x}{h}\right) (\mathcal{G}%
_{1i}-r_{1,g}(X_{i})\mathcal{G}_{gi})/p_{g}(X_{i}) \\
& \quad \quad \times (m_{g,\Delta }(x)-m_{\Delta }(x))dx+o_{p}(n^{-1/2}) \\
& =-\frac{1}{n}\sum_{i=1}^{n}\int_{-\infty }^{\infty
}w_{g}(X_{i}+uh)p_{g}(X_{i}+uh)K(u)W_{i}(\mathcal{G}_{1i}-r_{1,g}(X_{i})%
\mathcal{G}_{gi})/p_{g}(X_{i}) \\
& \quad \quad \times (m_{g,\Delta }(X_{i}+uh)-m_{\Delta
}(X_{i}+uh))du+o_{p}(n^{-1/2}) \\
& =\frac{1}{n}\sum_{i=1}^{n}W_{i}\psi _{1}(S_{i})+o_{p}(n^{-1/2}),
\end{align*}%
where a change of variables $uh=X_{i}-x$ is applied as before, and $\psi
_{1} $ is the adjustment term defined in (\ref{eqn:psi1}). Note that the
bootstrap weights can be arranged to the left of $\psi _{1}(S_{i})$ because
the nuisance functions $\hat{r}_{1,g}^{\ast }(x)-r_{1,g}(x)$ enter $\tilde{E}%
_{1g}^{\ast }$ in a linear way. The bootstrap nuisance estimates can also be
linearized into $\tilde{E}_{2}^{\ast }$ according to (\ref{eqn:E2_tilde})
and (\ref{eqn:w0_hat-w0}). Then we can follow the same procedure to derive
the influence function for $\hat{E}_{2}^{\ast }$, which is $\hat{E}%
_{2}^{\ast }=\frac{1}{n}\sum_{i=1}^{n}W_{i}\psi _{2}(S_{i})+o_{p}(n^{-1/2})$
where $\psi _{2}$ is the adjustment term defined in (\ref{eqn:psi2}).
Therefore, we have 
\begin{equation*}
\hat{\theta}^{\ast }-\tilde{\theta}^{\ast }=\frac{1}{n}\sum_{i=1}^{n}W_{i}(%
\psi _{1}(S_{i})+\psi _{2}(S_{i}))+o_{p}(n^{-1/2}).
\end{equation*}%
The remainder of the proof follows the same steps as in Part (i) to apply
Lemma \ref{lm:clt}.
\end{proof}

\begin{proof}[Proof of Corollary \protect\ref{thm:rc-identification}]
Under Assumption \ref{asm:time-invariance}, we have 
\begin{equation*}
\mathbb{E}[\phi _{\text{rc}}(S^{\text{rc}};\mu _{G\neq 1,\mathcal{T}},\mu
_{G\neq 1,\mathcal{T}},p,w;\pi _{1},\lambda _{\mathcal{T}},\lambda _{%
\mathcal{T}-1})]=\mathbb{E}[\phi (S;m_{\Delta },p,w;\pi _{1})],
\end{equation*}%
where, on the right-hand side, we replace $\Delta Y$ and $m_{\Delta }(X)$ in 
$\phi (S;m_{\Delta },p,w;\pi _{1})$ with%
\begin{equation*}
\left( \frac{\mathfrak{T}_{\mathcal{T}}}{\lambda _{\mathcal{T}}}-\frac{%
\mathfrak{T}_{\mathcal{T}-1}}{\lambda _{\mathcal{T}-1}}\right) Y\text{ and }%
\left( \frac{\mu _{G\neq 1,\mathcal{T}}(X)}{\lambda _{\mathcal{T}}}-\frac{%
\mu _{G\neq 1,\mathcal{T}-1}(X)}{\lambda _{\mathcal{T}-1}}\right) ,
\end{equation*}%
respectively. Then the claims of the corollary follow from Theorem \ref%
{thm:identification}.
\end{proof}

\begin{proof}[Proof of Theorem \protect\ref{thm:rc-PT}]
We follow similar steps as in the proof of Theorem \ref{thm:asymp-dist-1}
and assume that the nuisance estimators are constructed from another
independent sample. Define $\tilde{\theta}_{\text{rc}}$ as the infeasible
estimator constructed using the true nuisance parameters: 
\begin{equation*}
\tilde{\theta}_{\text{rc}}\equiv \frac{1}{n}\sum_{i=1}^{n}\phi _{\text{rc}%
}(S_{i}^{\text{rc}};\mu _{G\neq 1,\mathcal{T}},\mu _{G\neq 1,\mathcal{T}%
-1},p,w;\hat{\pi}_{1},\hat{\lambda}_{\mathcal{T}},\hat{\lambda}_{\mathcal{T}%
-1}).
\end{equation*}%
The difference between the numerators of $\hat{\theta}_{\text{rc}}$ and $%
\tilde{\theta}_{\text{rc}}$ parallels that between $\hat{\theta}$ and $%
\tilde{\theta}$. Specifically, this involves substituting $\Delta Y$ with $%
\left( \frac{\mathfrak{T}_{\mathcal{T}}}{\hat{\lambda}_{\mathcal{T}}}-\frac{%
\mathfrak{T}_{\mathcal{T}-1}}{\hat{\lambda}_{\mathcal{T}-1}}\right) Y$, $%
m_{\Delta }(X)$ with $\left( \frac{\mu _{G\neq 1,\mathcal{T}}(X)}{\hat{%
\lambda}_{\mathcal{T}}}-\frac{\mu _{G\neq 1,\mathcal{T}-1}(X)}{\hat{\lambda}%
_{\mathcal{T}-1}}\right) $, and $\hat{m}_{\Delta }(X)$ with $\left( \frac{%
\hat{\mu}_{G\neq 1,\mathcal{T}}(X)}{\hat{\lambda}_{\mathcal{T}}}-\frac{\hat{%
\mu}_{G\neq 1,\mathcal{T}-1}(X)}{\hat{\lambda}_{\mathcal{T}-1}}\right) $. In
particular, we obtain that 
\begin{align*}
& \phi _{\text{rc}}(S_{i}^{\text{rc}};\hat{\mu}_{G\neq 1,\mathcal{T}},\hat{%
\mu}_{G\neq 1,\mathcal{T}-1},\hat{p},\hat{w};\hat{\pi}_{1},\hat{\lambda}_{%
\mathcal{T}},\hat{\lambda}_{\mathcal{T}-1})\hat{\pi}_{1}-\phi _{\text{rc}%
}(S_{i}^{\text{rc}};\mu _{G\neq 1,\mathcal{T}},\mu _{G\neq 1,\mathcal{T}%
-1},p,w;\hat{\pi}_{1},\hat{\lambda}_{\mathcal{T}},\hat{\lambda}_{\mathcal{T}%
-1})\hat{\pi}_{1} \\
=& \sum_{t=\mathcal{T}-1}^{\mathcal{T}}(-1)^{\mathcal{T}-t}\left(
\sum_{g=2}^{N_{G}+1}(E_{\text{rc},1g,t}+E_{\text{rc},2g,t}+E_{\text{rc}%
,4g,t}+E_{\text{rc},5g,t})+E_{\text{rc},3,t}\right) ,
\end{align*}%
where 
\begin{align*}
E_{\text{rc},1g,t}(S^{\text{rc}};\hat{\mu}_{G\neq 1,t},\hat{p},\hat{w};\hat{%
\lambda}_{t})& \equiv -\hat{w}_{g}(X)\mathcal{G}_{g}\left( \frac{\hat{p}%
_{1}(X)}{\hat{p}_{g}(X)}-\frac{p_{1}(X)}{p_{g}(X)}\right) \left( \frac{%
\mathfrak{T}_{t}Y-\mu _{G\neq 1,t}(X)}{\hat{\lambda}_{t}}\right) , \\
E_{\text{rc},2g,t}(S^{\text{rc}};\hat{\mu}_{G\neq 1,t},\hat{p},\hat{w};\hat{%
\lambda}_{t})& \equiv -(\hat{w}_{g}(X)-w_{g}(X))\mathcal{G}_{g}\frac{p_{1}(X)%
}{p_{g}(X)}\left( \frac{\mathfrak{T}_{t}Y-\mu _{G\neq 1,t}(X)}{\hat{\lambda}%
_{t}}\right) , \\
E_{\text{rc},3,t}(S^{\text{rc}};\hat{\mu}_{G\neq 1,t},\hat{p},\hat{w};\hat{%
\lambda}_{t})& \equiv -\left( \mathcal{G}_{1}-\sum_{g=2}^{N_{G}+1}w_{g}(X)%
\mathcal{G}_{g}\frac{p_{1}(X)}{p_{g}(X)}\right) \left( \frac{\hat{\mu}%
_{G\neq 1,t}(X)-\mu _{G\neq 1,t}(X)}{\hat{\lambda}_{t}}\right) , \\
E_{\text{rc},4g,t}(S^{\text{rc}};\hat{\mu}_{G\neq 1,t},\hat{p},\hat{w};\hat{%
\lambda}_{t})& \equiv \mathcal{G}_{g}\frac{p_{1}(X)}{p_{g}(X)}(\hat{w}%
_{g}(X)-w_{g}(X))\left( \frac{\hat{\mu}_{G\neq 1,t}(X)-\mu _{G\neq 1,t}(X)}{%
\hat{\lambda}_{t}}\right) , \\
E_{\text{rc},5g,t}(S^{\text{rc}};\hat{\mu}_{G\neq 1,t},\hat{p},\hat{w};\hat{%
\lambda}_{t})& \equiv \mathcal{G}_{g}\left( \frac{\hat{p}_{1}(X)}{\hat{p}%
_{g}(X)}-\frac{p_{1}(X)}{p_{g}(X)}\right) \hat{w}_{g}(X)\left( \frac{\hat{\mu%
}_{G\neq 1,t}(X)-\mu _{G\neq 1,t}(X)}{\hat{\lambda}_{t}}\right) .
\end{align*}%
Each of the above terms can be shown to be $o_{p}(n^{-1/2})$. To see this,
first observe that each $\hat{\lambda}_{t}$ in the denominator can be
replaced with the true $\lambda _{t}$, introducing an error of only $%
o_{p}(n^{-1/2})$. This follows from the consistency of the nuisance
estimators and the fact that $1/\hat{\lambda}_{t}=1/\lambda
_{t}+O_{p}(n^{-1/2})$, which is a consequence of the central limit theorem
and the assumption that $\lambda _{t}>0$. Additionally, $\mathbb{E}\left[
\left( (\mathfrak{T}_{\mathcal{T}}Y-\mu _{G\neq 1,\mathcal{T}}(X))/\lambda _{%
\mathcal{T}}-(\mathfrak{T}_{\mathcal{T}-1}Y-\mu _{G\neq 1,\mathcal{T}%
-1}(X))/\lambda _{\mathcal{T}-1}\right) ^{2}\mid X,G=g\right] $ is bounded
by assumption. The remaining steps then follow directly from the proof of
Theorem \ref{thm:asymp-dist-1}, yielding $\hat{\theta}_{\text{rc}}=\tilde{%
\theta}_{\text{rc}}+o_{p}(n^{-1/2})$. We apply the delta method to obtain
the influence function for $\hat{\theta}_{\text{rc}}$. Notice that $\tilde{%
\theta}_{\text{rc}}$ is a combination of five sample averages: 
\begin{equation*}
\tilde{\theta}_{\text{rc}}=\frac{\bar{\varphi}_{\mathcal{T}}}{\hat{\pi}_{1}%
\hat{\lambda}_{\mathcal{T}}}-\frac{\bar{\varphi}_{\mathcal{T}-1}}{\hat{\pi}%
_{1}\hat{\lambda}_{\mathcal{T}-1}}\equiv h\left( \left[ \hat{\pi}_{1},\hat{%
\lambda}_{\mathcal{T}},\hat{\lambda}_{\mathcal{T}-1},\bar{\phi}_{\mathcal{T}%
},\bar{\phi}_{\mathcal{T}-1}\right] ^{\prime }\right) ,
\end{equation*}%
where $h\left( z\right) =h(\left[ z_{1},z_{2},z_{3},z_{4},z_{5}\right]
^{\prime })\equiv \frac{z_{4}}{z_{1}z_{2}}-\frac{z_{5}}{z_{1}z_{3}}$, and $%
\bar{\varphi}_{\mathcal{T}}$ and $\bar{\varphi}_{\mathcal{T}-1}$ denote the
respective sample averages of 
\begin{align*}
\varphi _{\mathcal{T}}& \equiv \left( \mathcal{G}_{1}-\sum_{g=2}^{N_{G}+1}%
\mathcal{G}_{g}w_{g}(X)\,\frac{p_{1}(X)}{p_{g}(X)}\right) \left( \mathfrak{T}%
_{\mathcal{T}}Y-\mu _{G\neq 1,\mathcal{T}}(X)\right) , \\
\varphi _{\mathcal{T}-1}& \equiv \left( \mathcal{G}_{1}-\sum_{g=2}^{N_{G}+1}%
\mathcal{G}_{g}w_{g}(X)\,\frac{p_{1}(X)}{p_{g}(X)}\right) \left( \mathfrak{T}%
_{\mathcal{T}-1}Y-\mu _{G\neq 1,\mathcal{T}-1}(X)\right) .
\end{align*}%
According to the central limit theorem, we have that 
\begin{align*}
& \sqrt{n}\left\{ 
\begin{bmatrix}
\hat{\pi}_{1},\hat{\lambda}_{\mathcal{T}},\hat{\lambda}_{\mathcal{T}-1},\bar{%
\varphi}_{\mathcal{T}},\bar{\varphi}_{\mathcal{T}-1}%
\end{bmatrix}%
^{\prime }-%
\begin{bmatrix}
\pi _{1},\lambda _{\mathcal{T}},\lambda _{\mathcal{T}-1},\mathbb{E}\left(
\varphi _{\mathcal{T}}\right) ,\mathbb{E}\left( \varphi _{\mathcal{T}%
-1}\right)%
\end{bmatrix}%
^{\prime }\right\} \\
\overset{d}{\rightarrow }& N\left( 0,\mathrm{var}%
\begin{bmatrix}
\left( \mathcal{G}_{1},\mathfrak{T}_{\mathcal{T}},\mathfrak{T}_{\mathcal{T}%
-1},\varphi _{\mathcal{T}},\varphi _{\mathcal{T}-1}\right) ^{\prime }%
\end{bmatrix}%
\right) .
\end{align*}%
Let $z_{0}=[\pi _{1},\lambda _{\mathcal{T}},\lambda _{\mathcal{T}-1},\mathbb{%
E}\left( \varphi _{\mathcal{T}}\right) ,\mathbb{E}\left( \varphi _{\mathcal{T%
}-1}\right) ]^{\prime }.$ We can calculate the gradient of the function $h$
as 
\begin{align*}
\frac{\partial h}{\partial z_{1}}|_{z=z_{0}}& =-\frac{1}{\pi _{1}}\left( 
\frac{\mathbb{E}[\varphi _{\mathcal{T}}]}{\pi _{1}\lambda _{\mathcal{T}}}-%
\frac{\mathbb{E}\left( \varphi _{\mathcal{T}-1}\right) }{\pi _{1}\lambda _{%
\mathcal{T}-1}}\right) =-\frac{\theta }{\pi _{1}},\quad \frac{\partial h}{%
\partial z_{2}}|_{z=z_{0}}=-\frac{1}{\lambda _{\mathcal{T}}}\left( \frac{%
\mathbb{E}[\varphi _{\mathcal{T}}]}{\pi _{1}\lambda _{\mathcal{T}}}\right) ,
\\
\frac{\partial h}{\partial z_{3}}|_{z=z_{0}}& =\frac{1}{\lambda _{\mathcal{T}%
-1}}\left( \frac{\mathbb{E}\left( \varphi _{\mathcal{T}-1}\right) }{\pi
_{1}\lambda _{\mathcal{T}-1}}\right) ,\quad \frac{\partial h}{\partial z_{4}}%
|_{z=z_{0}}=\frac{1}{\pi _{1}\lambda _{\mathcal{T}}},\quad \frac{\partial h}{%
\partial z_{5}}|_{z=z_{0}}=-\frac{1}{\pi _{1}\lambda _{\mathcal{T}-1}}.
\end{align*}%
Using the delta method, we have: $\sqrt{n}(\tilde{\theta}_{\text{rc}}-\theta
)\overset{d}{\rightarrow }V_{\mathrm{PT}}^{\text{rc}}$, where $V_{\mathrm{PT}%
}^{\text{rc}}$ is defined as 
\begin{align*}
V_{\mathrm{PT}}^{\text{rc}}& \equiv \mathrm{var}\left[ \left( \mathcal{G}%
_{1},\mathfrak{T}_{\mathcal{T}},\mathfrak{T}_{\mathcal{T}-1},\varphi _{%
\mathcal{T}},\varphi _{\mathcal{T}-1}\right) \left( \frac{\partial h}{%
\partial z}|_{z=z_{0}}\right) \right] \\
& =\mathrm{var}\left[ -\frac{\theta }{\pi _{1}}\mathcal{G}_{1}-\frac{1}{%
\lambda _{\mathcal{T}}}\left( \frac{\mathbb{E}\left( \varphi _{\mathcal{T}%
}\right) }{\pi _{1}\lambda _{\mathcal{T}}}\right) \mathfrak{T}_{\mathcal{T}}+%
\frac{1}{\lambda _{\mathcal{T}-1}}\left( \frac{\mathbb{E}\left( \varphi _{%
\mathcal{T}-1}\right) }{\pi _{1}\lambda _{\mathcal{T}-1}}\right) \mathfrak{T}%
_{\mathcal{T}-1}+\frac{\varphi _{\mathcal{T}}}{\pi _{1}\lambda _{\mathcal{T}}%
}-\frac{\varphi _{\mathcal{T}-1}}{\pi _{1}\lambda _{\mathcal{T}-1}}\right] .
\end{align*}%
It follows from Corollary \ref{thm:rc-identification} that 
\begin{equation*}
\frac{\mathbb{E}[\varphi _{\mathcal{T}}]}{\pi _{1}\lambda _{\mathcal{T}}}-%
\frac{\mathbb{E}[\varphi _{\mathcal{T}-1}]}{\pi _{1}\lambda _{\mathcal{T}-1}}%
=\theta .
\end{equation*}%
The influence function of $\tilde{\theta}_{\text{rc}}$, and hence $\hat{%
\theta}_{\text{rc}}$, obtained via the delta method, is then given by 
\begin{align}
\psi _{\mathrm{PT}}^{\text{rc}}(S^{\text{rc}})& =-\frac{\theta }{\pi _{1}}%
\mathcal{G}_{1}-\frac{\mathfrak{T}_{\mathcal{T}}-\lambda _{\mathcal{T}}}{%
\lambda _{\mathcal{T}}}\left( \frac{\mathbb{E}\left( \varphi _{\mathcal{T}%
-1}\right) }{\pi _{1}\lambda _{\mathcal{T}-1}}+\theta \right) +\frac{%
\mathfrak{T}_{\mathcal{T}-1}-\lambda _{\mathcal{T}-1}}{\lambda _{\mathcal{T}%
-1}}\left( \frac{\mathbb{E}\left( \varphi _{\mathcal{T}-1}\right) }{\pi
_{1}\lambda _{\mathcal{T}-1}}\right)  \notag \\
& +\frac{1}{\pi _{1}}\left( \mathcal{G}_{1}-\sum_{g=2}^{N_{G}+1}\mathcal{G}%
_{g}w_{g}(X)\frac{p_{1}(X)}{p_{g}(X)}\right) \left( \frac{\mathfrak{T}_{%
\mathcal{T}}Y-\mu _{G\neq 1,\mathcal{T}}(X)}{\lambda _{\mathcal{T}}}-\frac{%
\mathfrak{T}_{\mathcal{T}-1}Y-\mu _{G\neq 1,\mathcal{T}-1}(X)}{\lambda _{%
\mathcal{T}-1}}\right) .  \label{eqn:IF-theta-rc-PT}
\end{align}%
Obviously, $V_{\mathrm{PT}}^{\text{rc}}=\mathbb{E}[\psi _{\mathrm{PT}}^{%
\text{rc}}(S^{\text{rc}})^{2}].$

\end{proof}

\begin{proof}[Proof of Theorem \protect\ref{thm:rc-SC}]
Similar to the proof of Theorem \ref{thm:asymp-dist-2}, we can show that the
adjustment term in the influence function of $\hat{\theta}_{\text{rc}}$ will
be generated by the following terms: 
\begin{equation*}
\sum_{g=2}^{N_{G}+1}(\hat{E}_{\text{rc},1g,\mathcal{T}}-\hat{E}_{\text{rc}%
,1g,\mathcal{T}-1}+\hat{E}_{\text{rc},2g,\mathcal{T}}-\hat{E}_{\text{rc},2g,%
\mathcal{T}-1}),
\end{equation*}%
where $\hat{E}_{\text{rc},jg,t}\equiv \frac{1}{n}\sum_{i=1}^{n}E_{\text{rc}%
,jg,t}(S_{i}^{\text{rc}};\hat{\mu}_{G\neq 1,t},\hat{p},\hat{w};\hat{\lambda}%
_{t})$ and 
\begin{align*}
E_{\text{rc},1g,t}(S^{\text{rc}};\hat{\mu}_{G\neq 1,t},\hat{p},\hat{w};\hat{%
\lambda}_{t})& \equiv -w_{g}(X)\mathcal{G}_{g}\left( \hat{r}%
_{1,g}(X)-r_{1,g}(X)\right) \left( \frac{\mathfrak{T}_{t}Y-\mu _{G\neq
1,t}(X)}{\hat{\lambda}_{t}}\right) , \\
E_{\text{rc},2g,t}(S^{\text{rc}};\hat{\mu}_{G\neq 1,t},\hat{p},\hat{w};\hat{%
\lambda}_{t})& \equiv -(\hat{w}_{g}(X)-w_{g}(X))\mathcal{G}%
_{g}r_{1,g}(X)\left( \frac{\mathfrak{T}_{t}Y-\mu _{G\neq 1,t}(X)}{\hat{%
\lambda}_{t}}\right) .
\end{align*}%
First, similar to the proof of Theorem \ref{thm:rc-SC}, we can replace each $%
\hat{\lambda}_{t}$ in the denominator with the true $\lambda _{t}$,
introducing an error of only $o_{p}(n^{-1/2})$. Then we follow the proof of
Theorem \ref{thm:asymp-dist-2} to derive the influence functions for the
terms $\hat{E}_{\text{rc},jg,t}$. For $\hat{E}_{\text{rc},1g,t}$, we have 
\begin{equation*}
\hat{E}_{\text{rc},1g,t}=-\frac{1}{n}\sum_{i=1}^{n}w_{g}(X_{i})(\mathcal{G}%
_{1i}-r_{1g}(X_{i})\mathcal{G}_{gi})\frac{\mu _{g,t}(X_{i})-\mu _{G\neq
1,t}(X_{i})}{\lambda _{t}}+o_{p}(n^{-1/2}),
\end{equation*}%
where $\mu _{g,t}(x)\equiv \mathbb{E}[\mathfrak{T}_{t}Y|G=g,X=x]$ and so the
influence function is 
\begin{equation*}
-w_{g}(X)(\mathcal{G}_{1}-r_{1g}(X)\mathcal{G}_{g})\frac{\mu _{g,t}(X)-\mu
_{G\neq 1,t}(X)}{\lambda _{t}}.
\end{equation*}

Similarly, the influence function for $\hat{E}_{\text{rc},2g,t}$ is 
\begin{equation*}
-u_{\text{rc},t}(X)^{\prime }\left( 
\begin{array}{c}
P_{\text{rc}}(S^{\text{rc}}) \\ 
-\mathbf{1}_{N_{G}-1}^{\prime }P_{\text{rc}}(S^{\text{rc}})%
\end{array}%
\right) ,
\end{equation*}%
where 
\begin{align*}
u_{\text{rc},t}(X)& \equiv \frac{p_{1}(X)}{\lambda _{t}}%
\begin{pmatrix}
\mu _{2,t}(X)-\mu _{G\neq 1,t}(X) \\ 
\vdots  \\ 
\mu _{N_{G}+1,t}(X)-\mu _{G\neq 1,t}(X)%
\end{pmatrix}%
, \\
P_{\text{rc}}(S^{\text{rc}})& \equiv (M_{\text{rc}}(X)^{\prime }M_{\text{rc}%
}(X))^{-1}M_{\text{rc}}(X)^{\prime }a_{1,\text{rc}}(S^{\text{rc}}) \\
& +(M_{\text{rc}}(X)^{\prime }M_{\text{rc}}(X))^{-1}A_{\text{rc}}(S^{\text{rc%
}})^{\prime }m_{1,\text{rc}}(X) \\
& -\big(\lbrack M_{\text{rc}}(X)^{\prime }M_{\text{rc}}(X)]^{-1}\left[ A_{%
\text{rc}}(S^{\text{rc}})^{\prime }M_{\text{rc}}(X)+M_{\text{rc}}(X)^{\prime
}A_{\text{rc}}(S^{\text{rc}})\right]  \\
& \quad \times (M_{\text{rc}}(X)^{\prime }M_{\text{rc}}(X))^{-1}M_{\text{rc}%
}(X)^{\prime }m_{1,\text{rc}}(X)\big),
\end{align*}%
with $M_{\text{rc}}$ and $m_{\text{rc},1}$ defined as 
\begin{align*}
M_{\text{rc}}& \equiv 
\begin{pmatrix}
(\mu _{2,1}-\mu _{N_{G}+1,1})/\lambda _{1} & \cdots  & (\mu _{N_{G},1}-\mu
_{N_{G}+1,1})/\lambda _{1} \\ 
\vdots  &  & \vdots  \\ 
(\mu _{2,\mathcal{T}-1}-\mu _{N_{G}+1,\mathcal{T}-1})/\lambda _{\mathcal{T}%
-1} & \cdots  & (\mu _{N_{G},\mathcal{T}-1}-\mu _{N_{G}+1,\mathcal{T}%
-1})/\lambda _{\mathcal{T}-1}%
\end{pmatrix}%
, \\
m_{\text{rc},1}& \equiv 
\begin{pmatrix}
(\mu _{1,1}-\mu _{N_{G}+1,1})/\lambda _{1} \\ 
\vdots  \\ 
(\mu _{1,\mathcal{T}-1}-\mu _{N_{G}+1,\mathcal{T}-1})/\lambda _{\mathcal{T}%
-1}%
\end{pmatrix}%
,
\end{align*}%
and $A_{\text{rc}}$ and $a_{\text{rc},1}$ defined as 
\begin{align*}
A_{\text{rc}}(S^{\text{rc}})& \equiv 
\begin{pmatrix}
a_{\text{rc},-1,1}(S^{\text{rc}})^{\prime } \\ 
\vdots  \\ 
a_{\text{rc},-1,\mathcal{T}-1}(S^{\text{rc}})^{\prime }%
\end{pmatrix}%
, \\
a_{\text{rc},-1,t}(S^{\text{rc}})& \equiv 
\begin{pmatrix}
\frac{\mathcal{G}_{2}(\mathfrak{T}_{t}Y-\mu _{2,t}(X))}{\lambda _{t}p_{2}(X)}%
-\frac{\mathcal{G}_{N_{G}+1}(\mathfrak{T}_{t}Y-\mu _{N_{G}+1,t}(X))}{\lambda
_{t}p_{N_{G}+1}(X)} \\ 
\vdots  \\ 
\frac{\mathcal{G}_{N_{G}}(\mathfrak{T}_{t}Y-\mu _{N_{G},t}(X))}{\lambda
_{t}p_{N_{G}}(X)}-\frac{\mathcal{G}_{N_{G}+1}(\mathfrak{T}_{t}Y-\mu
_{N_{G}+1,t}(X))}{\lambda _{t}p_{N_{G}+1}(X)}%
\end{pmatrix}%
, \\
a_{\text{rc},1}(S^{\text{rc}})& \equiv 
\begin{pmatrix}
\frac{\mathcal{G}_{1}(\mathfrak{T}_{1}Y-\mu _{1,1}(X))}{\lambda _{1}p_{1}(X)}%
-\frac{\mathcal{G}_{N_{G}+1}(\mathfrak{T}_{1}Y-\mu _{N_{G}+1,1}(X))}{\lambda
_{1}p_{N_{G}+1}(X)} \\ 
\vdots  \\ 
\frac{\mathcal{G}_{1}(\mathfrak{T}_{\mathcal{T}-1}Y-\mu _{1,\mathcal{T}%
-1}(X))}{\lambda _{\mathcal{T}-1}p_{1}(X)}-\frac{\mathcal{G}_{N_{G}+1}(%
\mathfrak{T}_{\mathcal{T}-1}Y-\mu _{N_{G}+1,\mathcal{T}-1}(X))}{\lambda _{%
\mathcal{T}-1}p_{N_{G}+1}(X)}%
\end{pmatrix}%
.
\end{align*}%
Therefore, the influence function for $\hat{\theta}_{\text{rc}}$ is 
\begin{align}
\psi _{\mathrm{SC}}^{\text{rc}}(S^{\text{rc}})& =\psi _{\mathrm{PT}}^{\text{%
rc}}(S^{\text{rc}})+\frac{1}{\pi _{1}}(u_{\text{rc},\mathcal{T}}(X)-u_{\text{%
rc},\mathcal{T}-1}(X))^{\prime }\left( 
\begin{array}{c}
P_{\text{rc}}(S^{\text{rc}}) \\ 
-\mathbf{1}_{N_{G}-1}^{\prime }P_{\text{rc}}(S^{\text{rc}})%
\end{array}%
\right)   \notag \\[0.04in]
& -\frac{1}{\pi _{1}}\sum_{g=2}^{N_{G}+1}w_{g}(X)(\mathcal{G}_{1}-r_{1g}(X)%
\mathcal{G}_{g})  \notag \\
& \times \left( \frac{\mu _{g,\mathcal{T}}(X)-\mu _{G\neq 1,\mathcal{T}}(X)}{%
\lambda _{\mathcal{T}}}-\frac{\mu _{g,\mathcal{T}-1}(X)-\mu _{G\neq 1,%
\mathcal{T}-1}(X)}{\lambda _{\mathcal{T}-1}}\right) ,
\label{eqn:IF-theta-rc-SC}
\end{align}%
and $V_{\mathrm{SC}}^{\text{rc}}=\mathbb{E[}\psi _{\mathrm{SC}}^{\text{rc}%
}(S^{\text{rc}})^{2}].$
\end{proof}

\begin{proof}[Proof of Corollary \protect\ref{cor:rc-bootstrap}]
The proof follows the same arguments as those for Theorem \ref{thm:bootstrap}%
. Specifically, we derive the asymptotic linear representations for $\sqrt{n}%
(\hat{\theta}_{\text{rc}}^{\ast }-\theta )$ and $\sqrt{n}(\hat{\theta}_{%
\text{rc}}-\theta )$, take their difference, and apply Lemma \ref{lm:clt}.
For brevity, the details are omitted.
\end{proof}

\begin{proof}[Proof of Corollary \protect\ref{cor:staggered}]
This result mirrors Theorem \ref{thm:identification}. To see this, simply
treat group $g$ as group 1 and the groups in $\mathcal{D}_{g,\bar{e}}$ as
groups $2, \dots, N_G+1$. Assumptions \ref{asm:NA-staggered}, \ref%
{asm:PT-staggered}, and \ref{asm:SC-staggered} correspond to Assumptions \ref%
{asm:NA}, \ref{asm:PT}, and \ref{asm:SC}, respectively. Hence, the
conclusion is identical to Theorem \ref{thm:identification}.
\end{proof}

\end{document}